\newtheorem{lem}{Lemma}
\definecolor{light-gray}{gray}{0.8}
\def\nbb{{\mathbf{b}}}
\def\nbd{{\mathbf{d}}}
\def\nbx{{\mathbf{x}}}
\def\nby{{\mathbf{y}}}
\def\nbz{{\mathbf{z}}}
\def\nb0{{\mathbf{0}}}
\def\nb1{{\mathbf{1}}}
\def\ncalB{{\mathcal{B}}}
\def\ncalC{{\mathcal{C}}}
\def\ncalL{{\mathcal{L}}}
\def\nbbE{{\mathbb{E}}}
\def\nbbP{{\mathbb{P}}}
\def\nbbR{{\mathbb{R}}}
\def\nrmd{{\rm d}}
\def\nrml{{\rm l}}
\def\nrmu{{\rm u}}
\def\nrmx{{\rm x}}
\def\nrmy{{\rm y}}
\def\sinc{{\rm sinc}}
\newtheorem{theorem}{Theorem}
\newtheorem{prop}{Proposition}
\newtheorem{remark}{Remark}
\def\figref#1{Fig.\,\ref{#1}}%
\def\E{\mathbb{E}}
\def\pc{\mathtt{P_c}}
\def\R{\mathbb{R}}
\def\sinr{\mathtt{SINR}}			
\def\sir{\mathtt{SIR}}
\def\LDSH{LD-SH}
\def\HDSH{HD-SH}
\def\LDLH{LD-LH}
\def\HDLH{HD-LH}
\def\t {x_\nrmy}
\def\u{x_\nrmx}
\def\yuu{y_{1{\nrmx}}}
\def\yut{y_{1{\nrmy}}}
\def\yuus{y_{2\nrmx}}
\def \yuut{y_{2\nrmy}}
\begin{document}

\title{Poisson Hole Process: Theory and Applications to Wireless Networks}
\author{\IEEEauthorblockN{Zeinab Yazdanshenasan, Harpreet S. Dhillon, Mehrnaz Afshang, 
and Peter Han Joo Chong
}

\thanks{H. S.~Dhillon  and M.~Afshang are with Wireless@VT, Department of ECE, Virgina Tech, Blacksburg, VA, USA. Email: \{hdhillon,mehrnaz\}@vt.edu. Z.~Yazdanshenasan is with Wireless@VT, Department of ECE, Virgina Tech, Blacksburg, VA, USA and with School of EEE, Nanyang Technological University, Singapore. Email: zeinab001@e.ntu.edu.sg. P. H. J.~Chong is with School of EEE, Nanyang Technological University, Singapore. Email: Ehjchong@ntu.edu.sg. This work has been submited in part to the IEEE International Conference on Communications 2016~\cite{YazDhiC2016}. \hfill Manuscript last revised: \today.}}
\maketitle
\begin{abstract}
Interference field in wireless networks is often modeled by a homogeneous Poisson Point Process (PPP). While it is realistic in modeling the inherent node irregularity and provides meaningful first-order results, it falls short in modeling the effect of interference management techniques, which typically introduce some form of {\em spatial interaction} among active transmitters. In some applications, such as cognitive radio and device-to-device networks, this interaction may result in the formation of {\em holes} in an otherwise homogeneous interference field. The resulting interference field can be accurately modeled as a {\em Poisson Hole Process (PHP)}. Despite the importance of PHP in many applications, the exact characterization of interference experienced by a typical node in a PHP is not known. In this paper, we derive several tight upper and lower bounds on the Laplace transform of this interference. 
Numerical comparisons reveal that the new bounds outperform all known bounds and approximations, and are remarkably tight in all operational regimes of interest. The key in deriving these tight and yet simple bounds is to capture the {\em local neighborhood} around the typical point accurately while simplifying the far field to attain tractability. Ideas for tightening these bounds further by incorporating the effect of overlaps in the holes are also discussed. These results immediately lead to an accurate characterization of the coverage probability of the typical node in a PHP under Rayleigh fading.

\end{abstract}
\begin{IEEEkeywords}
Interference modeling, stochastic geometry, Poisson Point Process, Poisson Hole Process, coverage probability.
\end{IEEEkeywords}


\section{Introduction}

The received signal-to-interference-plus-noise ratio ($\sinr$) is known to be a strong indicator of the performance and reliability of a wireless link. Several key performance metrics of interest, such as outage probability, ergodic capacity, and outage capacity, are strongly dictated by the received $\sinr$. By definition, the $\sinr$ distribution depends upon the joint distribution of the received powers from the serving node and the interfering nodes, which ultimately depend on the network topology. Therefore, accurate modeling of the network topology becomes a key step towards meaningful performance analysis of wireless networks. Owing to its tractability and realism in modeling irregular node locations, stochastic geometry has emerged as an important tool for the realistic analysis of wireless networks~\cite{chiu2013stochastic,haenggi2012stochastic,baccelli2009stochastic,win2009mathematical}. Initially popular for the modeling of wireless ad hoc and sensor networks, e.g. see~\cite{haenggi2009interference,andrews2012transmission}, it has recently been adopted for the analysis of cellular and heterogeneous cellular networks as well~\cite{andrews2011tractable,dhillon2012modeling,mukherjee2012sinr}. Irrespective of the nature of the wireless network, the interference field is almost always modeled by a homogeneous PPP to maintain tractability. While this leads to remarkably simple results for key performance metrics, such as coverage and rate, it is not quite suitable for modeling the effect of interference management techniques, which often introduce some form of spatial interaction among transmitters. In this paper, we focus on {\em spatial separation}, where holes (also called {\em exclusion zones}) are created around nodes/networks that need to be protected from excessive interference \cite{vu2008primary}. In particular, we assume that the {\em baseline} interference field is a PPP from which holes of a given radius are carved out. When the locations of the holes also form an independent PPP, the resulting point process is usually termed as a PHP, which is the main focus of this paper. 

\subsection{Related Work and Applications}
We first discuss a few of possibly numerous instances in wireless networks where PHP is a more appropriate model for node locations. In particular, we discuss how PHP has been used to model cognitive radio networks, heterogeneous cellular networks, and device-to-device (D2D) networks. The main objective of cognitive radio networks is to improve spectrum utilization by allowing unlicensed {\em secondary} users to use licensed spectrum as long as they do not cause excessive interference to the licensed {\em primary} users. One of the ways to ensure this is by creating exclusion zones (holes) around primary users, where secondary transmissions are not allowed. This spatial separation was elegantly modeled by using a PHP in \cite{lee2012interference}. In particular, the locations of both primary and secondary users were first modeled by independent PPPs. Assuming secondary transmissions were not allowed within a given distance from the primary users, the locations of {\em active} secondary users were then modeled by a PHP. 

The PHP has also been used recently to model inter-tier dependence in the base station locations in a heterogeneous cellular network in~\cite{deng2014heterogeneous,dengheterogeneousax,Hybrid_ICC}. Modeling the macrocell locations by a PPP, it was assumed that the small cells are deployed farther than a minimum distance from the macrocells, i.e., outside exclusion zones of a given radius. In such a case, small cells form a PHP. This model introduces repulsion between the locations of macro and small cells, which is desirable due to several reasons, such as interference mitigation at macrocells due to small cell transmissions, and the higher advantage of deploying small cells towards the cell edges of macrocells, especially in the coverage-centric deployments. 

Similarly, for underlay D2D
communication in cellular networks, inhibition zones may be created around cellular links where no D2D transmissions are allowed, thus saving cellular links from excessive D2D interference.  
The active D2D transmitters outside the holes form a PHP~\cite{sun2014d2d, elsawy2014analytical}. 
In this regard, cognitive D2D communication in cellular network when transmitters are powered by harvesting energy from the ambient interference is studied in~\cite{sakr2014cognitive}.
In~\cite{AfsDhiC2015a}, a Poisson Cluster Process (PCP) and a PHP are merged to develop a new
spatial model for integrated D2D and cellular networks. 
In particular, a modified Thomas cluster process is used to model device locations where instead of modeling the cluster centers as a homogeneous PPP, they are modeled as a PHP to account for the inhibition zones around cellular links. 

Despite the importance of a PHP in modeling wireless networks, the exact characterization of interference experienced by a typical receiver in a PHP is a challenging problem. There are two main directions taken in the literature for the analysis of wireless networks modeled by PHPs. The first approach, termed {\em first-order statistic approximation}, approximates PHP by a homogeneous PPP with the same density~\cite{haenggi2012stochastic}. The second approach ignores the holes altogether to approximate the PHP by its {\em baseline} PPP. This overestimates the interference and the accuracy of the approximation is a function of system parameters \cite{lee2012interference,sun2014d2d,elsawy2014analytical,sakr2014cognitive}. Besides, the PHP is sometimes approximated with a PCP by matching the first and second order statistics~\cite{lee2012interference,deng2014heterogeneous,dengheterogeneousax}. The resulting expressions for performance metrics are usually more complicated in this case compared to the above two. While all these approaches are reasonable, they are typically not accurate beyond a specific range of system parameters. In this paper, we take a fresh look at this problem and derive tight upper and lower bounds for the Laplace transform of interference experienced by the typical node in a PHP. The main contributions are summarized next.


\subsection{Contributions}

\noindent {\em New approach to the analysis of a PHP.} 
Unlike existing approaches that approximate a PHP with either a PPP or a PCP, we develop a new approach that is amenable to shot-noise analysis and leads to tight provable bounds on the Laplace transform of interference experienced by a typical node of a PHP. A lower bound is first derived by overestimating interference by ignoring all the holes except the closest one. We provide an equivalent interpretation of this result in which the closest hole is {\em dissolved} in such a way that it results in a tractable non-homogeneous PPP. The resulting bound is shown to be remarkably tight. Extending this approach to multiple holes, we derive an upper bound on the Laplace transform of interference by carving out each hole separately without accounting for the overlaps between them. This leads to the removal of some points from the baseline PPP multiple times, thus underestimating the interference power experienced by the typical point. This bound is also shown to be remarkably tight across a variety of scenarios, including the ones in which the holes exhibit significant overlaps.

\noindent {\em Approaches to incorporate the effect of overlaps in the holes.} In the first set of bounds discussed above, we carefully circumvented the need for incorporating the effect of overlaps between holes. While these simple and easy-to-use bounds are tight, we also provide ideas for incorporating the overlaps between holes, which tighten these bounds even further. In the first result, we generalize the lower bound discussed above by considering two closest holes in the interference field while incorporating the {\em exact} effect of overlap between them. In the second result, instead of trying to incorporate the exact effect of overlaps, we propose a new procedure for bounding the overlap area, which allows us to derive a provable lower bound on the Laplace transform while considering multiple holes in the interference field. In the third and final result, we propose a new approach that allows to incorporate the {\em mean} effect of overlaps in the holes.

\noindent {\em New insights.} Our results concretely demonstrate that for accurate analysis of interference in a PHP, it is very important to preserve the local neighborhood around the typical point. For instance, we show that considering even a single hole in the interference field results in a tighter characterization of interference power at the typical point of a PHP compared to seemingly more refined prior approach of first-order statistic approximation in which the PHP is approximated by a PPP with the matching density. This is because by considering a single hole, the local neighborhood around the typical point is accurately captured, whereas it is distorted in the other approach due to independent thinning involved in the density matching of a PPP. 

\begin{table}
\caption{Notation and Network Parameters}
\begin{center}
\scalebox{.81}{%
\begin{tabular}{l|l}
 \hline
  \hline
Symbol & Description \\
 \hline 
 $\Phi_1; \lambda_1$ & Independent PPP modeling the locations of hole centers; density of $\Phi_1$ \\
  $\Phi_2; \lambda_2$ & Independent PPP from which the holes
are carved out; density of $\Phi_2$ \\
$\Psi$ & PHP formed by carving out holes with centers $\Phi_1$ from  $\Phi_2$\\
$D$ & Radius of each hole carved out from  $\Phi_2$\\
$\pc, \gamma$ & Coverage probability (in terms of SIR); SIR threshold \\
  $\ncalL_{{I}}(s)$ & Laplace transform of  $I$, defined as $\E\left[e^{sI}\right]$  \\
  $\ncalC=\nbb(\nby,D)$ & Ball of radius $D$ centered at $\nby$ \\
  $h_\nbx$ & Fading gain; $h_\nbx \sim \exp(1)$ for Rayleigh fading\\
  $\alpha$ & Path-loss exponent for all the wireless links \\
  $P; r_\mathrm{0}$ &Transmit power; serving distance for the link of interest \\
   \hline
 \hline
 \end{tabular}
 }
\end{center}
\label{Tab: Network Parameters2}
\end{table}
\section{Network Model}
\label{NetModel}
\subsection{System Model}
\label{Systmmodel}
We consider a wireless network that is modeled by a PHP in $\mathbb{R}^2$. A PHP can be formally defined in terms of two independent homogeneous PPPs $\Phi_1$ and $\Phi_2$, where $\Phi_2$ represents the {\em baseline} PPP from which the holes are carved out and $\Phi_1$ represents the locations of the holes. Let the densities of $\Phi_1$ and $\Phi_2$ be $\lambda_1$ and $\lambda_2$, respectively, with $\lambda_2>\lambda_1$. Denoting the radius of each hole by $D$, the region covered by the holes can be expressed as
\begin{align}
\Xi_D \triangleq  \bigcup_{\nby \in \Phi_1} \mathbf{b}(\nby,D), \quad \mathbf{b}(\nby,D) \equiv \{\nbz \in \nbbR^2: \|\nbz-\nby\| < D\}.
\label{eq:XiD}
\end{align}
The points of $\Phi_2$ lying outside $\Xi_D$, form a PHP, which can be formally expressed as
\begin{align}\label{Psidef}
\Psi =\{\nbx\in \Phi_2 : \nbx\notin \Xi_D\}= \Phi_2 \setminus \Xi_D.
\end{align}
It should be noted that the PHP $\Psi$ has also been known as a {\em Hole-1 process} in the literature~\cite{ganti2006regularity}. 

We characterize the interference experienced by a {\em typical node} in $\Psi$ due to the transmission of the other nodes of $\Psi$. Due to the stationarity of the process, the typical node can be assumed to lie at the origin $\mathbf{o}$, and due to Slivnyak's theorem, we can condition on $\mathbf{o} \in \Psi$ without changing the distribution of the rest of the process~\cite{haenggi2012stochastic}. This equivalently means that the interferers are modeled by the PHP $\Psi$ with the typical receiver being an additional point placed at the origin.
Note that Slivnyak's theorem for a PPP is applicable here because conditioned on the locations of the holes (i.e., conditioned on $\Phi_1$), PHP $\Psi$ is simply a PPP of density $\lambda_2$ defined on $\nbbR^2 \setminus \Xi_D$. Since the typical point is located outside the holes by construction, there are no points of $\Phi_1$ within a disk of radius $D$ around the typical point. For this receiver, we assume that the serving transmitter is located at the fixed distance $r_0$.  It should be noted that we could have considered more sophisticated models for the serving link of interest but we chose to consider this simple setup because our emphasis is on characterizing interference in a PHP.


For the wireless channel between points $\nbx$ and $\nby$, we consider a standard power law path-loss $l(\nbx-\nby)=\|\nbx-\nby\|^{-\alpha}$ with path-loss exponent $\alpha$. All the wireless links are assumed to experience independent Rayleigh fading. All the transmitters are assumed to transmit at a fixed power $P$. The received power at the typical node from its transmitter of interest is therefore $P_r = Phr_0^{-\alpha}$, where $h \sim \exp(1)$ models
Rayleigh fading. Similarly, the interference power experienced by the typical receiver located at the origin is 
\begin{equation}\label{Ij1}
    I=\sum_{\substack{\nbx\in {\Psi}}}P h_\nbx \|\nbx\|^{-\alpha},
 \end{equation}
 where $h_\nbx \sim \exp(1)$ models Rayleigh fading gain for the link from interferer $\nbx\in {\Psi}$ to the typical receiver. For this setup, we define coverage probability next.

\subsection{$\sir$ and Coverage Probability} 
Using the received power over the link of interest and the interference power defined in the previous subsection, the \emph{signal to interference ratio} ($\sir$) can be expressed as
 \begin{align}\label{SIR}
    \mathtt{SIR}(r_0) = \frac{P h {r_0}^{-\alpha}}{ \sum_{\substack{\nbx\in {\Psi}}}P h_\nbx \|\nbx\|^{-\alpha}}.
 \end{align}
Denote the minimum $\sir$ required for successful decoding and demodulation at the typical receiver by $\gamma$. A useful metric of interest in wireless networks is the \emph{$\sir$ coverage probability} $\pc$, which is the probability that the $\sir$ at the receiver exceeds the threshold $\gamma$. Mathematically, 
\begin{align}\nonumber 
\pc&=\,\mathbb{P}\{\mathtt{SIR}(r_0)>\gamma\,\} =\mathbb{P}\bigg\{h >\frac{\gamma r_0^{\alpha}}{P}I\bigg\}\\   \label{eq:Pcdef}
&\stackrel{(a)}{=}\E\left[ \exp\left(-\frac{\gamma r_0^{\alpha}}{P} I\right)\right] \stackrel{(b)}{=} \mathcal{L}_{I}\left(\frac{\gamma r_0^{\alpha}}{P}\right) , 
\end{align}
where $(a)$ follows from the fact that $h\sim \exp(1)$, and $(b)$ from the definition of Laplace transform of interference power $\mathcal{L}_{I}(s) = \mathbb{E}[\exp(-s I)]$. Note that for this setup, it is sufficient to focus on the Laplace transform of interference in order to study coverage probability. In general, accurate characterization of $\mathcal{L}_{I} (s)$ is the first step in the analysis of more general classes of wireless networks, including cellular networks~\cite{dhillon2012modeling}. Therefore, we will focus on $\mathcal{L}_{I} (s)$ in the technical sections of the paper with the understanding that the coverage probability can be easily derived for our setup using \eqref{eq:Pcdef}. We begin our technical discussion by summarizing two key prior approaches used in the literature for characterizing $\mathcal{L}_{I} (s)$ in a PHP. 

For the ease of reference, the notation used in the paper is summarized in Table~\ref{Tab: Network Parameters2}.


\section{Key Prior Approaches}
\label{Approx_Lower}
In this section, we summarize two popular approaches that have been used in the literature to derive the Laplace transform of interference in a PHP. At the end of the Section, we also provide insights into the strengths and weaknesses of each approach. 
\subsection{Lower Bound by Ignoring Holes: Approximating $\Psi$ by $\Phi_2$}
The first approach is to ignore the effect of holes and approximate the interference field $\Psi$ by the baseline PPP $\Phi_2$ of density $\lambda_2$. By construction, this approach overestimates the interference power and hence leads to the lower bound on the Laplace transform of interference \cite{lee2012interference}. This well-known result is stated below for completeness.
\begin{lem} [{Lower bound}] \label{thm :lowbound}
Ignoring the impact of holes (approximating $\Psi$ by $\Phi_2$), the Laplace transform of aggregate interference $I=\sum_{\substack{\nbx\in {\Psi}}}P h_\nbx \|\nbx\|^{-\alpha}$ is lower bounded   by:
\begin{align}\label{eq: LTfP00}
\ncalL_{I}(s)\geq \exp\left[-\pi\lambda_2 \frac{(sP)^{2/\alpha}}{\mathrm{sinc}(2/\alpha)}\right].
\end{align}
\end{lem}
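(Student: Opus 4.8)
The plan is to exploit the set inclusion $\Psi = \Phi_2 \setminus \Xi_D \subseteq \Phi_2$ directly, so that the bound becomes a consequence of stochastic domination rather than any new computation. First I would observe that $\Psi$ is obtained from $\Phi_2$ merely by deleting the points that fall inside $\Xi_D$, hence every interferer contributing to $I$ already belongs to the baseline process. Writing $I_{\Phi_2} = \sum_{\nbx \in \Phi_2} P h_\nbx \|\nbx\|^{-\alpha}$ for the interference generated by the full baseline PPP, each summand $P h_\nbx \|\nbx\|^{-\alpha}$ is nonnegative, so on every realization $I \le I_{\Phi_2}$. Since $s>0$ makes $t \mapsto e^{-st}$ decreasing, this gives $e^{-sI} \ge e^{-sI_{\Phi_2}}$ pathwise, and taking expectations yields $\ncalL_I(s) \ge \ncalL_{I_{\Phi_2}}(s)$. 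The problem is thereby reduced to the standard task of computing the Laplace transform of shot noise over a homogeneous PPP of density $\lambda_2$.

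For the second step I would evaluate $\ncalL_{I_{\Phi_2}}(s) = \E\big[\prod_{\nbx \in \Phi_2} e^{-sP h_\nbx \|\nbx\|^{-\alpha}}\big]$ by first averaging over the i.i.d. fading gains $h_\nbx \sim \exp(1)$, which are independent of $\Phi_2$. Using $\E_h[e^{-ah}] = (1+a)^{-1}$ for $a>0$, the inner expectation collapses to $(1 + sP\|\nbx\|^{-\alpha})^{-1}$. I would then apply the probability generating functional of a PPP, $\E[\prod_{\nbx \in \Phi_2} f(\nbx)] = \exp\!\big(-\lambda_2 \int_{\mathbb{R}^2}(1-f(\nbx))\,\mathrm{d}\nbx\big)$, with $f(\nbx) = (1+sP\|\nbx\|^{-\alpha})^{-1}$. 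Simplifying $1 - f(\nbx) = sP / (\|\nbx\|^{\alpha}+sP)$ and passing to polar coordinates leaves the single radial integral $2\pi sP\int_0^\infty \frac{r}{r^{\alpha}+sP}\,\mathrm{d}r$ in the exponent.

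The final and most delicate step is closing this integral in the claimed $\sinc$ form. I would substitute $t = r^{\alpha}/(sP)$ to factor out the $(sP)^{2/\alpha}$ scaling, reducing the exponent to $-\tfrac{2\pi}{\alpha}\lambda_2 (sP)^{2/\alpha}\int_0^\infty \frac{t^{2/\alpha-1}}{1+t}\,\mathrm{d}t$. The remaining Beta-type integral equals $\Gamma(2/\alpha)\Gamma(1-2/\alpha)$, which converges precisely when $2/\alpha \in (0,1)$, i.e. $\alpha > 2$, the regime in which the bound is meaningful. Applying Euler's reflection formula $\Gamma(2/\alpha)\Gamma(1-2/\alpha) = \pi/\sin(2\pi/\alpha)$ and recognizing $\sin(2\pi/\alpha)/(2\pi/\alpha) = \sinc(2/\alpha)$ collapses the exponent to $-\pi\lambda_2 (sP)^{2/\alpha}/\sinc(2/\alpha)$, which is exactly \eqref{eq: LTfP00}. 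The only genuine obstacle is the bookkeeping in this last evaluation—carrying the $2/\alpha$ exponents correctly through the substitution and invoking the reflection formula—since the domination argument of the first step already carries all of the probabilistic content and the intermediate steps are routine.
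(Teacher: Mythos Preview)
Your proposal is correct and follows essentially the same route as the paper: overestimate the interference by replacing $\Psi$ with the full baseline PPP $\Phi_2$, average out the Rayleigh fading, apply the PGFL of a PPP, and close the radial integral via the Gamma/reflection-formula identity. The paper's own proof is terser---it simply writes the inequality and cites ``standard machinery'' for the final closed form---whereas you make the pathwise domination argument and the $\alpha>2$ convergence condition explicit, but the underlying strategy is identical.
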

\begin{proof}  See Appendix~\ref{app: A}.
\end{proof}
The tightness of the above bound will be demonstrated in the Numerical Results section.%
\subsection{Approximating PHP by a PPP with the Same Density}
\label{LTI FOSA} 
The second approach to the derivation of the Laplace transform of PHP is the {\em first-order statistic approximation} \cite{haenggi2012stochastic}. In this approach, the baseline PPP $\Phi_2$ is independently thinned such that the resulting density of the PPP is the same as that of the PHP $\Psi$, which we denote by $\lambda_{\mathrm{PHP}}$. The first step in this approach is therefore to derive $\lambda_{\mathrm{PHP}}$ in terms of the given system parameters, which was done in \cite{haenggi2012stochastic}. For completeness, we discuss its proof briefly below. To derive $\lambda_{\rm PHP}$, we first need to derive an expression for the average number of points of the PHP $\Psi$ lying in a given set $\ncalB \subset \nbbR^2$, which by definition is
\begin{align}
&\nonumber
\E \left[ \sum _{\nbx\in \Phi_2\cap \ncalB}\prod _{\nby\in\Phi_1} (1-\nb1_{\mathbf{b}(\nbx,D)}(\nby))\right] \\ \nonumber
& \stackrel{\text{(a)}}= \E_{\Phi_2} \left[\sum _{\nbx\in \Phi_2\cap \ncalB} \E_{\Phi_1} \left[\prod _{\nby\in\Phi_1} (1-\nb1_{\mathbf{b}(\nbx,D)}(\nby))\right]\right]\\\nonumber
&\stackrel{\text{(b)}}=\E_{\Phi_2}\left[ \sum_{\nbx\in\Phi_2\cap \ncalB} \exp\left(-\lambda_1\int_{\R^2} \nb1_{\mathbf{b}(\nbx,D)}(\nby) \nrmd \nby\right)\right]
\\ \nonumber
&
\stackrel{\text{(c)}}= |\ncalB| \lambda_2 \exp(-\lambda_1\pi D^2),
\end{align}
where (a) is due to the independence of point processes $\Phi_1$ and $\Phi_2$, 
(b) follows from the probability generating functional (PGFL) of a PPP, and (c) follows from the Campbell theorem~\cite{chiu2013stochastic}. From the above expression, we can readily infer that $\lambda_{\rm PHP} = \lambda_2 \exp(-\lambda_1\pi D^2)$. Now to derive the Laplace transform of interference in this case, we just need to replace $\lambda_2$ in the result of Lemma~\ref{thm :lowbound} with $\lambda_{\rm PHP}$. The result is stated below for completeness. 

\begin{lem} [{Approximation}] \label{thm :appr}
The Laplace transform of interference power $I=\sum_{\substack{\nbx \in {\Psi}}}P h_\nbx \|\nbx\|^{-\alpha}$ when PHP $\Psi$ is approximated by a PPP with density $\lambda_{\rm PHP}$ is
\begin{align}\label{eq: LTfP02}
\ncalL_{I}(s )\simeq\exp\left[-\pi\lambda_{\mathrm{PHP}} \frac{(sP)^{2/\alpha}}{\mathrm{sinc}(2/\alpha)}\right].
\end{align}
\end{lem}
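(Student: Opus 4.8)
The plan is to obtain the stated expression as an immediate consequence of Lemma~\ref{thm :lowbound} together with the value of $\lambda_{\rm PHP}$ already computed in the paragraph preceding the statement. The key observation is that the right-hand side of \eqref{eq: LTfP00} is not merely a bound but the \emph{exact} Laplace transform of the interference generated by a homogeneous PPP of density $\lambda_2$; the inequality in Lemma~\ref{thm :lowbound} arises only because discarding the holes inflates the interference relative to the true PHP $\Psi$. Under the first-order statistic approximation, $\Psi$ is instead replaced by a genuine homogeneous PPP whose density is $\lambda_{\rm PHP}=\lambda_2\exp(-\lambda_1\pi D^2)$. Since this approximating process is itself a homogeneous PPP, its interference Laplace transform is given by the very same closed-form expression, now with $\lambda_2$ replaced by $\lambda_{\rm PHP}$.

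Concretely, I would reproduce the short computation underlying Appendix~\ref{app: A} for a homogeneous PPP $\Phi$ of density $\lambda$: applying the PGFL gives $\ncalL_{I}(s)=\exp\bigl(-\lambda\int_{\R^2}(1-\E_{h}[\exp(-sP h\|\nbx\|^{-\alpha})])\,\nrmd\nbx\bigr)$, the inner expectation evaluates to $1/(1+sP\|\nbx\|^{-\alpha})$ because $h\sim\exp(1)$, and passing to polar coordinates followed by the standard substitution collapses the exponent into $\pi\lambda(sP)^{2/\alpha}/\mathrm{sinc}(2/\alpha)$. Setting $\lambda=\lambda_{\rm PHP}$ then yields \eqref{eq: LTfP02} directly. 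No new integral or probabilistic argument is required beyond what Lemma~\ref{thm :lowbound} already supplies, so the derivation itself is essentially a one-line substitution.

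The only subtlety worth flagging is conceptual rather than technical, and it explains why the statement is written with $\simeq$ rather than with equality or an inequality: the approximating PPP is \emph{not} distributionally equal to $\Psi$, because density matching preserves only the first-order (mean) statistic while destroying the local repulsion induced by the hole that is guaranteed to surround the typical point. Consequently there is no provable inequality here in either direction; the expression is exact for the surrogate PPP but only an approximation for the PHP. I therefore expect the genuine ``hard part'' to lie not in this derivation, which is routine, but in the later sections where this approximation is compared numerically against the provable bounds, so as to expose precisely where neglecting the local neighborhood structure causes it to deviate.
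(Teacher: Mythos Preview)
Your proposal is correct and matches the paper's own treatment exactly: the paper states just before the lemma that one need only replace $\lambda_2$ by $\lambda_{\rm PHP}$ in the result of Lemma~\ref{thm :lowbound}, which is precisely the substitution argument you give. Your additional remark on why the result carries $\simeq$ rather than an inequality is also consistent with the paper's subsequent discussion in Remark~\ref{rem:1}.
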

\begin{remark} \label{rem:1}
Both the approaches discussed above approximate $\Psi$ with a homogeneous PPP: the one first with density $\lambda_2$ (the baseline PPP), and the second one with density $\lambda_{\rm PHP} < \lambda_2$ (the density of the PHP $\Psi$). While the second approach is a seemingly more refined approach, a careful thought reveals that in order to match the density of the PPP with that of a PHP, the baseline PPP has to be independently thinned, which disturbs the local neighborhood of points around the typical point, thus resulting in a loose bound. On the other hand, approximating $\Psi$ simply by the baseline PPP $\Phi_2$ preserves the local neighborhood resulting in a relatively tighter approximation. More insights will be provided in the numerical results section. 
\end{remark}

Before concluding this section, it is important to note that there is one more fitting-based approach in which the PHP is approximated by a PCP with matching first and second order statistics. The Laplace transform of interference and other performance metrics are then studied using the fitted PCP. Since the formal description of this technique requires significantly more details compared to the techniques discussed above, we refer the reader to~\cite{lee2012interference,deng2014heterogeneous,dengheterogeneousax}, where this approach has been used for the analysis of cognitive radio and heterogeneous cellular networks modeled as PHPs. We now discuss the proposed approaches next.

\section{Proposed Approaches to Laplace Transform of Interference in a PHP}

\begin{figure}[t!]
  \centering{
 \includegraphics[width=.65\linewidth]{./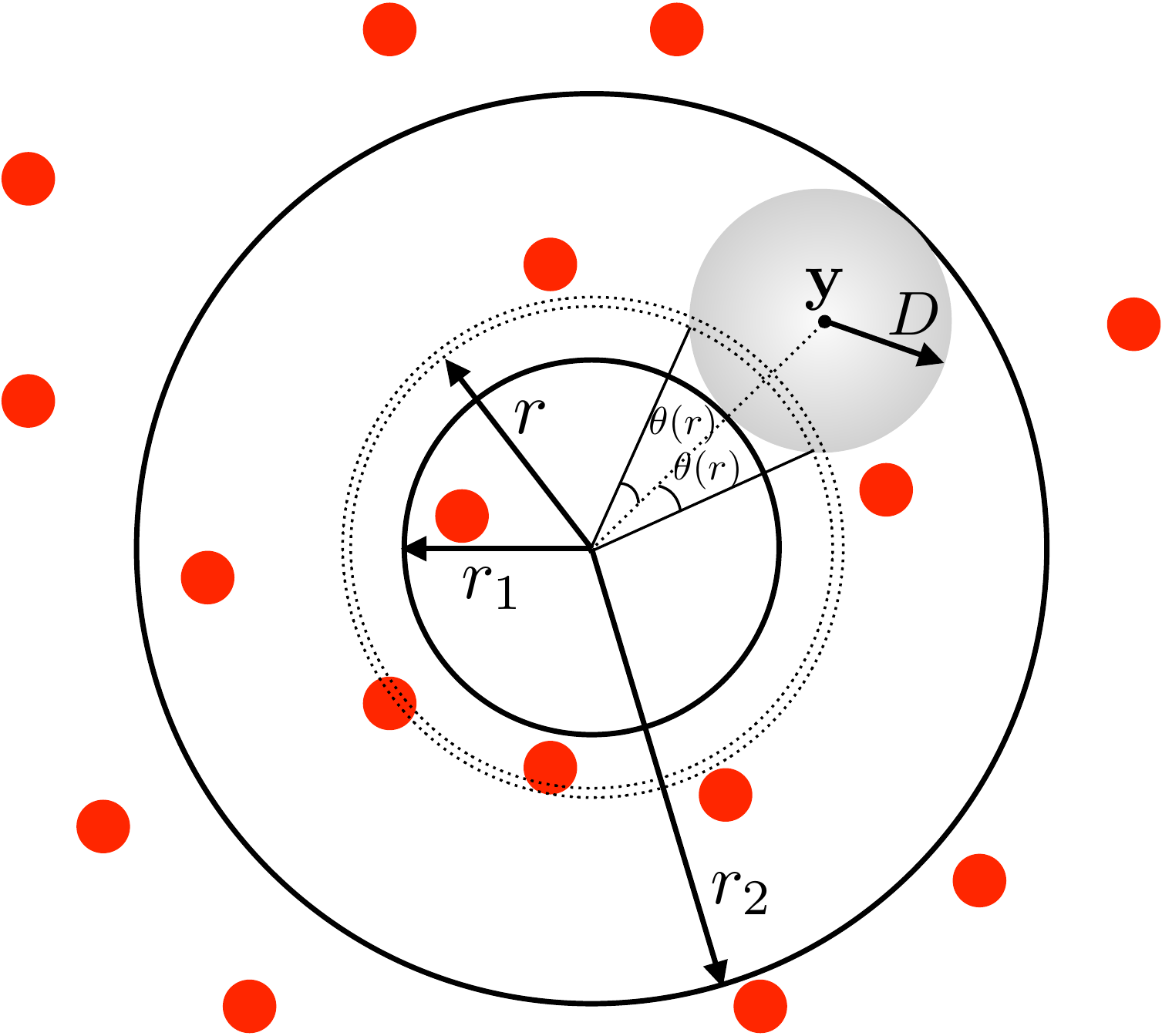}
              \caption{Illustration of the interference field with a single hole.}
\label{proof04}}
\end{figure}

\label{PerfAnl} 
We now introduce our proposed approach to characterize the Laplace transform of interference in a PHP. In the first intermediate step,  we model the locations of interferers by a homogeneous PPP $\Phi_\mathrm{2}$ of density $\lambda_\mathrm{2}$ from which only one hole $\ncalC$ of radius $D$ is carved out at a {\em deterministic} location. Let the location of the center of this hole be $\nby \in \nbbR$ and hence its distance from the origin be $\|\nby\|$. The resulting setup is illustrated in Fig.~\ref{proof04}.
Note that the interference field in this case is non-isotropic due to the fixed location of the hole. The Laplace transform of the interference power at the origin from the nodes of $\Phi_\mathrm{2}$ outside $\ncalC$ is characterized next. This intermediate result will be used later in this section to derive upper and lower bounds on the Laplace transform of interference experienced by a typical node in a PHP.
\begin{lem} \label{Lemma1_Proof}
Let $I=\sum_{\substack{\nbx\in {\Phi_\mathrm{2}\cap\nbb^c(\nby,D)}}}Ph_\nbx\|\nbx\|^{-\alpha}$, the Laplace transform of interference conditioned on $\|\nby\|$ is
\begin{align}  \nonumber  
\ncalL_{{I|\|\nby\|}}(s)=&\exp\left(-\pi\lambda_2 \frac{{( sP)}^{2/\alpha}}{\sinc(2/\alpha)}\right)\times
\\ \label{LtI asym_03}
&\exp\left(\int_{\|\nby\|-D}^{\|\nby\|+D}\frac{2\pi\lambda(r)}{1+\frac{r^\alpha}{Ps}} r\nrmd r\right)
\end{align}
where $\lambda(r)= \frac{\lambda_2}{\pi}{\arccos\left(\frac{r^2+\|\nby\|^2-D^2}{2\|\nby\|r}\right)}$, 
$\ncalC={\mathbf{b}(\nby,D)}$ denotes the hole centered at $\nby$ with radius $D$.
\end{lem}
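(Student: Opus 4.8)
The plan is to exploit the fact that, once we condition on $\|\nby\|$, the interfering nodes $\Phi_\mathrm{2}\cap\nbb^c(\nby,D)$ form nothing more than a homogeneous PPP of density $\lambda_2$ restricted to the punctured plane $\nbbR^2\setminus\nbb(\nby,D)$, so the Laplace transform follows directly from the probability generating functional (PGFL) of a PPP. Taking the expectation over the i.i.d.\ Rayleigh fading gains $h_\nbx\sim\exp(1)$ first, each interferer contributes the factor $\E\big[e^{-sPh\|\nbx\|^{-\alpha}}\big]=\big(1+sP\|\nbx\|^{-\alpha}\big)^{-1}$, and the PGFL gives
\begin{equation*}
\ncalL_{I|\|\nby\|}(s)=\exp\!\left(-\lambda_2\int_{\nbbR^2\setminus\nbb(\nby,D)}\frac{1}{1+\frac{\|\nbx\|^\alpha}{sP}}\,\nrmd\nbx\right).
\end{equation*}
This reduces the whole problem to evaluating a deterministic integral over the complement of a single disk.

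Next I would split the integral over the complement into the full-plane integral over $\nbbR^2$ minus the integral over the hole $\nbb(\nby,D)$. The full-plane term, evaluated in polar coordinates centred at the origin, reproduces exactly the factor established in Lemma~\ref{thm :lowbound}, namely $\exp\!\big(-\pi\lambda_2(sP)^{2/\alpha}/\sinc(2/\alpha)\big)$; this is the same computation as in Appendix~\ref{app: A}. Because the hole contribution is subtracted inside the negative exponent, it re-emerges with a positive sign as the correction factor $\exp\!\big(\lambda_2\int_{\nbb(\nby,D)}(1+\|\nbx\|^\alpha/(sP))^{-1}\,\nrmd\nbx\big)$, which is at least $1$ and quantifies the interference suppressed by the hole.

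The remaining and most delicate step is to evaluate this hole integral in origin-centred polar coordinates, which is natural since the integrand depends on $\nbx$ only through $r=\|\nbx\|$. For a fixed radius $r$, I would compute the angular extent of the arc $\{\nbx:\|\nbx\|=r\}\cap\nbb(\nby,D)$: parametrising $\nbx$ by its angle $\phi$ relative to the direction of $\nby$, the law of cosines gives $\|\nbx-\nby\|^2=r^2+\|\nby\|^2-2r\|\nby\|\cos\phi$, so $\nbx\in\nbb(\nby,D)$ precisely when $\cos\phi\ge(r^2+\|\nby\|^2-D^2)/(2r\|\nby\|)$, i.e.\ $|\phi|\le\arccos\!\big((r^2+\|\nby\|^2-D^2)/(2r\|\nby\|)\big)=\pi\lambda(r)/\lambda_2$. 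Hence the arc subtends $2\pi\lambda(r)/\lambda_2$ radians, and since the disk spans radii from $\|\nby\|-D$ up to $\|\nby\|+D$ --- the two values at which the arccos argument equals $1$ and the arc collapses to a point (here we use that the origin lies outside the hole, $\|\nby\|>D$) --- integrating the area element $r\,\nrmd r$ against this arc length turns $\lambda_2\int_{\nbb(\nby,D)}(\cdots)\,\nrmd\nbx$ into $\int_{\|\nby\|-D}^{\|\nby\|+D}\frac{2\pi\lambda(r)}{1+r^\alpha/(sP)}\,r\,\nrmd r$, which combines with the full-plane factor to give the claimed expression. I expect this geometric bookkeeping --- correctly identifying the intersection arc and its radial limits --- to be the only genuine obstacle, since the probabilistic part is entirely routine PPP shot-noise machinery.
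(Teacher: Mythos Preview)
Your proposal is correct and follows essentially the same route as the paper's proof: apply the PGFL of the PPP on $\nbbR^2\setminus\nbb(\nby,D)$ after averaging over the exponential fading, split the resulting integral into the full-plane term (yielding the Lemma~\ref{thm :lowbound} factor) minus the hole term, and evaluate the latter in origin-centred polar coordinates via the law of cosines to obtain the $\arccos$ arc length. The only cosmetic difference is that you explicitly justify the radial limits $\|\nby\|\pm D$ and the use of $\|\nby\|>D$, which the paper leaves implicit.
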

\begin{proof}  See Appendix~\ref{app: B}.
\end{proof}
\begin{remark}[Dissolving the hole]
The above result has an interesting interpretation that will be useful in visualizing the proposed results. Note that since received power is a radially symmetric function, it solely depends upon the distance of the transmitter to the origin. Therefore, we can in principle, dissolve the hole as long as the number of points lying in a thin strip of radius $\|\nby\|-D\leq r\leq \|\nby\|+D$ and vanishingly small width ${\rm d} r$ is not changed. Please refer to Fig.~\ref{proof04} for an illustration of this strip. Taking a closer look at the interference originating from this strip, we note that the only thing that matters is the number of points distributed in the part of the strip which is outside the hole. The area of this region is $2r {\rm d} r (\pi - \theta(r))$, where the angle $\theta(r) = \arccos\left(\frac{r^2+\|\nby\|^2-D^2}{2\|\nby\|r}\right)$ is defined in Fig.~\ref{proof04}. Therefore, the number of interfering points lying within this strip is Poisson distributed with mean $\lambda_2 2r {\rm d} r (\pi - \theta(r))$. Since the exact locations of these points within the strip doesn't matter, we can dissolve the hole and redistribute the points uniformly inside the whole strip of area $2\pi r {\rm d}r$. This means, the PPP with a hole can be equivalently modeled as a non-homogeneous PPP with density $\lambda_2(1-\theta(r)/\pi)$, where the $\lambda_2\theta(r)/\pi$ term (defined as $\lambda(r)$ in Lemma~\ref{Lemma1_Proof}) captures the effect of hole. 
\end{remark}

Using this intermediate result and the above insights, we now derive tight bounds on the Laplace transform of interference experienced by a typical node in a PHP.
\subsection{Lower Bound on the Laplace Transform of Interference in a PHP}
\label{Sec: FIRST-Closest}
Before going into the technical details, note that due to path-loss, the effect of holes that are close to the typical point will be much more significant compared to the holes that are farther away. Therefore, to derive an easy-to-use lower bound on the Laplace transform of interference, we consider only one hole; the one that is closest to the typical point; and ignore the other holes. Denoting the location of the closest hole by $\nby_1$, the interference field in this case is $(\Phi_2 \cap \nbb^c(\nby_1,D)) \supset \Psi$, which clearly overestimates the interference of PHP and hence leads to a lower bound on the Laplace transform. Note that in Lemma~\ref{Lemma1_Proof}, we have already derived the {\em conditional} Laplace transform for the case when there is one hole and its distance to the origin is known. To derive a lower bound, we simply need to assume this hole to be the closest point of $\Phi_1$ to the origin and decondition the result of Lemma~\ref{Lemma1_Proof} with respect to the distribution of $V_1 = \|\nby_1\|$. 
To this end, we first derive the probability density function (PDF) of $V_1$ next.
\begin{lem} \label{Clos_Dislem}
The PDF of the distance $V_1=\|\nby_1\|$ between the typical node at the origin and the closest point of $\Phi_1$ is given by
\begin{align}	\label{Eq: FRcrofton}
f_{V_1}(v_1)=2\pi\lambda_1 v_1\exp(-\pi\lambda_1 (v_1^2-D^2)), \quad v_1 \geq D.
\end{align}
\begin{proof}
 \begin{figure}[t!]
  \centering{
              \includegraphics[width=.65\linewidth]{./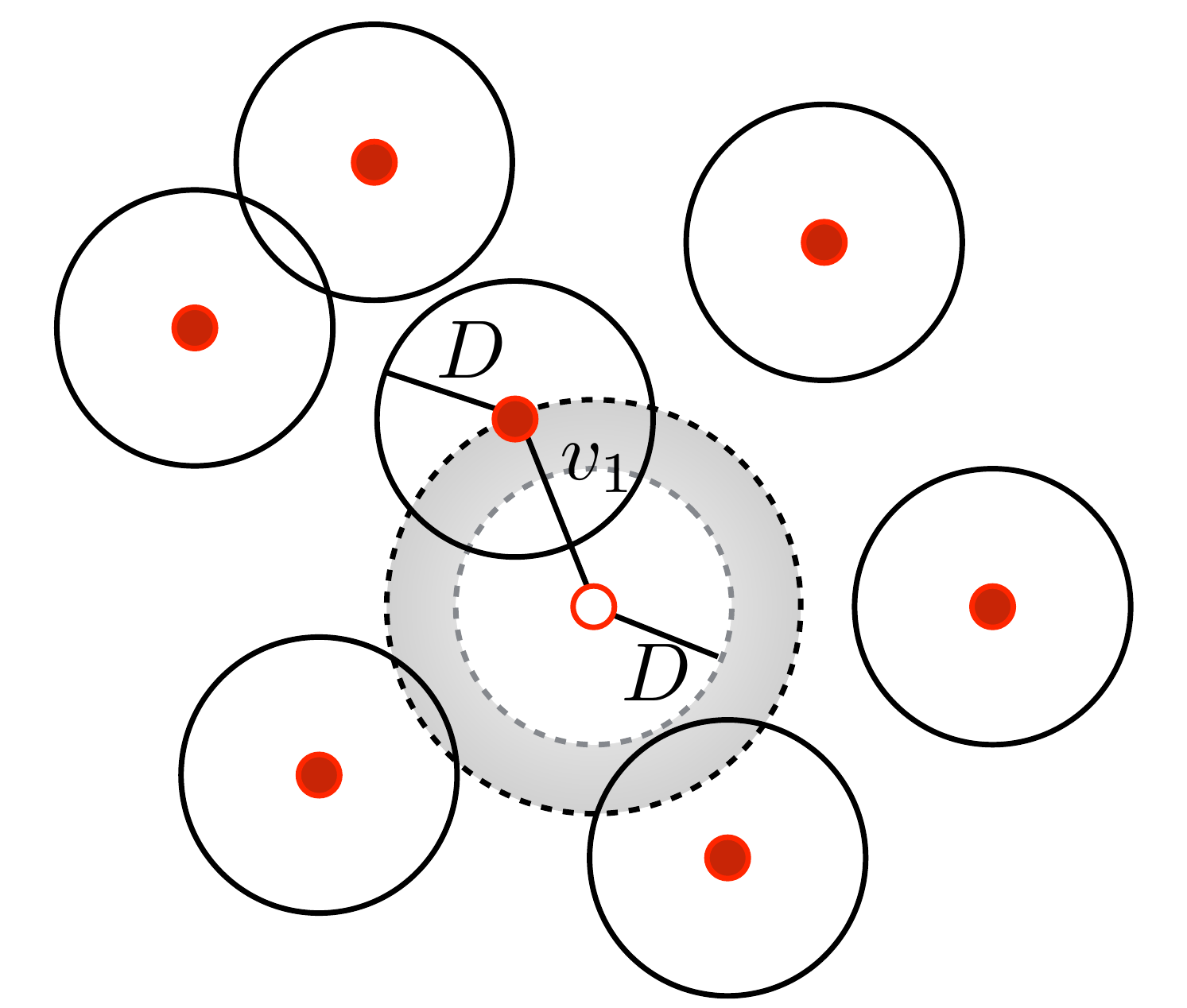}
              \caption{Illustration showing that the closest point of $\Phi_1$ is at least a distance $D$ away from the typical point of the PHP $\Psi$.}
\label{first-closest-dis7}}
          \end{figure}   

As discussed in Section~\ref{NetModel}, the typical point of a PHP lies outside the holes by construction. Therefore, as illustrated in  Fig.~\ref{first-closest-dis7}, the minimum distance between the typical point and the closest hole (closest point of $\Phi_1$) is $D$. Using this fact along with the properties of a PPP, the distribution of $V_1$ can be derived as follows: 
\begin{align*}
\nbbP(V_1 > v_1) &= \nbbP(\text {Number of points of $\Phi_1$ in the set}\\\nonumber
&\ \{\nbb(0,v_1)\setminus \nbb(0,D)\} = 0)\\
&= \exp(-\pi\lambda_1 (v_1^2-D^2)), \quad v_1 \geq D.
\end{align*}         
The result now follows by differentiating the above expression.   
\end{proof}
\end{lem}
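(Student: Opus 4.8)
The plan is to obtain the cumulative distribution function (CDF) of $V_1$ from the void probabilities of the PPP $\Phi_1$ and then differentiate; the construction of the PHP will fix both the support and the exponential form. First I would record why the support is $v_1 \geq D$. Conditioning on $\mathbf{o} \in \Psi$ forces $\mathbf{o} \notin \Xi_D = \bigcup_{\nby\in\Phi_1}\mathbf{b}(\nby,D)$, which is exactly the statement that $\Phi_1$ has no point inside the disk $\mathbf{b}(\mathbf{o},D)$. Equivalently, the nearest point of $\Phi_1$ cannot lie closer than $D$, so $V_1 \geq D$ almost surely, and this is precisely what distinguishes the present computation from the textbook nearest-point distance law for a PPP.

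Next I would compute the complementary CDF. For $v_1 \geq D$, the event $\{V_1 > v_1\}$ is precisely the event that $\Phi_1$ places no point in the annulus $\mathbf{b}(\mathbf{o},v_1)\setminus\mathbf{b}(\mathbf{o},D)$. Since $\Phi_1$ is a homogeneous PPP of intensity $\lambda_1$, independent of $\Phi_2$, and since the conditioning event — emptiness of the inner disk $\mathbf{b}(\mathbf{o},D)$ — concerns a region disjoint from this annulus, the independent-scattering property of the PPP lets me evaluate the void probability of the annulus with no correction, giving
$$\nbbP(V_1 > v_1) = \exp\left(-\lambda_1\left|\mathbf{b}(\mathbf{o},v_1)\setminus\mathbf{b}(\mathbf{o},D)\right|\right) = \exp\left(-\pi\lambda_1(v_1^2 - D^2)\right).$$

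Finally, I would write $F_{V_1}(v_1) = 1 - \exp\left(-\pi\lambda_1(v_1^2 - D^2)\right)$ and differentiate with respect to $v_1$, which yields $f_{V_1}(v_1) = 2\pi\lambda_1 v_1 \exp\left(-\pi\lambda_1(v_1^2 - D^2)\right)$ on $v_1 \geq D$, as claimed. The only delicate step is the conditioning argument, namely that the void probability over the annulus is unaffected by conditioning on the emptiness of the inner disk. This is exactly the independence of a PPP over disjoint regions, and once it is invoked the remainder is a routine void-probability evaluation followed by a single differentiation. I expect this independence justification to be the sole conceptual obstacle; everything else is mechanical.
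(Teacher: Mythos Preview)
Your proposal is correct and follows essentially the same approach as the paper: establish the support $v_1\geq D$ from the fact that the typical PHP point lies outside all holes, compute the complementary CDF as the void probability of $\Phi_1$ over the annulus $\nbb(\nbo,v_1)\setminus\nbb(\nbo,D)$, and differentiate. Your explicit invocation of the independent-scattering property to justify why the conditioning on emptiness of the inner disk does not affect the annulus void probability is slightly more detailed than the paper's treatment, but the argument is the same.
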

Deconditioning the result of Lemma~\ref{Lemma1_Proof} with respect to the distribution of the distance to the closest hole derived above, the proposed lower bound is derived below.
\begin{theorem} [{New Lower Bound~1}]\label{thm :nearhole}
Let $I=\sum_{\substack{\nbx\in {\psi}}}Ph_\nbx\|\nbx\|^{-\alpha}$, the Laplace transform of interference is lower bounded by
\begin{align}\nonumber
&\ncalL_{{I}}(s)\geq \exp\left(-\pi\lambda_2 \frac{{( sP)}^{2/\alpha}}{\sinc(2/\alpha)}\right)\times \\  \label{LtI asym_3}
& \int_D^{\infty}\exp\left(g(v_1)\right)2\pi \lambda_1 v_1\exp(-\pi\lambda_1 (v_1^2-D^2))\nrmd v_1,
\end{align}
where $g(v_1)=\int_{{v_1-D}}^{{v_1+D}}{\arccos\left(\frac{r^2+v_1^2-D^2}{2v_1r}\right)}\frac{2\lambda_2}{1+\frac{r^\alpha}{Ps}} r\mathrm{d}r$.
\end{theorem}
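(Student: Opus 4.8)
The plan is to build the bound in three moves: a stochastic domination step that fixes the direction of the inequality, an invocation of the single-hole conditional Laplace transform of Lemma~\ref{Lemma1_Proof}, and a deconditioning over the distance to the nearest hole using Lemma~\ref{Clos_Dislem}.

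First I would establish the domination. Since $\nby_1 \in \Phi_1$, the closest hole $\mathbf{b}(\nby_1, D)$ is one of the holes constituting $\Xi_D$, so $\mathbf{b}(\nby_1, D) \subseteq \Xi_D$ and therefore
$$\Psi = \Phi_2 \setminus \Xi_D \subseteq \Phi_2 \setminus \mathbf{b}(\nby_1, D) = \Phi_2 \cap \nbb^c(\nby_1, D).$$
Coupling both interference fields on the same realization of $\Phi_2$ and the same fading marks $\{h_\nbx\}$, and writing $\tilde{I} = \sum_{\nbx \in \Phi_2 \cap \nbb^c(\nby_1, D)} P h_\nbx \|\nbx\|^{-\alpha}$, every summand is nonnegative and $\Psi$ indexes a subset of the terms contributing to $\tilde{I}$, so $\tilde{I} \geq I$ almost surely. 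Hence $e^{-s\tilde{I}} \leq e^{-sI}$ and, on taking expectations, $\ncalL_I(s) \geq \ncalL_{\tilde{I}}(s)$; it remains to compute $\ncalL_{\tilde{I}}(s)$.

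Next I would condition on $V_1 = \|\nby_1\| = v_1$ and appeal to Lemma~\ref{Lemma1_Proof}. The crucial observation is that $\Phi_1$ and $\Phi_2$ are independent, so conditioning on the event that the nearest point of $\Phi_1$ lies at distance $v_1$ constrains only $\Phi_1$ and leaves the law of $\Phi_2$ untouched. Consequently, given $V_1 = v_1$, the field $\Phi_2 \cap \nbb^c(\nby_1, D)$ is distributionally identical to the single-deterministic-hole setup of Lemma~\ref{Lemma1_Proof} with $\|\nby\| = v_1$; moreover, since received power depends on $\nbx$ only through $\|\nbx\|$, the conditional Laplace transform depends on $\nby_1$ only through $v_1$ and not through its direction. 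Substituting $\lambda(r) = (\lambda_2/\pi)\arccos\!\big((r^2 + v_1^2 - D^2)/(2 v_1 r)\big)$ into \eqref{LtI asym_03}, the factor $2\pi\lambda(r)$ becomes $2\lambda_2\arccos(\cdot)$ and the second exponent collapses exactly to $g(v_1)$, giving
$$\ncalL_{\tilde{I} \mid V_1 = v_1}(s) = \exp\!\left(-\pi\lambda_2 \frac{(sP)^{2/\alpha}}{\sinc(2/\alpha)}\right) \exp\big(g(v_1)\big).$$

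Finally I would decondition using the nearest-hole density $f_{V_1}$ from Lemma~\ref{Clos_Dislem}, which is supported on $[D,\infty)$ precisely because the typical point lies outside all holes by construction. By the law of total expectation,
$$\ncalL_{\tilde{I}}(s) = \int_D^\infty \ncalL_{\tilde{I} \mid V_1 = v_1}(s)\, f_{V_1}(v_1)\, \nrmd v_1,$$
and pulling the $v_1$-independent first factor out of the integral reproduces exactly the right-hand side of \eqref{LtI asym_3}; combined with $\ncalL_I(s) \geq \ncalL_{\tilde{I}}(s)$ from the first step, this completes the argument. I expect the main obstacle to be conceptual rather than computational: one must argue cleanly that the nearest-hole conditioning does not perturb $\Phi_2$, so that Lemma~\ref{Lemma1_Proof} applies verbatim with $\|\nby\| = v_1$, and that restricting to the single closest hole genuinely dominates the full PHP interference pointwise. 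Once these two points are secured, the remainder is just the substitution turning \eqref{LtI asym_03} into $\exp(g(v_1))$ and a routine deconditioning integral.
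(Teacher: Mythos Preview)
Your proposal is correct and follows essentially the same approach as the paper: overestimate the interference by retaining only the closest hole, apply Lemma~\ref{Lemma1_Proof} conditionally, and decondition using the nearest-hole distance density of Lemma~\ref{Clos_Dislem}. You are in fact more explicit than the paper about the pointwise domination $\tilde{I}\geq I$ and about why the conditioning on $V_1$ leaves the law of $\Phi_2$ intact, but the logical skeleton is identical.
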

\begin{proof} See Appendix~\ref{app: C}.
\end{proof} 
\begin{remark}\label{remark3_r3}
Since the PHP is approximated by $\Phi_2 \cap \nbb^c(\nby_1,D)$ in the above result, the resulting lower bound presented in Theorem~\ref{thm :nearhole} is by construction tighter than the known lower bound presented in Lemma~\ref{thm :lowbound} where the interference field was approximated by simply $\Phi_2$. On the same lines as discussed for Lemma~\ref{thm :lowbound} in Remark~\ref{rem:1}, the above approach captures the local neighborhood of the typical node accurately, thus leading to a remarkably tight lower bound in  Theorem~\ref{thm :nearhole}. This will be demonstrated later in this section and in the numerical results section. 
\end{remark}
\subsection{Upper Bound for the Laplace Transform of Interference in a PHP}
To derive an upper bound on the Laplace transform, we extend the above approach to all the holes. To maintain tractability, each hole is carved out individually/separately from the baseline PPP $\Phi_2$ using the above approach. Note that since the centers of the holes follow a PPP $\Phi_1$, there will obviously be overlaps among holes. Therefore, when we remove points of $\Phi_2$ corresponding to each hole individually (without accounting for the overlaps), we may remove certain points multiple times thus underestimating the interference field, which results in an upper bound on the Laplace transform of interference.
In the next section, we show that any reasonable attempt towards incorporating the exact effect of overlaps leads to a significant loss in tractability. 
Fortunately, the bounds derived in this section without incorporating the effect of overlaps are remarkably tight and can be considered proxies for the exact Laplace transform. 
\begin{theorem} [{New Upper Bound}] \label{IccThm3} The Laplace transform of interference experienced by a typical node in a PHP is upper bounded by
\begin{align}\nonumber
\ncalL_{I}(s)\le & \exp\left(-\pi\lambda_2 \frac{{( sP)}^{2/\alpha}}{\sinc(2/\alpha)}\right)\times \\ \label{CovPrbPHP} &\exp\left(\!-2\pi \lambda_1\left(\int_{D}^{\infty}\left(1-\exp\left(f(v) \right)\right)v\mathrm{d}v\right)\right)
\end{align}
where $f(v)=\int_{{v-D}}^{{v+D}}{\arccos\left(\frac{r^2+v^2-D^2}{2vr}\right)}\frac{2\lambda_2}{1+\frac{r^\alpha}{Ps}} r\mathrm{d}r$.
\end{theorem}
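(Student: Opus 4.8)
The plan is to condition on the hole-center process $\Phi_1$ and work with the conditional Laplace transform, for which the overlaps between holes can be discarded cleanly. Conditioned on $\Phi_1$, the PHP $\Psi$ is a homogeneous PPP of density $\lambda_2$ on $\nbbR^2 \setminus \Xi_D$, so writing $\kappa(\nbx) := \bigl(1 + \|\nbx\|^\alpha/(sP)\bigr)^{-1} = 1 - \E_{h_\nbx}\bigl[e^{-sPh_\nbx\|\nbx\|^{-\alpha}}\bigr] \ge 0$ for the per-point contribution under Rayleigh fading, the PGFL of a PPP gives
\begin{align}\label{cond-LT}
\ncalL_{I\mid\Phi_1}(s) = \exp\!\left(-\lambda_2\!\int_{\nbbR^2}\!\kappa(\nbx)\,\nrmd\nbx\right)\exp\!\left(\lambda_2\!\int_{\Xi_D}\!\kappa(\nbx)\,\nrmd\nbx\right),
\end{align}
where the first factor equals the holeless bound $\exp\bigl(-\pi\lambda_2 (sP)^{2/\alpha}/\sinc(2/\alpha)\bigr)$ of Lemma~\ref{thm :lowbound} and the second is the correction accounting for the interference removed by the holes.

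The only inequality in the argument is the discarding of overlaps. Since $\kappa \ge 0$ and the number of holes covering a point dominates the indicator that it is covered by at least one hole, $\nb1_{\Xi_D}(\nbx) \le \sum_{\nby\in\Phi_1}\nb1_{\nbb(\nby,D)}(\nbx)$, whence $\int_{\Xi_D}\kappa \le \sum_{\nby\in\Phi_1}\int_{\nbb(\nby,D)}\kappa$. Substituting this into \eqref{cond-LT} over-counts the interference removed in the overlap regions and therefore gives, for each realization of $\Phi_1$, the pointwise upper bound
\begin{align}\label{prod-bound}
\ncalL_{I\mid\Phi_1}(s) \le \exp\!\left(-\pi\lambda_2\frac{(sP)^{2/\alpha}}{\sinc(2/\alpha)}\right)\prod_{\nby\in\Phi_1}\exp\!\left(\lambda_2\!\int_{\nbb(\nby,D)}\!\kappa\,\nrmd\nbx\right).
\end{align}

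Next I would take $\E_{\Phi_1}$ of \eqref{prod-bound}; since $\ncalL_I(s)=\E_{\Phi_1}[\ncalL_{I\mid\Phi_1}(s)]$ and the holeless factor is deterministic, this reduces to deconditioning the product. Conditioned on $\mathbf{o}\in\Psi$ the typical point lies outside every hole, so $\Phi_1$ has no center within distance $D$ of the origin and, by the independence property of the PPP, is otherwise an undisturbed homogeneous PPP of density $\lambda_1$ on $\nbbR^2\setminus\nbb(\mathbf{o},D)$. Writing $\rho(\nby):=\exp\bigl(\lambda_2\int_{\nbb(\nby,D)}\kappa\,\nrmd\nbx\bigr)$ and applying the PGFL of $\Phi_1$ gives $\E_{\Phi_1}\bigl[\prod_{\nby\in\Phi_1}\rho(\nby)\bigr] = \exp\bigl(-\lambda_1\int_{\nbbR^2\setminus\nbb(\mathbf{o},D)}(1-\rho(\nby))\,\nrmd\nby\bigr)$. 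The inner disk integral is precisely the strip computation behind Lemma~\ref{Lemma1_Proof} and the ``dissolving the hole'' remark: parametrizing $\nbb(\nby,D)$ by $r=\|\nbx\|$, its angular width at radius $r$ is $2\arccos\bigl(\tfrac{r^2+\|\nby\|^2-D^2}{2\|\nby\|r}\bigr)$, so $\lambda_2\int_{\nbb(\nby,D)}\kappa\,\nrmd\nbx = f(\|\nby\|)$ with $f$ as in the statement, i.e. $\rho(\nby)=\exp(f(\|\nby\|))$. Converting the radially symmetric outer integral to polar coordinates ($\nrmd\nby=2\pi v\,\nrmd v$, $v=\|\nby\|\ge D$) yields the exponent $-2\pi\lambda_1\int_D^\infty(1-\exp(f(v)))v\,\nrmd v$, and multiplying by the holeless factor produces exactly the claimed bound.

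The conceptual crux, and the step I expect to be the main obstacle to state cleanly, is the overlap-discarding inequality: one must argue that carving each hole out of $\Phi_2$ separately deletes the points in the overlap regions more than once, thereby underestimating $I$ and hence overestimating $\ncalL_I(s)$. Everything else is bookkeeping: the conditioning that restricts $\Phi_1$ to $\|\nby\|\ge D$ (which fixes the lower limit $D$ in the final integral), the geometric reduction of the disk integral to $f(v)$ already established in Lemma~\ref{Lemma1_Proof}, and the convergence of $\int_D^\infty(1-e^{f(v)})v\,\nrmd v$, which holds for $\alpha>2$ since $f(v)=O(v^{-\alpha})$ as $v\to\infty$ and hence $1-e^{f(v)}=O(v^{-\alpha})$.
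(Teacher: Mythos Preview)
Your proposal is correct and follows essentially the same route as the paper's proof: condition on $\Phi_1$, apply the PGFL of $\Phi_2$ to obtain the conditional Laplace transform, upper-bound the $\Xi_D$-integral by the sum of per-hole integrals (the overlap-discarding step), reduce each disk integral to $f(\|\nby\|)$ via the geometry of Lemma~\ref{Lemma1_Proof}, and then decondition over $\Phi_1$ (restricted to $\|\nby\|\ge D$) using its PGFL. Your write-up is in fact slightly more careful than the paper's in making the indicator inequality explicit and in noting the $\alpha>2$ convergence of the outer integral.
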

\begin{proof}  See Appendix~\ref{app: D}.
\end{proof}
For the same reason as the tightness of lower bound of Theorem~\ref{thm :nearhole} discussed in Remark~\ref{remark3_r3}, the above upper bound is also remarkably tight for a wide variety of scenarios. More details on the tightness are provided in the next subsection and in the numerical results section. 
\subsection{Ratio of the Proposed Upper and Lower Bounds}
To study the tightness of the proposed upper and lower bounds, we derive a tight approximation on the ratio of upper and lower bounds, and show that it is close to one.
\begin{prop} \label{RatioBnd} The ratio of the upper and lower bounds on the Laplace transforms derived in Theorems~\ref{IccThm3} and \ref{thm :nearhole} is $\frac{\ncalL_\nrmu (s)}{\ncalL_\nrml (s)}\approx $
\begin{align}
 \label{RatioBnd0} &\int_D^{\infty}\exp\left[-2\pi\lambda_1 \int_{v_1}^{\infty}\left(1-\exp\left(f(v) \right)\right)v\mathrm{d}v\right]f_{V_1}(v_1)\nrmd v_1,
\end{align}
where ${\ncalL_\nrmu (s)}$ and ${\ncalL_\nrml (s)}$ denote the proposed upper and lower bounds, given by Theorem~\ref{IccThm3} and Theorem~\ref{thm :nearhole}, respectively. 
Further, $f(v)=\int_{{v-D}}^{{v+D}}{\arccos\left(\frac{r^2+v^2-D^2}{2vr}\right)}\frac{2\lambda_2}{1+\frac{r^\alpha}{sP}} r\mathrm{d}r$.
\end{prop}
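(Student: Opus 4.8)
The plan is to first re-express the upper bound of Theorem~\ref{IccThm3} as an expectation over the distance $V_1$ to the closest hole, so that it becomes structurally aligned with the lower bound of Theorem~\ref{thm :nearhole}, and then to read off the ratio after a single mild approximation. Write $f(v)$ for the common kernel appearing in both theorems (recall $g\equiv f$), let $f_{V_1}$ be the closest-hole density of Lemma~\ref{Clos_Dislem}, and abbreviate the shared prefactor as $\ncalL_0(s)=\exp\left(-\pi\lambda_2\frac{(sP)^{2/\alpha}}{\sinc(2/\alpha)}\right)$. Theorem~\ref{thm :nearhole} then reads $\ncalL_\nrml(s)=\ncalL_0(s)\int_D^\infty e^{f(v_1)}f_{V_1}(v_1)\,\mathrm{d}v_1$.

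The key step is to rewrite $\ncalL_\nrmu(s)$. From the proof of Theorem~\ref{IccThm3}, carving each hole separately corresponds to the PGFL $\E\left[\prod_{\nby\in\Phi_1\cap\nbb^c(\nbo,D)}e^{f(\|\nby\|)}\right]=\exp\left(2\pi\lambda_1\int_D^\infty(e^{f(v)}-1)v\,\mathrm{d}v\right)$, which is precisely the second exponential factor in \eqref{CovPrbPHP}. I would condition this functional on the closest hole: given $V_1=v_1$, that hole contributes $e^{f(v_1)}$, while by the nearest-neighbour conditioning property of a PPP the remaining centers form a PPP of density $\lambda_1$ on $\nbb^c(\nbo,v_1)$ and contribute $\exp\left(-2\pi\lambda_1\int_{v_1}^\infty(1-e^{f(v)})v\,\mathrm{d}v\right)$. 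Deconditioning over $f_{V_1}$ gives the \emph{exact} identity
\[
\ncalL_\nrmu(s)=\ncalL_0(s)\int_D^\infty e^{f(v_1)}\exp\left(-2\pi\lambda_1\int_{v_1}^\infty(1-e^{f(v)})v\,\mathrm{d}v\right)f_{V_1}(v_1)\,\mathrm{d}v_1 .
\]
This identity can also be checked directly: with $S(v_1)=\P(V_1>v_1)=e^{-\pi\lambda_1(v_1^2-D^2)}$ and $\phi(v_1)=2\pi\lambda_1\int_{v_1}^\infty(e^{f(v)}-1)v\,\mathrm{d}v$, the integrand equals $-\frac{\mathrm{d}}{\mathrm{d}v_1}\left(e^{\phi(v_1)}S(v_1)\right)$, so the integral telescopes to $e^{\phi(D)}$, recovering exactly the upper-bound factor of \eqref{CovPrbPHP}.

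With both bounds now carrying the common weight $e^{f(v_1)}f_{V_1}(v_1)$, the ratio is, exactly,
\[
\frac{\ncalL_\nrmu(s)}{\ncalL_\nrml(s)}=\frac{\int_D^\infty e^{f(v_1)}\exp\left(-2\pi\lambda_1\int_{v_1}^\infty(1-e^{f(v)})v\,\mathrm{d}v\right)f_{V_1}(v_1)\,\mathrm{d}v_1}{\int_D^\infty e^{f(v_1)}f_{V_1}(v_1)\,\mathrm{d}v_1}.
\]
The final and only approximation is to treat $e^{f(v_1)}$ as slowly varying over the effective support of $f_{V_1}$, pull it out of both integrals where it cancels, and use $\int_D^\infty f_{V_1}(v_1)\,\mathrm{d}v_1=1$; equivalently, one replaces the $e^{f}$-tilted closest-hole law $e^{f(v_1)}f_{V_1}(v_1)/\E[e^{f(V_1)}]$ by $f_{V_1}$ itself. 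This yields precisely the claimed expression \eqref{RatioBnd0}.

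The main obstacle is exactly this last step, since the replacement is exact only if $f$ were constant. I would justify it by noting that $f(v_1)$ captures the effect of a \emph{single} hole and is therefore small, decaying to $0$ as $v_1\to\infty$, so the tilt is mild and the normalization nearly cancels the numerator factor; the numerical section can then corroborate that the resulting ratio stays close to one across the regimes of interest. An optional refinement, if a provable (rather than approximate) statement were desired, would be to bound the ratio by $\inf_{v_1}e^{f(v_1)}$ and $\sup_{v_1}e^{f(v_1)}$ over the support, but this is not needed for the stated approximation.
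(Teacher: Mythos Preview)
Your derivation is correct and reaches exactly the claimed expression, but the route differs from the paper's. The paper works at the level of the underlying interference random variables: writing $\ncalL_\nrmu(s)=\E[e^{-sI_\nrmu}]$ and $\ncalL_\nrml(s)=\E[e^{-sI_\nrml}]$, it first applies Jensen's inequality to get $\ncalL_\nrmu/\ncalL_\nrml\le \E[e^{-sI_\nrmu}]\,\E[e^{sI_\nrml}]$ and then approximates this product by $\E[e^{-s(I_\nrmu-I_\nrml)}]$, invoking (approximate) independence of $I_\nrmu$ and $I_\nrml$. The difference $I_\nrmu-I_\nrml$ is then interpreted as the interference ``removed'' by all holes except the closest one, and its Laplace transform is computed via the PGFL of $\Phi_1\setminus\{\nby_1\}$ conditioned on $V_1$, followed by deconditioning. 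Your approach instead stays purely at the level of the Laplace-transform expressions: you condition the PGFL factor in $\ncalL_\nrmu$ on the nearest hole to obtain an \emph{exact} integral representation structurally matching $\ncalL_\nrml$ (your telescoping check is a nice verification), and then make a single, explicitly identified approximation---replacing the $e^{f}$-tilted nearest-hole law by $f_{V_1}$ itself. What your route buys is transparency: there is one approximation, its nature (a mild change of measure) is clear, and the exact ratio is displayed before it is made. What the paper's route buys is a physical interpretation of the final expression as the Laplace transform of the interference removed by the non-closest holes, which motivates why the ratio should be close to one.
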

\begin{proof}  See Appendix~\ref{app: E}.
\end{proof}
This approximation can be interpreted as the Laplace transform of interference power {\em removed} by all the holes except the closest hole from the homogeneous PPP $\Phi_2$ after ignoring the effect of overlaps. This ratio will be shown to be tight and close to one across wide range of parameters in the numerical results section. Next, we explore a few ways to incorporate the effect of overlaps in the holes that was ignored in the results derived in this section. 



\section{Incorporating the Impact of Overlaps in the Proposed Approaches}
                     \begin{figure}[t!]
  \centering{
              \includegraphics[width=.65\linewidth]{./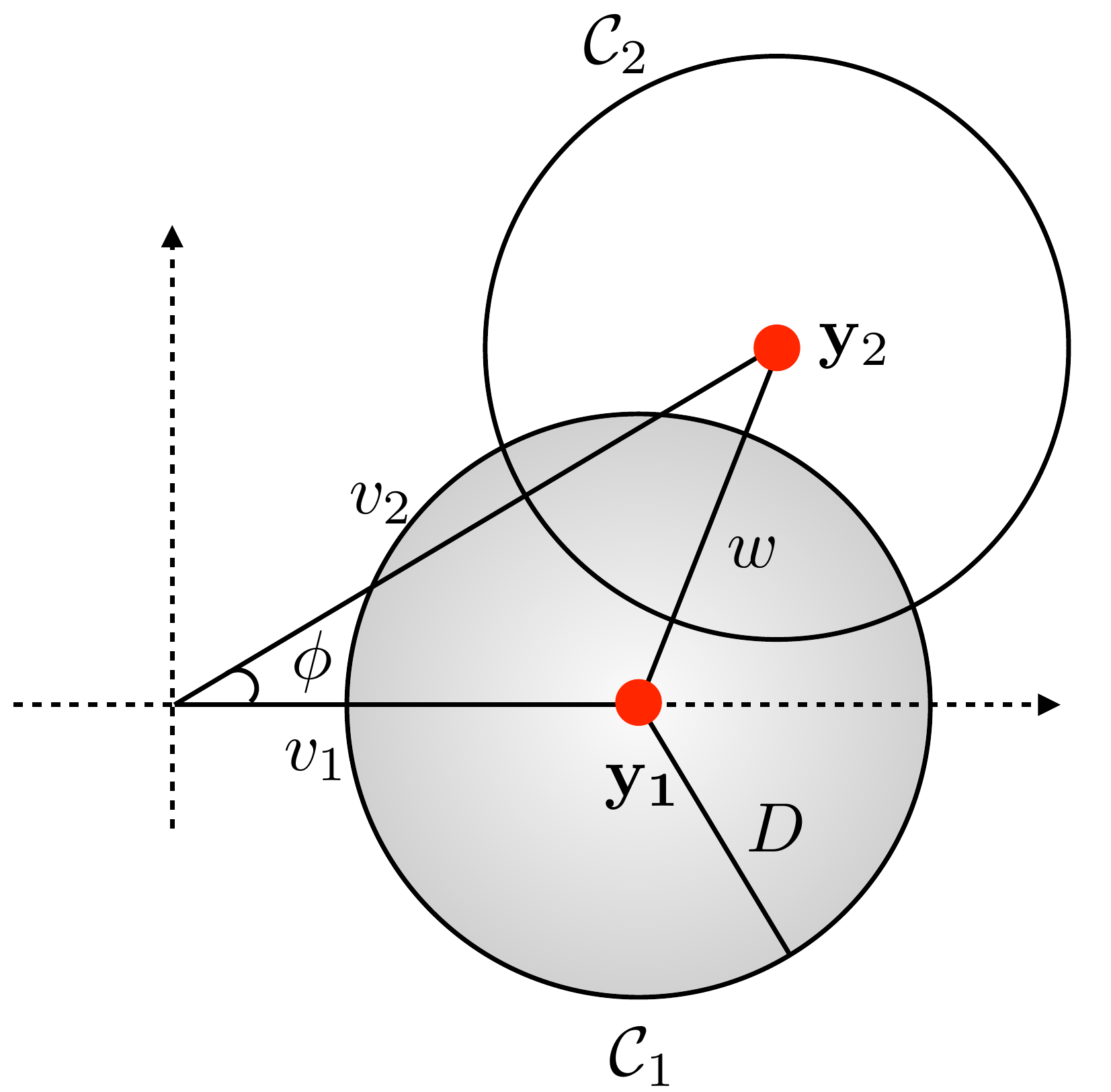}
              \caption{Illustration of the setup used in Theorem~\ref{Lemma2_nProof} where only two holes closest to the typical node are considered.}
\label{pdf_w}}
          \end{figure}

The key lower and upper bounds reported in Theorems~\ref{thm :nearhole} and \ref{IccThm3} in the previous section carefully circumvented the need for considering the overlaps in the holes explicitly. This was to maintain tractability. However, it is quite natural to wonder how much tractability is really lost if we try to incorporate the effect of the overlaps in the holes accurately. In this section, we address this question by deriving three results for the Laplace transform of interference that incorporate the effect of holes. In the first result, we generalize the lower bound derived in Theorem~\ref{thm :nearhole} by considering two nearest holes instead of a single hole. The overlap among the two holes is explicitly incorporated in the analysis. The setup is presented in Fig.~\ref{pdf_w}, where ${\ncalC_1}=\nbb(\nby_1,D)$ and  ${\ncalC_2}=\nbb(\nby_2,D)$ denote the first and the second closest holes to the typical receiver, respectively. The angle between $\nby_1$ and $\nby_2$ is denoted by $\phi$. The interference field in this case is modeled by $(\Phi_2 \cap \{\ncalC_1 \cup \ncalC_2\}^c) \supset \Psi$, which overestimates the interference power and hence leads to a lower bound on the Laplace transform of interference. Before going into the main result, we first need to evaluate the joint PDF of the distances between the first and second closest holes to the origin, which are denoted by random variables $V_1$ and $V_2$, respectively. Using the same arguments as in Lemma~\ref{Clos_Dislem}, the joint PDF can be derived as~\cite{haenggi2005distances}
\begin{align} \nonumber
f_{V_1V_2}({v_1,v_2})&=f_{V_2}({v_2|v_1})f_{V_1}({v_1})\\  \label{pdf_joint2}
&=(2\pi\lambda_1)^2 v_1v_2 \exp(-\pi\lambda_1 (v_2^2-D^2)).
\end{align}
Using this distribution, the Laplace transform of interference for this case is derived next.
\begin{theorem}[{New Lower Bound 2}]
\label{Lemma2_nProof}
The Laplace transform of interference experienced by a typical node of a PHP $\Psi$ is lower bounded by 
\begin{align}\nonumber
\ncalL_{{I}}(s)&\geq
\exp\left(-\pi\lambda_2 \frac{{(sP)}^{2/\alpha}}{\mathrm{sinc}(2/\alpha)}\right) \times \\\nonumber& \Bigg(\frac{1}{2\pi } \int_D^{\infty}\!\int_{v_1}^{\infty}\!\!\int_{-\pi}^{\pi}\exp\left(\int_{v_1-D}^{v_1+D}\!\!\frac{2\pi \lambda_{c1}(r)}{1+\frac{r^{\alpha}}{sP}} r \nrmd r\right)\\\nonumber &
\exp\left(\int_{v_2-D}^{v_2+D}\!\!\frac{2\pi \lambda_{c2}(r)}{1+\frac{r^{\alpha}}{sP}} r \nrmd r\right)
\exp\left(-\lambda_2\ncalB(v_1,v_2, \phi)\right)
\times \\   \label{LtI asym_07}  & f_{V_1V_2}({v_1,v_2})\nrmd \phi \nrmd v_2 \nrmd v_1\Bigg)
\end{align}
where $\lambda_{ci}(r)= \frac{\lambda_2}{\pi}{\arccos\left(\frac{r^2+{v_i}^2-D^2}{2{v_i}r}\right)}$, for $i=1,2$ and $\ncalB(v_1,v_2, \phi)$ is given by~(\ref{ovl_equav1v2ph}).
\end{theorem}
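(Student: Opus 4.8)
The plan is to realise the two-hole interference field as a Poisson point process conditioned on its two nearest holes, evaluate its Laplace transform through the shot-noise (PGFL) formula, and then trade the interference-weighted overlap integral for the raw overlap area to reach a closed form while preserving the direction of the bound. First I would fix the bounding direction at the level of point sets: since $\Psi = \Phi_2 \setminus \Xi_D$ removes \emph{every} hole whereas the reduced field $\Phi_2 \cap \{\ncalC_1 \cup \ncalC_2\}^c$ removes only the two nearest ones, we have the inclusion $\Psi \subseteq \Phi_2 \cap \{\ncalC_1\cup\ncalC_2\}^c$. More interferers can only increase $I$ and hence decrease $e^{-sI}$, so $\ncalL_{I}(s) \geq \ncalL_{I_{\{\ncalC_1,\ncalC_2\}}}(s)$, which reduces the task to the two-hole field.

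Next, conditioning on $\Phi_1$ -- equivalently on $\nby_1,\nby_2$, i.e. on $(V_1,V_2,\phi)$ -- the set $\Phi_2 \cap \{\ncalC_1\cup\ncalC_2\}^c$ is a homogeneous PPP of density $\lambda_2$ restricted to $\R^2\setminus(\ncalC_1\cup\ncalC_2)$. Applying the PGFL together with $\E_h[e^{-sPh r^{-\alpha}}] = (1+r^\alpha/(sP))^{-1}$ for $h\sim\exp(1)$ gives the conditional Laplace transform $\exp\bigl(-\lambda_2\int_{\R^2\setminus(\ncalC_1\cup\ncalC_2)} (1+\|\nbx\|^\alpha/(sP))^{-1}\,\nrmd\nbx\bigr)$, which depends on $\nby_1,\nby_2$ only through $(v_1,v_2,\phi)$. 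Writing the domain as $\R^2$ minus $\ncalC_1\cup\ncalC_2$ and applying inclusion--exclusion, the exponent splits into the full-plane integral (yielding the universal prefactor $-\pi\lambda_2 (sP)^{2/\alpha}/\sinc(2/\alpha)$, exactly as in Lemma~\ref{thm :lowbound}), plus the single-disk integrals over $\ncalC_1$ and $\ncalC_2$, minus the integral over the lens $\ncalC_1\cap\ncalC_2$. Each single-disk integral reduces to the radial form $\int_{v_i-D}^{v_i+D}\frac{2\pi\lambda_{ci}(r)}{1+r^\alpha/(sP)}r\,\nrmd r$ by precisely the polar/angle argument of Lemma~\ref{Lemma1_Proof}, where at radius $r$ the disk subtends the half-angle $\arccos((r^2+v_i^2-D^2)/(2v_ir))$.

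The crux is the overlap term. Its exact contribution is $-\lambda_2\int_{\ncalC_1\cap\ncalC_2}(1+\|\nbx\|^\alpha/(sP))^{-1}\,\nrmd\nbx$, whose integrand over the off-centre lens is awkward in origin-centred polar coordinates. I would bound the integrand above by $1$, replacing this weighted integral by the raw overlap area $\ncalB(v_1,v_2,\phi)=|\ncalC_1\cap\ncalC_2|$; since $(1+\|\nbx\|^\alpha/(sP))^{-1}\le 1$, this makes the factor $\exp(-\lambda_2(\cdot))$ smaller, so the expression remains a valid (slightly looser) lower bound below the exact two-hole transform. The area is then obtained from standard two-circle lens geometry: the centre separation is $d=\sqrt{v_1^2+v_2^2-2v_1v_2\cos\phi}$, and the lens area of two radius-$D$ disks at separation $d$ is the closed form recorded in~(\ref{ovl_equav1v2ph}), vanishing once $d\ge 2D$. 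Finally I would decondition: for an isotropic PPP the angular positions of the two nearest points are independent and uniform given their distances, so $\phi$ is uniform on $[-\pi,\pi)$ (contributing the $\frac{1}{2\pi}\int_{-\pi}^{\pi}\nrmd\phi$ factor), and averaging over $(V_1,V_2)$ against the joint density~(\ref{pdf_joint2}) assembles~(\ref{LtI asym_07}).

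I expect the main obstacle to be twofold: carrying out the lens-area geometry cleanly as a function of $(v_1,v_2,\phi)$, including the case distinction for when the two disks are disjoint, and verifying that the area-for-weighted-integral substitution points in the right direction -- since the overlap correction enters with the opposite sign to the two single-disk add-backs, one must confirm that enlarging its magnitude still drives the estimate downward, so that the final closed form is a provable lower bound rather than a mere approximation.
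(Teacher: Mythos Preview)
Your overall architecture matches the paper's proof exactly: restrict to the two nearest holes to get the inequality direction, condition on $(V_1,V_2,\phi)$, apply the PGFL with the Rayleigh fading average, split the exponent via inclusion--exclusion into the full-plane term, the two single-disk add-backs (handled by the Lemma~\ref{Lemma1_Proof} polar argument), and the overlap correction, then decondition using the joint density~(\ref{pdf_joint2}) and the uniformity of $\phi$.

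The one substantive discrepancy is your treatment of the overlap term. You propose to replace the weighted integral $\int_{\ncalC_1\cap\ncalC_2}(1+\|\nbx\|^\alpha/(sP))^{-1}\,\nrmd\nbx$ by the raw lens area $|\ncalC_1\cap\ncalC_2|$, and you read $\ncalB(v_1,v_2,\phi)$ in~(\ref{ovl_equav1v2ph}) as the standard two-circle area formula. In the paper, however, $\ncalB(v_1,v_2,\phi)$ is \emph{defined} as the weighted integral itself, kept exactly: the authors locate the two intersection points of $\ncalC_1$ and $\ncalC_2$ in Cartesian coordinates (placing $\nby_1$ on the positive $x$-axis) and write $\ncalB$ as a double integral $\int\!\!\int (1+(\u^2+\t^2)^{\alpha/2}/(sP))^{-1}\nrmd\t\,\nrmd\u$ over the lens, with the case split $|\phi|\lessgtr\hat\phi=\arccos((v_1^2+v_2^2-4D^2)/(2v_1v_2))$ governing whether the disks actually meet. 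No further bounding of the overlap term is done.

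Your area substitution is in the correct direction (since the integrand is at most $1$, it enlarges the negative exponent and hence lowers the expression), so you would still obtain \emph{a} valid lower bound---just a strictly looser one than the theorem states. To prove the theorem as written you should drop that extra step and instead carry the weighted overlap integral through as a Cartesian double integral over the lens.
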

\begin{proof}  
The key idea behind this proof is to consider $(\Phi_2 \cap \{\nbb(\nby_1,D) \cup \nbb(\nby_2,D)\}^c) \supset \Psi$ as the interference field, which clearly overestimates the interference and hence leads to the lower bound. Complete proof is provided in Appendix~\ref{app: F}.
\end{proof}
                     \begin{figure}[t!]
  \centering{
              \includegraphics[width=.65\linewidth]{./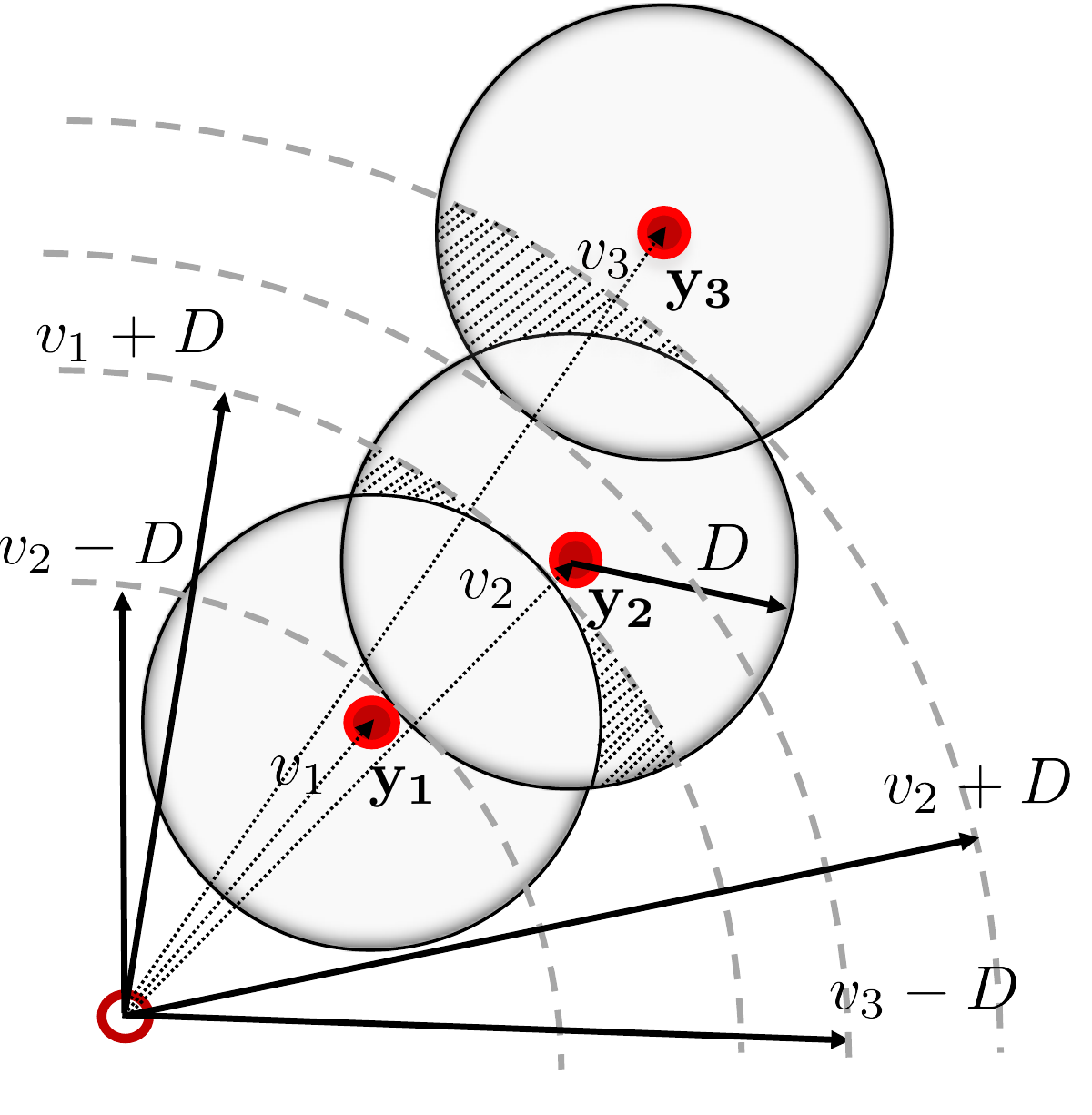}
              \caption{Illustration of the setup used in Theorem~\ref{Lemma2_kClosestBnd} where $k$ holes closest to the typical node are considered.}
\label{ovlp_3}}
          \end{figure}
As evident from the above result, incorporating the  effect of overlap, even among two holes, results in a significantly more complex expression compared to the bounds presented in Theorems~\ref{thm :nearhole} and \ref{IccThm3}. This shows that incorporating the {\em exact} effect of overlaps does indeed lead to a significant loss in tractability. Therefore, instead of trying to incorporate the exact effect of overlaps, we now propose a new procedure for bounding the overlap area, which allows us to derive a lower bound on the Laplace transform while considering multiple holes in the interference field. This tightens the result of Theorem~\ref{thm :nearhole}, where only one hole was considered. In particular, we consider $k$ closest holes from the typical receiver, as shown in Fig. \ref{ovlp_3}. In order to claim the result as a bound, we bound the union of $k$-closest holes, i.e., $\cup_{i=1}^{k}{\ncalC_i}=\cup_{i=1}^{k}  \nbb(\nby_i,D)$, with $\Omega_{\rm o}={{\ncalC_1}\cup _{i=2}^k  \left\{\nbb(\nby_i,D)\bigcap \nbb^c(0,\max{(\|\nby_{i-1}\|+D,\|\nby_i\|-D)}\right\}}$, where $\Omega_{\rm o} \subset \cup_{i=1}^k{\ncalC_i}$.
The set $\Omega_{\rm o}$ carefully avoids the overlapping part and hence does not result in over-removal of the points from the baseline point process $\Phi_2$. The contribution of $i^{th}$ hole in $\Omega_{\rm o}$ is $\nbd(\nby_i,D)=$
\begin{align}\label{eq:sect2:dy-D}
\left\{\nbb(\nby_i,D)\bigcap \nbb^c(0,\max{(\|\nby_{i-1}\|+D,\|\nby_i\|-D)}\right\}.
\end{align}
In Fig. \ref{ovlp_3}, the set $\nbd(\nby_i,D)$ corresponds to the unshaded part of a hole that does not overlap with the 
other holes. Clearly, this approach results in the removal of less points from the baseline process $\Phi_2$ compared to a PHP, which leads to a lower bound on the Laplace transform of interference in a PHP. This lower bound is presented in the next theorem.
\begin{theorem}[{New Lower Bound 3}]
\label{Lemma2_kClosestBnd}
The Laplace transform of interference is bounded by
\begin{align}\nonumber 
&\ncalL_{{I}}(s)\geq
\exp\left(-\pi\lambda_2 \frac{{(sP)}^{2/\alpha}}{\mathrm{sinc}(2/\alpha)}\right) \times \\\nonumber& \Bigg( \int\!\int\!\!...\int_{D<v_1<v_2<...<v_k<\infty}\exp\left(\int_{v_1-D}^{v_1+D}\!\frac{2\pi \lambda_{c1}(r)}{1+\frac{r^{\alpha}}{sP}} r \nrmd r\right)\\ \nonumber&
\exp\left(\int_{\max({v_2-D,v_1+D})}^{v_2+D}\!\!\frac{2\pi \lambda_{c2}(r)}{1+\frac{r^{\alpha}}{sP}} r \nrmd r\right)...\\\nonumber&\exp\left(\int_{\max({v_k-D,v_{k-1}+D})}^{v_k+D}\!\!\frac{2\pi \lambda_{ck}(r)}{1+\frac{r^{\alpha}}{sP}} r \nrmd r\right)\times
\\ & f_{V_1V_2..V_k}({v_1,v_2,..,v_k})\nrmd v_1 \nrmd v_2...\nrmd v_k\Bigg)  \label{LtI KCL_01}
\end{align}
where $\lambda_{{\rm c}i}(r)= \frac{\lambda_2}{\pi}{\arccos\left(\frac{r^2+{v_i}^2-D^2}{2{v_i}r}\right)}$ for $i=1,2,...,k$,  and joint density function of distances of $V_1,V_2, ..., V_k$ is $f_{V_1V_2..V_k}({v_1,v_2,..,v_k})=(2\pi\lambda_1)^k v_1v_2...v_k \exp(-\pi\lambda_1 (v_k^2-D^2))$.
\end{theorem}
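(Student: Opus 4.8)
The plan is to construct $\Omega_{\rm o}$ so that it excises strictly fewer points of $\Phi_2$ than the true PHP construction does, and then to evaluate the Laplace transform of the interference generated by $\Phi_2\cap\Omega_{\rm o}^c$ in closed form. First I would verify the containment $\Omega_{\rm o}\subseteq\cup_{i=1}^{k}\ncalC_i\subseteq\Xi_D$: each contribution $\nbd(\nby_i,D)$ is by definition an intersection of $\nbb(\nby_i,D)$ with an exterior ball, so $\nbd(\nby_i,D)\subseteq\nbb(\nby_i,D)=\ncalC_i$, while the first term $\ncalC_1$ sits in the union trivially. Hence $\Omega_{\rm o}^c\supseteq\Xi_D^c$ and $\Phi_2\cap\Omega_{\rm o}^c\supseteq\Phi_2\setminus\Xi_D=\Psi$. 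Summing the nonnegative contributions $Ph_\nbx\|\nbx\|^{-\alpha}$ over the larger index set gives $I_{\Omega_{\rm o}}\ge I$ pathwise, so $\E[e^{-sI_{\Omega_{\rm o}}}]\le\E[e^{-sI}]=\ncalL_I(s)$; thus whatever expression we derive for $\ncalL_{I_{\Omega_{\rm o}}}(s)$ is automatically a valid lower bound.

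The crux of the argument, and the step I expect to be the main obstacle, is to show that the regions $\nbd(\nby_i,D)$ are pairwise disjoint up to measure zero and that the associated integrals depend on the hole centers only through the radial distances $v_i=\|\nby_i\|$. For disjointness I would exploit the radial separation hard-wired into the definition: for $i\ge2$ the region $\nbd(\nby_i,D)$ is confined to radii $r\ge\max(v_{i-1}+D,v_i-D)\ge v_{i-1}+D$, whereas every earlier region $\nbd(\nby_j,D)$ with $j<i$ lies inside $\nbb(\nby_j,D)$ and hence in $r\le v_j+D\le v_{i-1}+D$, using the ordering $v_1<\dots<v_k$; the two radial ranges meet only on a circle. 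For the angular independence I would observe that at radius $r$ the arc of the circle of radius $r$ falling inside $\nbb(\nby_i,D)$ has half-angle $\theta_i(r)=\arccos\big(\tfrac{r^2+v_i^2-D^2}{2v_ir}\big)$, which depends only on $v_i,r,D$; therefore the area element of $\nbd(\nby_i,D)$ at radius $r$ equals $2\theta_i(r)\,r\,\nrmd r$, free of any angular variable.

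With these two facts established I would condition on $\Phi_1$ and apply the PGFL of the PPP $\Phi_2$ together with the Rayleigh expectation $\E_h[e^{-sPh r^{-\alpha}}]=(1+r^\alpha/(sP))^{-1}$, obtaining $\ncalL_{I_{\Omega_{\rm o}}}(s\mid\Phi_1)=\exp\big(-\lambda_2\int_{\R^2\setminus\Omega_{\rm o}}\frac{\nrmd\nbx}{1+\|\nbx\|^\alpha/(sP)}\big)$. Splitting this as $\int_{\R^2}-\int_{\Omega_{\rm o}}$, the first piece reproduces the homogeneous factor $\exp(-\pi\lambda_2(sP)^{2/\alpha}/\sinc(2/\alpha))$, and disjointness turns the second into $\int_{\Omega_{\rm o}}=\sum_{i=1}^{k}\int_{\nbd(\nby_i,D)}$. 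Passing each summand to polar form through the area element above converts it into $\int_{\rho_i}^{v_i+D}\frac{2\pi\lambda_{{\rm c}i}(r)}{1+r^\alpha/(sP)}\,r\,\nrmd r$ with $\rho_1=v_1-D$ and $\rho_i=\max(v_i-D,v_{i-1}+D)$ for $i\ge2$; since each correction enters with a plus sign, the conditional transform factors into precisely the product of exponentials appearing in~\eqref{LtI KCL_01}.

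Finally, since the conditional transform depends on $\Phi_1$ only through $v_1,\dots,v_k$, deconditioning amounts to integrating against the joint law of the $k$ nearest ordered distances, which by the void-probability argument of Lemma~\ref{Clos_Dislem} (the typical point sees no hole center within distance $D$, and the $k$ nearest points of a PPP of intensity $\lambda_1$ carry the stated ordered-distance density) is $f_{V_1\cdots V_k}(v_1,\dots,v_k)=(2\pi\lambda_1)^k v_1\cdots v_k\exp(-\pi\lambda_1(v_k^2-D^2))$ on $D<v_1<\dots<v_k$. Taking $\E_{V_1\cdots V_k}[\,\cdot\,]$ of the factored conditional transform yields the $k$-fold integral in the statement. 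The only genuinely delicate points are the measure-zero disjointness bookkeeping and checking that the exterior-ball radii $\rho_i$ are tuned so that $\Omega_{\rm o}$ remains inside $\cup_{i}\ncalC_i$ while still excising the double-counted overlap; the remainder is a routine PGFL evaluation.
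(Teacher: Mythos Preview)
Your proposal is correct and follows essentially the same approach as the paper: overestimate the interference by carving out only the truncated regions $\nbd(\nby_i,D)$ from $\Phi_2$, apply the PGFL of $\Phi_2$ conditional on the hole centers, reduce each regional integral to a radial one via the arc-length/cosine-law argument of Lemma~\ref{Lemma1_Proof}, and decondition against the joint density of the $k$ nearest ordered distances. If anything, you are more explicit than the paper about the pairwise disjointness of the $\nbd(\nby_i,D)$ (via the radial separation $r\ge v_{i-1}+D$ versus $r\le v_{i-1}+D$), which the paper uses without spelling out.
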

\begin{proof}  
See Appendix~\ref{app: H}.
\end{proof}
In the numerical results section, we will show that considering $k=2$, i.e., only the two closest holes from the typical point, results in a remarkably tight bound. Note that for $k=2$, the expression is also fairly tractable. For brevity, that expression is not stated separately.

Now, in Theorems~\ref{Lemma2_nProof},~\ref{Lemma2_kClosestBnd}, we have tried to handle the overlaps in such a way that the resulting expressions: (i) can be claimed as lower bounds to the  Laplace transform, and (ii) tighten the lower bound provided by Theorem~\ref{thm :nearhole}. One last question that we address before concluding this section is whether it is possible to handle the effect of overlaps in the {\em average sense}. For this, we revisit the upper bound of Theorem~\ref{IccThm3}, which was derived by carving out holes from the baseline process $\Phi_2$ individually without caring about the overlaps between them. This led to the possible removal of some points multiple times, thereby leading to an underestimation of interference. We compensate this {\em over-removal}, by rescaling the second term of Theorem~\ref{IccThm3}, which is the one that captures the effect of removing points from $\Phi_2$. The rescaling term is derived by estimating the average pairwise overlap area between circles. More details of the approach are provided in the proof in Appendix~\ref{app: G}. Note that the resulting expression in this case is the same as that of Theorem~\ref{IccThm3}, except a scaling factor of $1-{\min(\frac{{\lambda_1{\pi}D^2}}{2},\frac{1}{2})}$ that appears in the second term. The key downside to this approach compared to Theorem~\ref{IccThm3} is that it results in an {\em approximation} unlike Theorem~\ref{IccThm3}, where the result was shown to be a {\em bound}. 
\begin{prop} [{New Approximation}]\label{IccThm4} The Laplace transform of interference at a typical point in a PHP can be approximated as  $\ncalL_{I}(s)
\simeq$
\begin{align} \nonumber
&\exp\left(-\pi\lambda_2 \frac{{(sP)}^{2/\alpha}}{\mathrm{sinc}(2/\alpha)}\right) 
\exp\bigg[-2\pi\lambda_1\int_{D}^{\infty} \bigg(1-\exp\bigg(
\\  \label{CovPrbPHP_2}  &{f(v)\times \bigg(1-  \min\bigg(\frac{{\lambda_1{\pi}D^2}}{2},\frac{1}{2}\bigg)\bigg)}\bigg)\bigg)v\mathrm{d}v\bigg]
\end{align}
where $f(v)=\int_{{v-D}}^{{v+D}}{\arccos\left(\frac{r^2+v^2-D^2}{2vr}\right)}\frac{2\lambda_2}{1+\frac{r^\alpha}{Ps}} r\mathrm{d}r$.
\end{prop}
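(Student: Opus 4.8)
The plan is to start from the upper bound of Theorem~\ref{IccThm3} and correct its second exponential factor, $\exp(-2\pi\lambda_1\int_D^\infty(1-\exp(f(v)))v\,\mathrm{d}v)$, for the over-removal of points caused by overlapping holes. Recall that this factor was obtained by carving each hole out of $\Phi_2$ independently through the PGFL of $\Phi_1$, so that $f(v)$ is exactly the Laplace-exponent contribution of the interferers deleted by a single hole whose center sits at distance $v$ from the origin. When two holes overlap, the interferers in the shared region are deleted twice, so $f(v)$ over-counts the deletion by the amount coming from the overlapped part of the hole. My approach is to estimate, in the average sense, the fraction of a typical hole that is redundantly removed, and to damp $f(v)$ by that fraction before substituting it back into the expression of Theorem~\ref{IccThm3}.

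First I would quantify the mean pairwise overlap. Conditioning on a reference hole $\nbb(\nbo,D)$ and applying Campbell's theorem to $\Phi_1$, the expected total area this hole shares with the remaining holes is $\lambda_1\int_{\R^2}A_{\rm lens}(\|\nbx\|)\,\mathrm{d}\nbx$, where $A_{\rm lens}(d)$ is the lens area of two radius-$D$ disks whose centers are separated by $d$. Rather than integrating the explicit lens formula, I would use the indicator-convolution identity $\int_{\R^2}A_{\rm lens}(\|\nbx\|)\,\mathrm{d}\nbx=\big(\int_{\R^2}\mathbf{1}[\|\nbz\|<D]\,\mathrm{d}\nbz\big)^2=(\pi D^2)^2$, which gives an expected total overlap area of $\lambda_1(\pi D^2)^2$, i.e. a fraction $\lambda_1\pi D^2$ of one hole's area $\pi D^2$.

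Next I would turn this into the scaling factor. Since each overlapping lens is shared by two holes, I would attribute half of it to each, so the average redundantly-removed fraction per hole is $\lambda_1\pi D^2/2$; capping it by $1/2$ (a single hole can be credited at most half of its own area under the $50/50$ split) yields $c=\min(\lambda_1\pi D^2/2,\tfrac12)$. Replacing $f(v)$ by $f(v)(1-c)$ in the upper bound of Theorem~\ref{IccThm3} then gives the stated expression~\eqref{CovPrbPHP_2}.

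The main obstacle, and the reason the result is stated as an approximation ($\simeq$) rather than a provable bound, is the step that collapses the overlap effect into a single scalar multiplying $f(v)$. The true over-removal lives on a two-dimensional lens whose position relative to the origin governs the path loss of the affected interferers, whereas damping $f(v)$ linearly by $(1-c)$ implicitly assumes the deleted interferers' Laplace contribution scales proportionally with the removed intensity and is spread uniformly over the hole. The half-attribution and the cap at $1/2$ are likewise heuristic devices to keep $c\in[0,\tfrac12]$, so that $1-c\in[\tfrac12,1]$ remains a sensible damping factor even in the dense-hole regime; making any of these steps exact is precisely what would forfeit tractability, and is instead handled by other means in Theorems~\ref{thm :nearhole}--\ref{Lemma2_kClosestBnd}.
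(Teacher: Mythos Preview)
Your proposal is correct and follows essentially the same approach as the paper: start from the upper bound of Theorem~\ref{IccThm3}, estimate the mean pairwise overlap area per hole, attribute half of it to each hole, cap the resulting fraction, and rescale $f(v)$ by $(1-c)$. The one noteworthy difference is computational: the paper obtains the expected total overlap $\lambda_1\pi^2 D^4$ by integrating the explicit lens-area formula against the uniform density of neighboring centers in $\nbb(\nby,2D)$ (first getting $\bar{A}=\pi D^2/4$, then multiplying by the mean count $4\pi\lambda_1 D^2$), whereas you reach the same number in one line via the convolution identity $\int_{\R^2}A_{\rm lens}(\|\nbx\|)\,\mathrm{d}\nbx=(\pi D^2)^2$ combined with Campbell's theorem; your shortcut is cleaner but the underlying heuristic and the final scaling factor $1-\min(\lambda_1\pi D^2/2,\tfrac12)$ are identical.
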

\begin{proof}  See Appendix~\ref{app: G}.
\end{proof}
Comparison of this result with the bounds derived in this paper shows that handling overlaps in the average sense may not work particularly well, especially when the overlaps are significant. This is because such results do not capture the local neighborhood of the typical point as carefully as the bounds derived in this paper. We now move on the numerical results section, where more such insights about the relative accuracy of bounds and approximations are presented.

\begin{figure*}[t!]
\hfill
\subfigure[]{\includegraphics[width=.23\linewidth, height=0.23\linewidth]{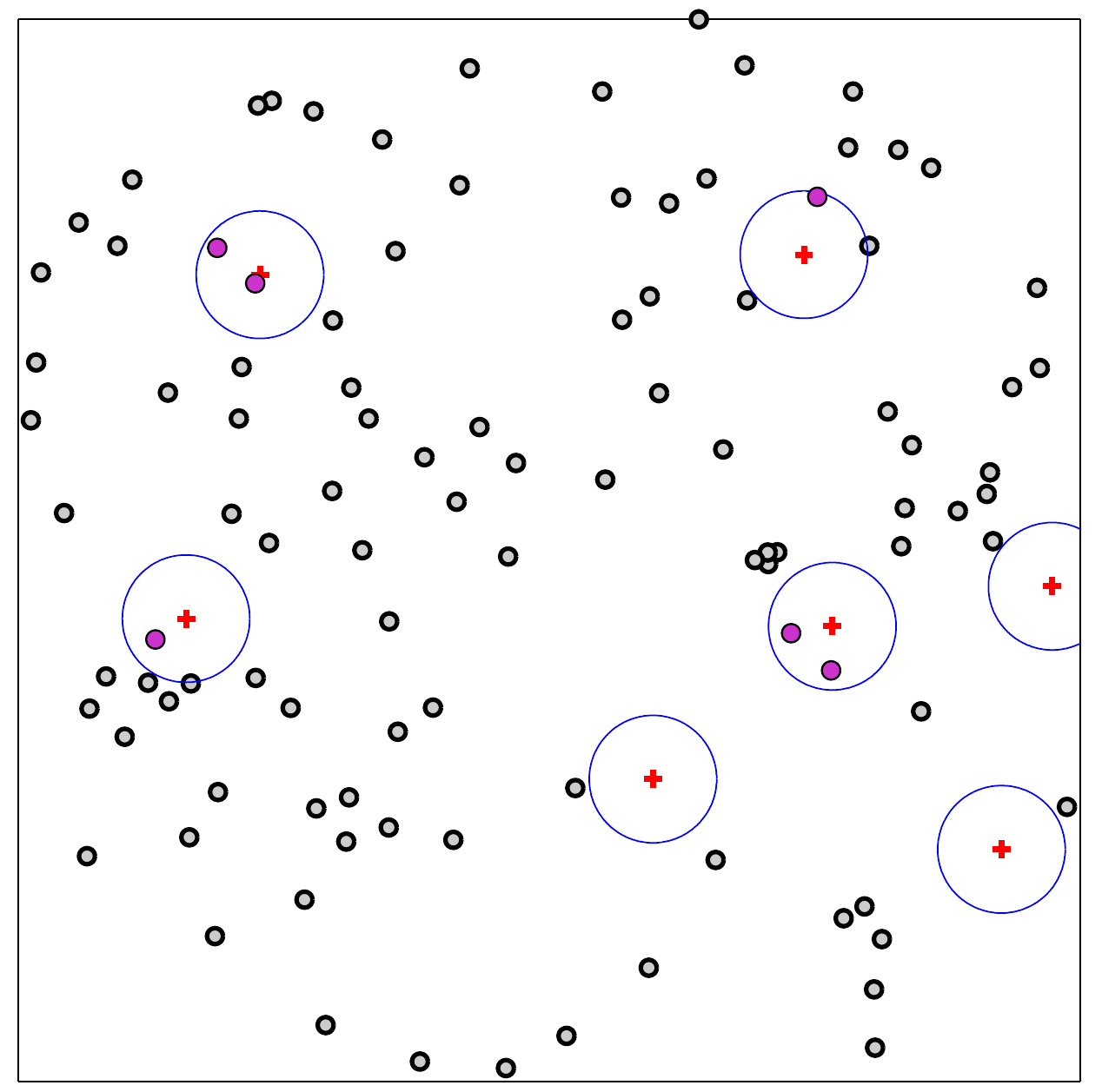}}
\label{netreal1}
\hfill
\subfigure[]{\includegraphics[width=.23\linewidth, height=0.23\linewidth]{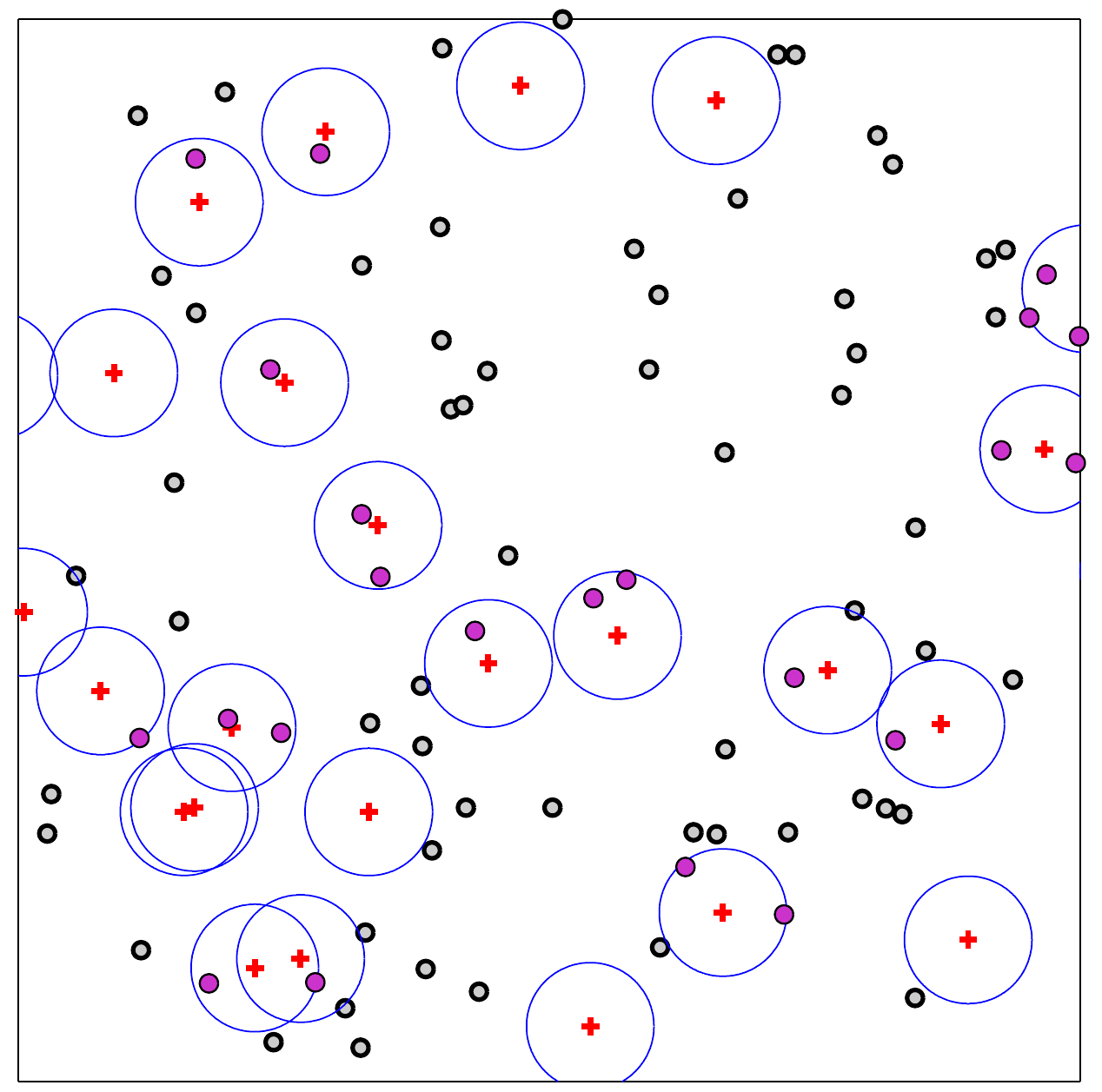}}
\label{netreal2}
\hfill
\hfill
\subfigure[]{\includegraphics[width=.23\linewidth, height=0.23\linewidth]{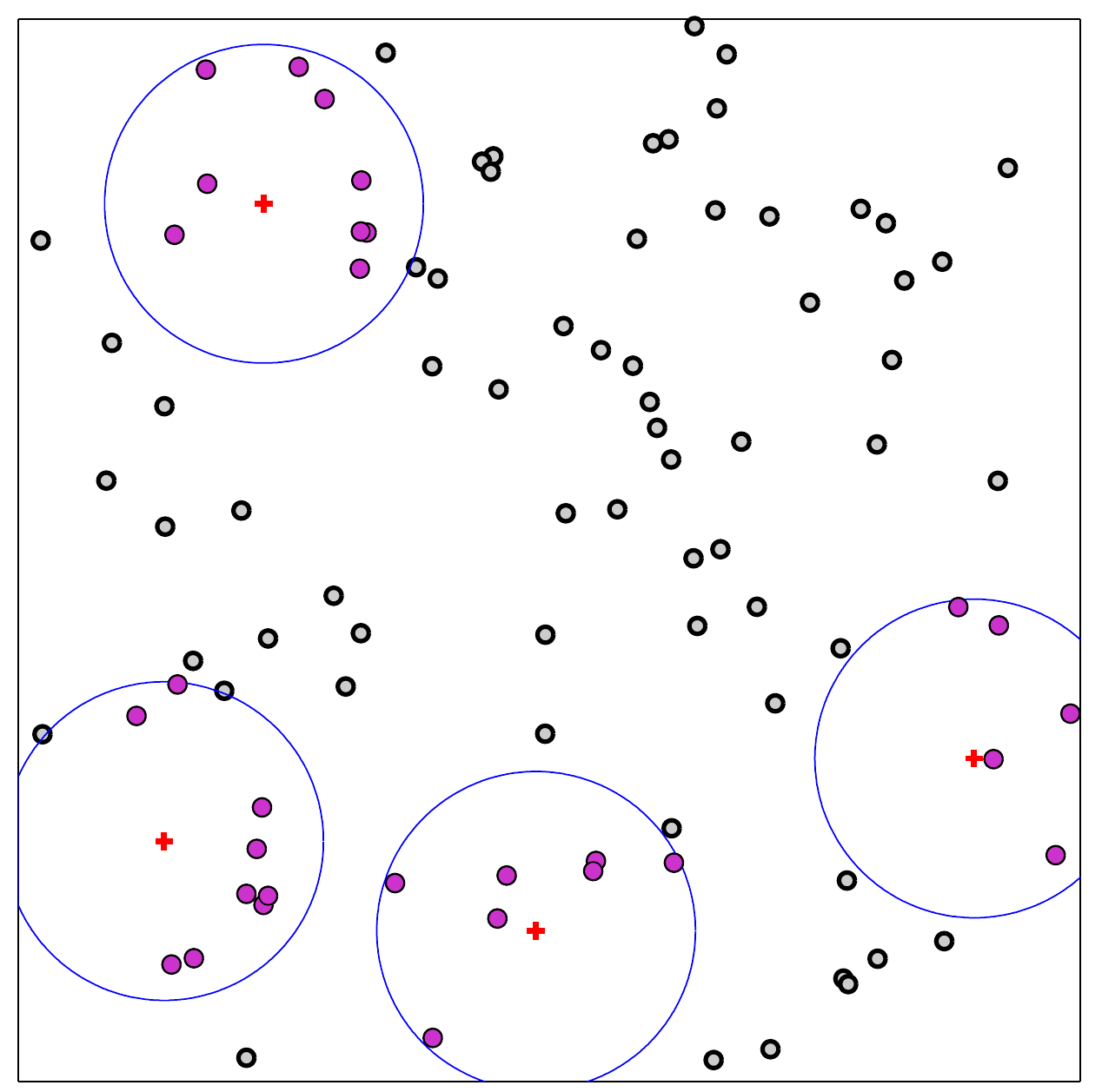}}
\label{netreal3}
\hfill
\subfigure[]{\includegraphics[width=.23\linewidth, height=0.23\linewidth]{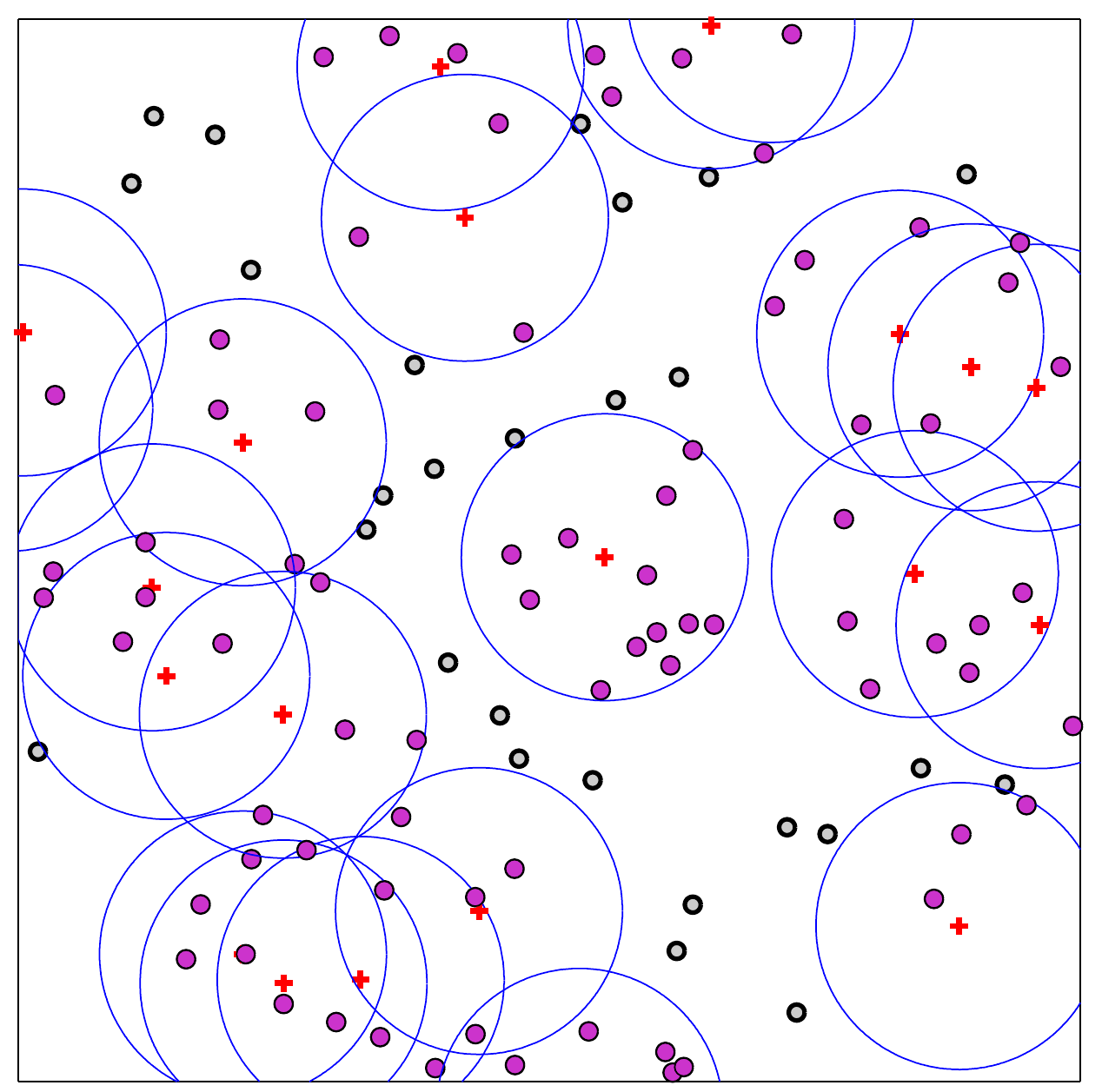}}
\hfill
 \caption{The PHP network model (a) First configuration: \LDSH, (b) Second configuration: \HDSH, (c) Third configuration: \LDLH,   (d)  Fourth configuration: \HDLH.}
 \label{netreal4}
\end{figure*} 

          \begin{figure}[t!]
  \centering{
              \includegraphics[width=.9\linewidth]{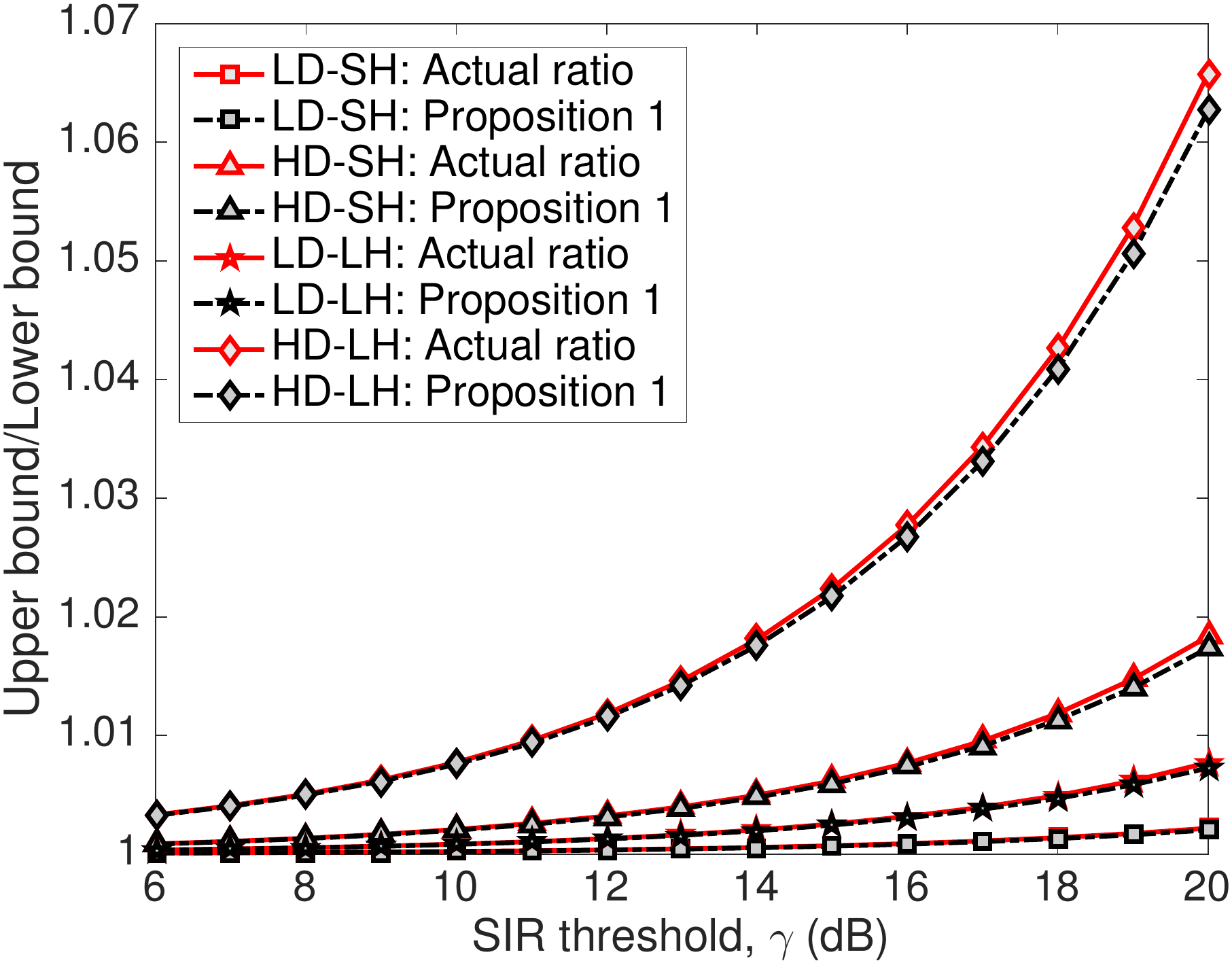}
              \caption{Ratio of the proposed upper and lower bounds derived in Theorems~\ref{IccThm3} and  \ref{thm :nearhole}, respectively.}
\label{UpLwBnd1}}
          \end{figure}   

\begin{figure}[t!]
  \centering{
              \includegraphics[width=.9\linewidth]{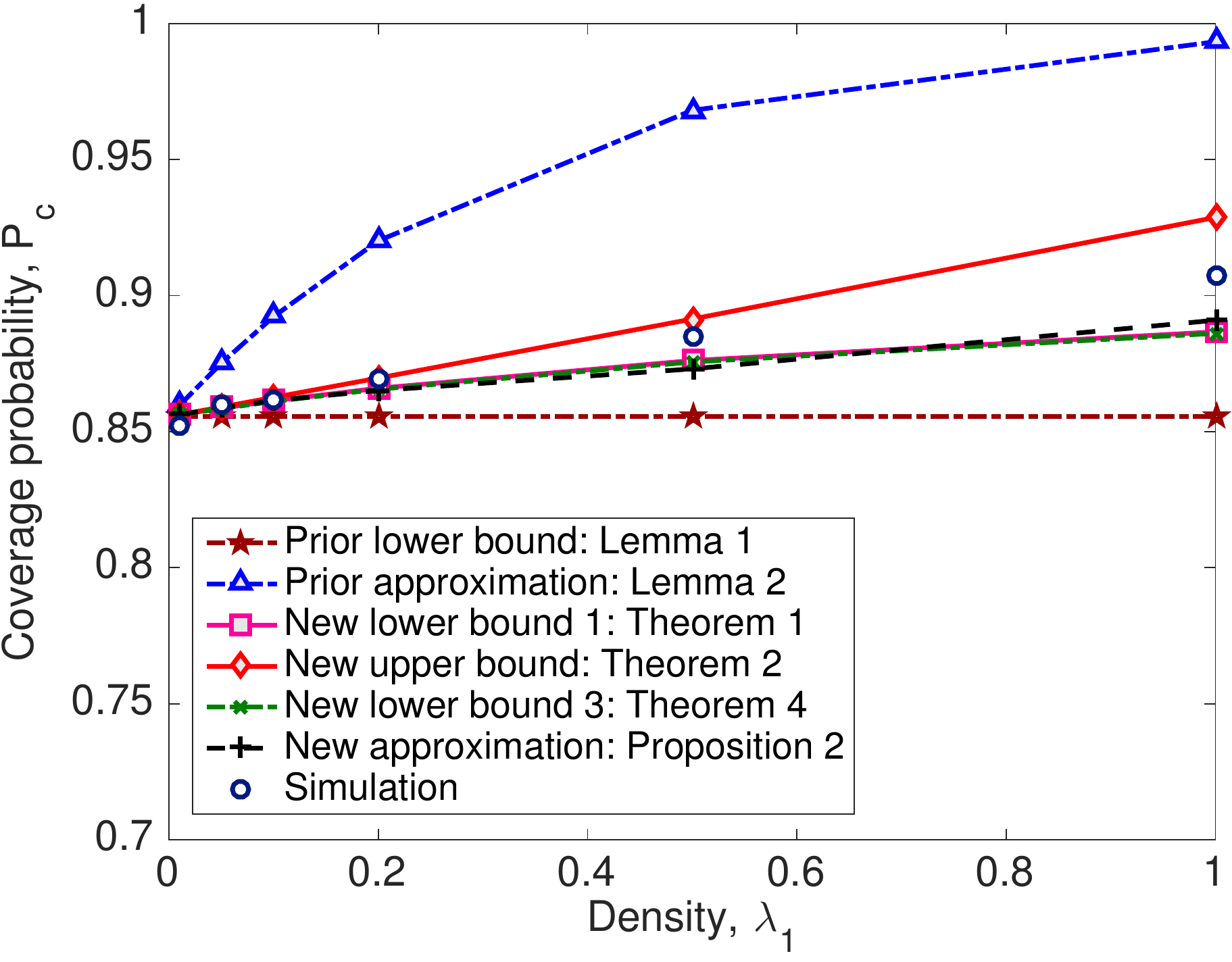}
\caption{Analytical and simulation results for the the coverage probability as a function of the density $\lambda_1$ ($D=1$).}
\label{LISCDfixLchg10}}
          \end{figure}
\begin{figure}[t!]
  \centering{
              \includegraphics[width=.9\linewidth]{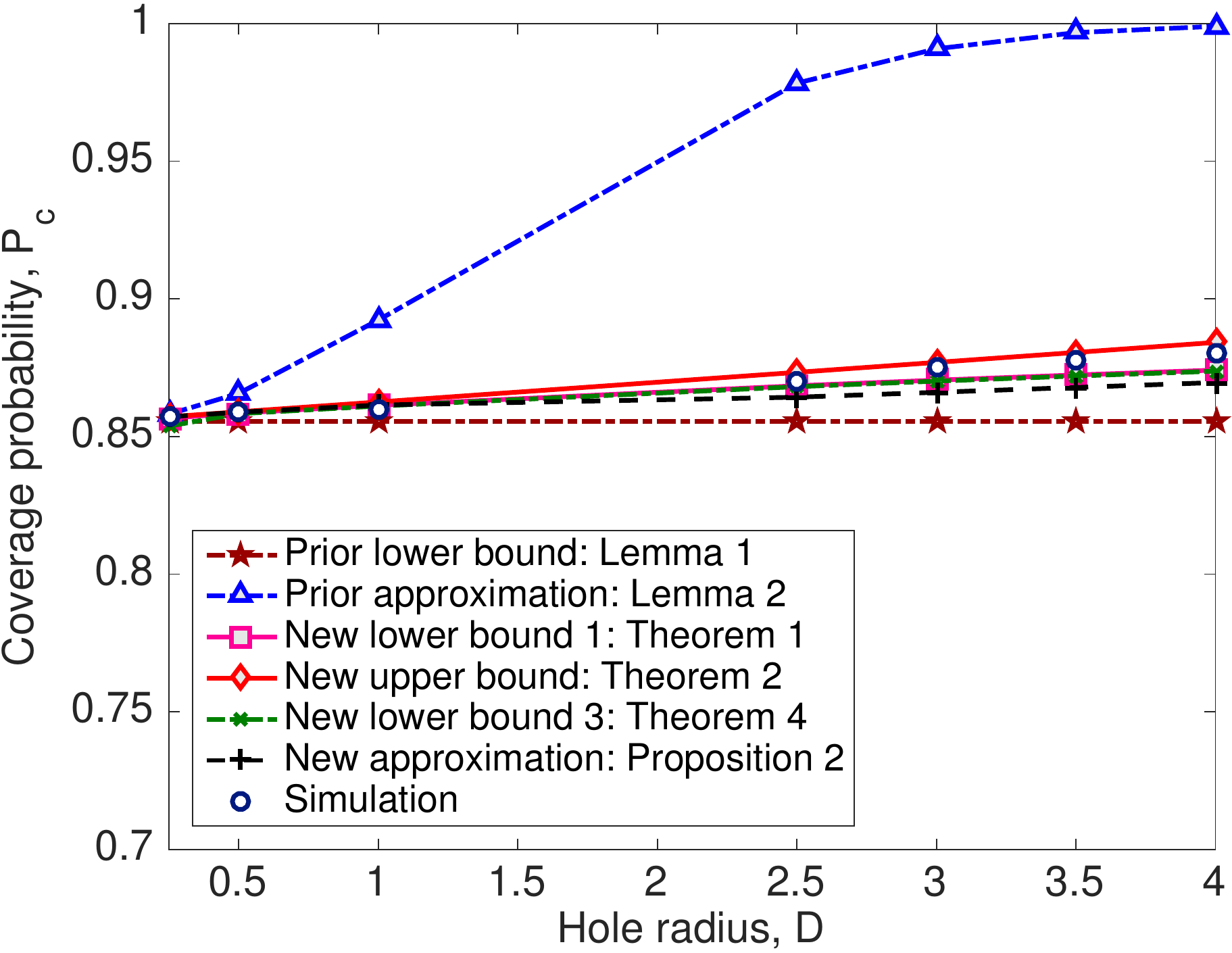}
              \caption{Analytical and simulation results for the coverage probability as a function of hole radius $D$ ($\lambda_1=0.1$).}
\label{LISClfixDchg05}}
          \end{figure}

                     \begin{figure}[t!]
  \centering{
              \includegraphics[width=.9\linewidth]{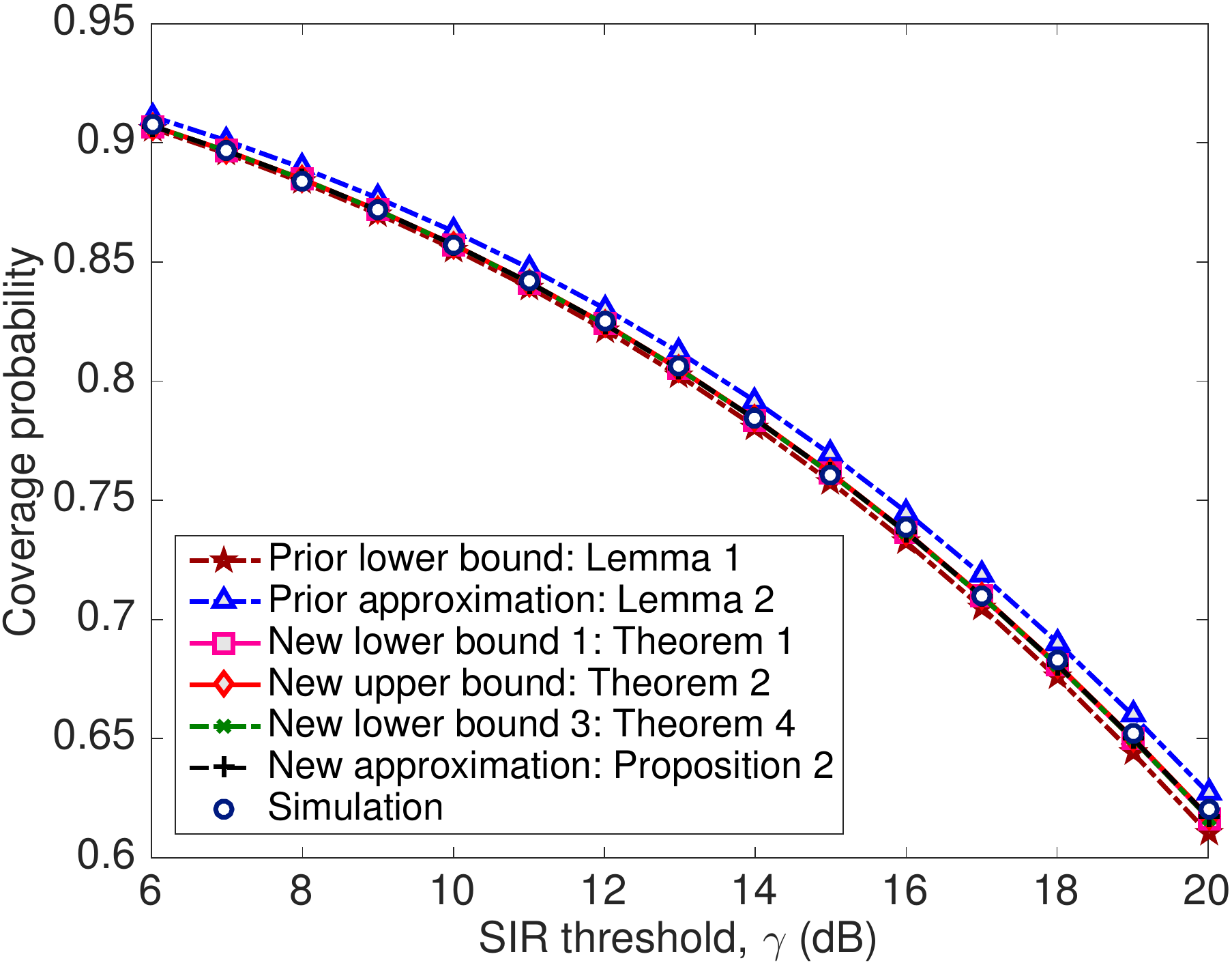}
              \caption{Analytical and simulation results for the coverage probability in \LDSH~case ($\lambda_1=0.05$ and $D=0.6$).}
\label{LISC_C1_001}}
          \end{figure}
                     \begin{figure}[t!]
  \centering{
              \includegraphics[width=.9\linewidth]{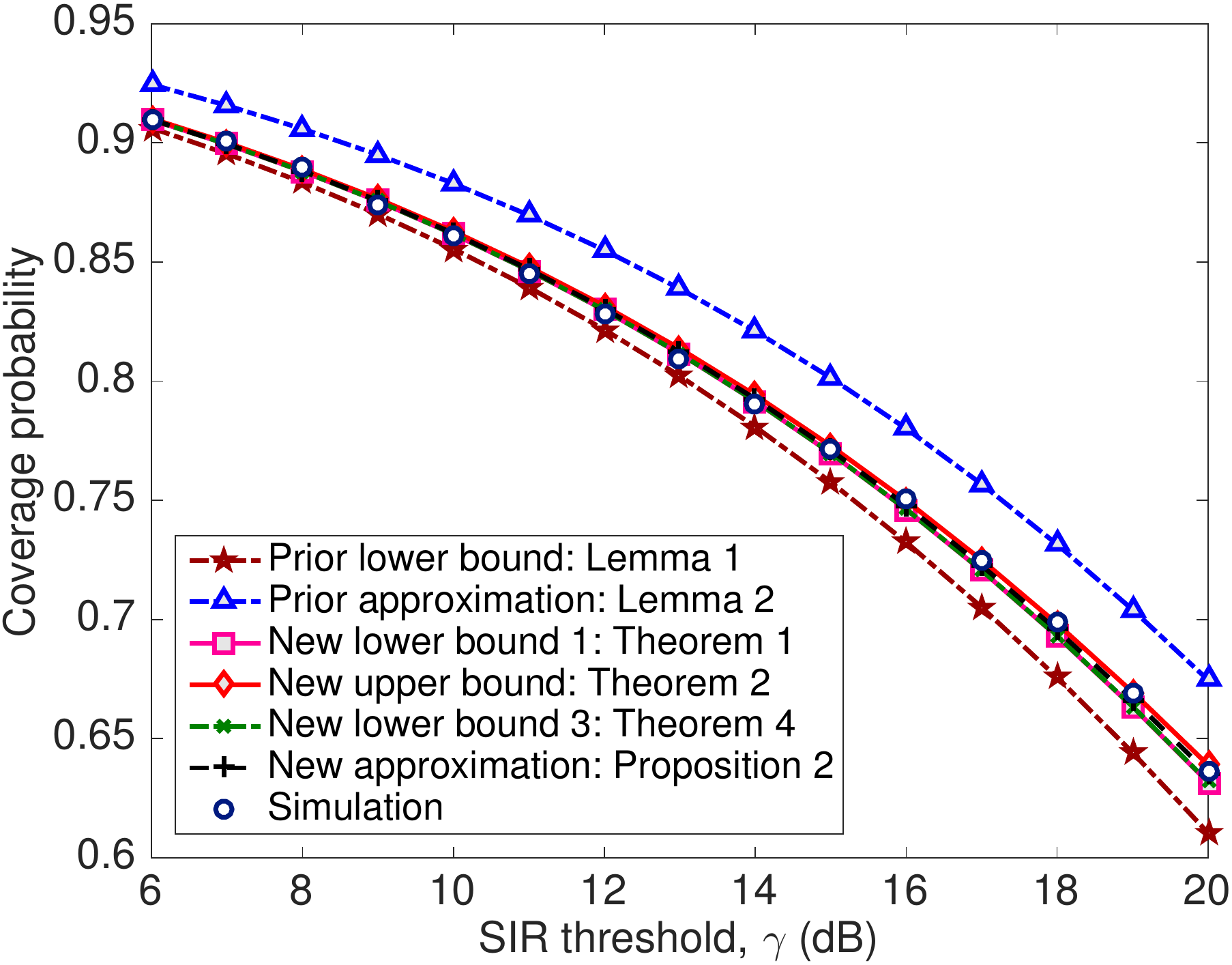}
              \caption{Analytical and simulation results for the coverage probability in \HDSH~case ($\lambda_1=0.2$ and $D=0.6$).}
\label{LISC_C2_002}}
          \end{figure}

                     \begin{figure}[t!]
  \centering{
              \includegraphics[width=.9\linewidth]{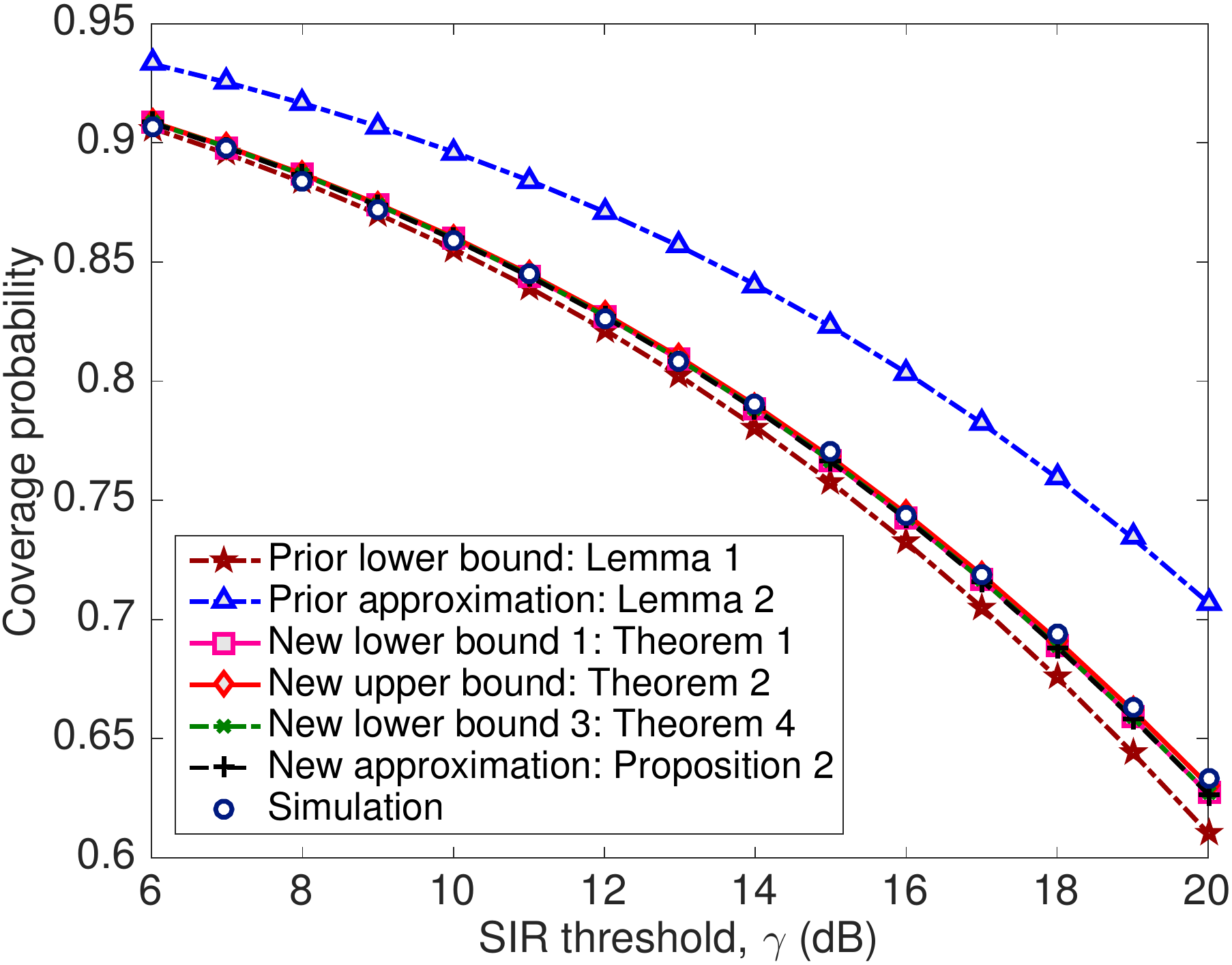}
              \caption{Analytical and simulation results for the coverage probability in \LDLH~case ($\lambda_1=0.05$ and $D=1.5$).}
\label{LISC_C3_003}}
          \end{figure}
                     \begin{figure}[t!]
  \centering{
              \includegraphics[width=.9\linewidth]{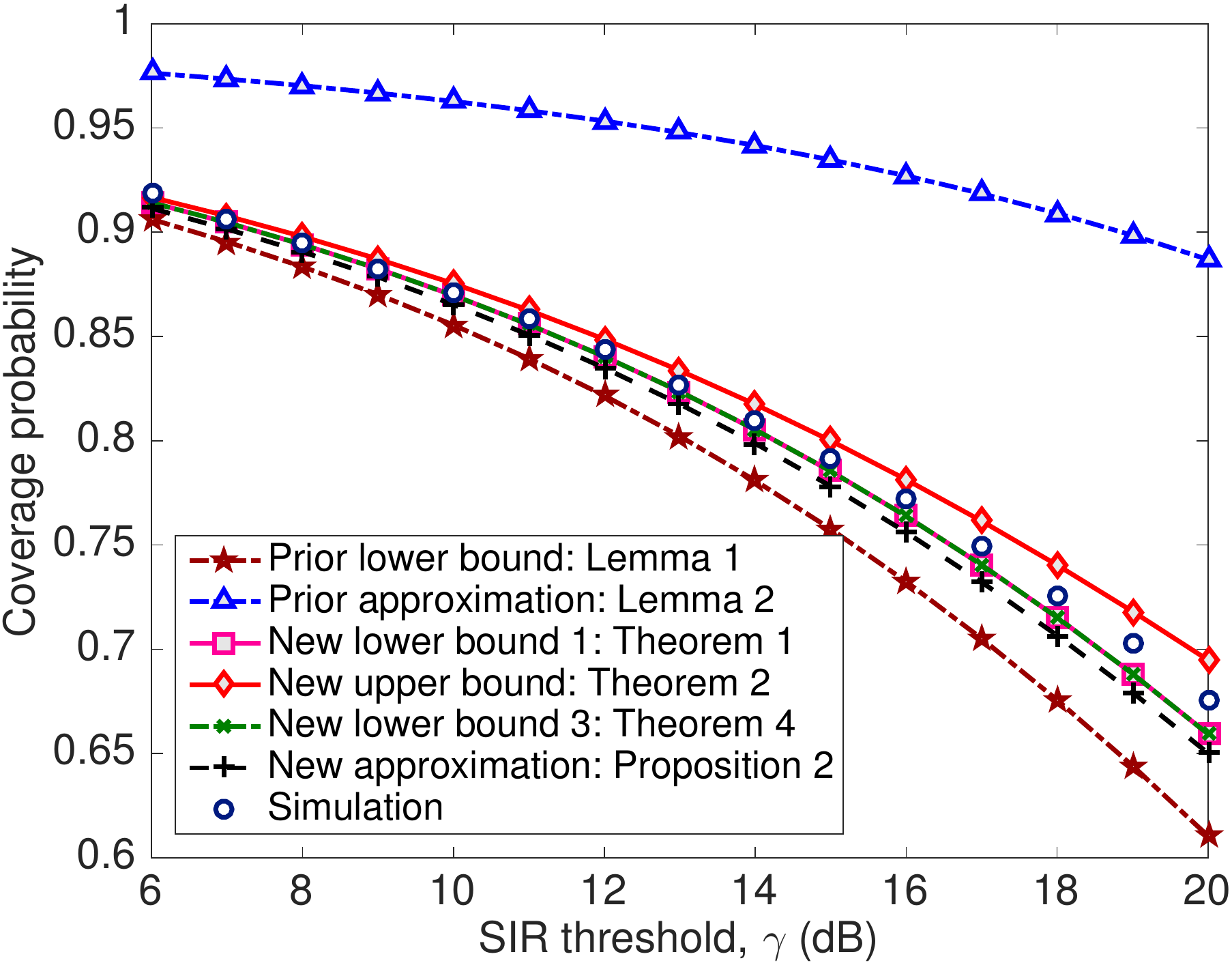}
              \caption{Analytical and simulation results for the coverage probability in \HDLH~case ($\lambda_1=0.2$ and $D=1.5$).}
\label{LISC_C4_004}}
          \end{figure}      
  
\section{Numerical Results and Discussion}
\label{Discuss_PHP}
There are three main system parameters that determine the interference experienced by the typical point in a PHP: density $\lambda_1$ of the holes, density $\lambda_2$ of the baseline PPP, and the radii $D$ of the holes. Based on the relative values of these parameters, we identify four main network {\em configurations}, which are illustrated in Fig. \ref{netreal4}. 
We define the possible configurations as
\LDSH: configuration with low density of holes and small holes;  
\HDSH: configuration with high density of holes and small holes; 
\LDLH: configuration with low density of holes and large holes;  
\HDLH: configuration with high density of holes and large holes. 
Clearly, the configuration where the holes are small and sparse (LD-SH case) is more benign than the configuration in which the holes are small and dense (HD-SH). Similarly, the  configuration where holes are large and sparse (LD-LH case) is more benign than the configuration where the holes are both large and dense (HD-LH case). Therefore, the result that works well in the HD-SH configuration is expected to work in the LD-SH configuration as well. The same is true for LD-LH and HD-LH cases.

Simulations are performed over circular region with radius $40 m$ and results are averaged over $5\times10^4$ iterations. 
Unless otherwise specified, we set the network parameters as follows: $\lambda_2=1$, $\alpha=4$, $P=1$, $r_0=0.1$, $\gamma=10$ dB. 
We compare our proposed bounds and the new approximation with the first-order statistic approximation given by Lemma~\ref{thm :appr}, 
and the PPP-based bound given by Lemma~\ref{thm :lowbound} where $\Psi$ is approximated by $\Phi_2$.
Before going into more details comparisons, we  demonstrate the tightness of the lower and upper bounds derived in Theorems \ref{thm :nearhole} and~\ref{IccThm3} by plotting their ratio and its approximation (derived in Proposition~\ref{RatioBnd}) in Fig. \ref{UpLwBnd1}. In addition to validating the tightness of the approximation given by Proposition~\ref{RatioBnd}, this result shows that the ratio in all cases of interest is close to one, which demonstrates the tightness of both the bounds. Note that, as expected, the ratio is comparatively higher when the holes are large and dense (HD-LH case). 

We now compare the proposed bounds and approximations with the numerical results and the known approaches in terms of coverage probability. 
As demonstrated in \eqref{eq:Pcdef}, the coverage probability for our setup is simply the Laplace transform of interference evaluated at $s=\frac{\gamma r_0^\alpha}{P}$. Note that when we substitute $s=\frac{\gamma r_0^\alpha}{P}$ in any of the Laplace transform expressions derived in this paper, we notice that the resulting expression is independent of $P$. This is expected, because the $\mathtt{SIR}(r_0)$ expression given by \eqref{SIR} is indeed independent of $P$.

In Fig. \ref{LISCDfixLchg10}, we plot the coverage probability of a typical receiver as a function of hole density $\lambda_1$ assuming all other parameters are fixed (we assume $D=1$). Small values of $\lambda_1$ result in LD-SH configuration, whereas high values result in the HD-SH configuration. In Fig. \ref{LISClfixDchg05}, we conduct the same study but instead of varying $\lambda_1$, we now vary the hole radius $D$ while fixing $\lambda_1 = 0.1$. The low values of $D$ result in  HD-SH configuration, whereas the high values result in HD-LH configuration. Comparison of the proposed results with the simulations reveal that all the bounds and approximations proposed in this paper are surprisingly tight, even for the extreme configuration like HD-LH, where the overlaps are significant. The lower bounds work even better. We also notice that the prior results (given by Lemmas~\ref{thm :lowbound} and~\ref{thm :appr}) deviate significantly when the overlaps in the holes are significant. In particular, the approximation given by Lemma~\ref{thm :appr} becomes very loose. As discussed in Remark~\ref{rem:1}, this is because its derivation involved independent thinning of the interference field, which distorts the local neighborhood of the typical receiver. 

\begin{figure}[t!]
  \centering{
              \includegraphics[width=.9\linewidth]{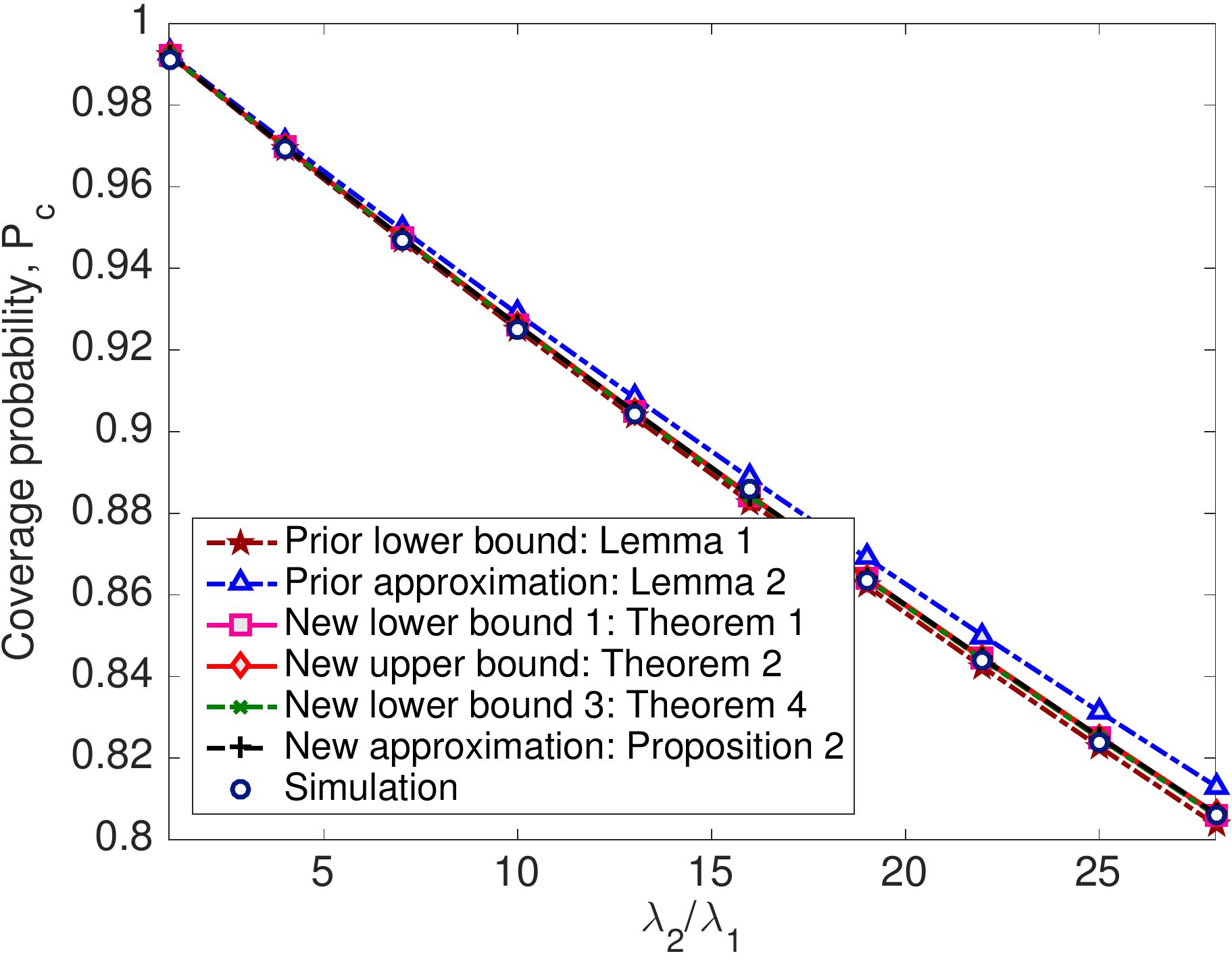}
              \caption{Analytical and simulation results for the coverage probabilities of the
PHP users as a function of $\lambda_2/\lambda_1$ under configuration \LDSH~($\lambda_1=0.05$ and $D=0.6$).}
\label{PoutSC_cfg1_1}}
          \end{figure}
\begin{figure}[t!]
  \centering{
              \includegraphics[width=.9\linewidth]{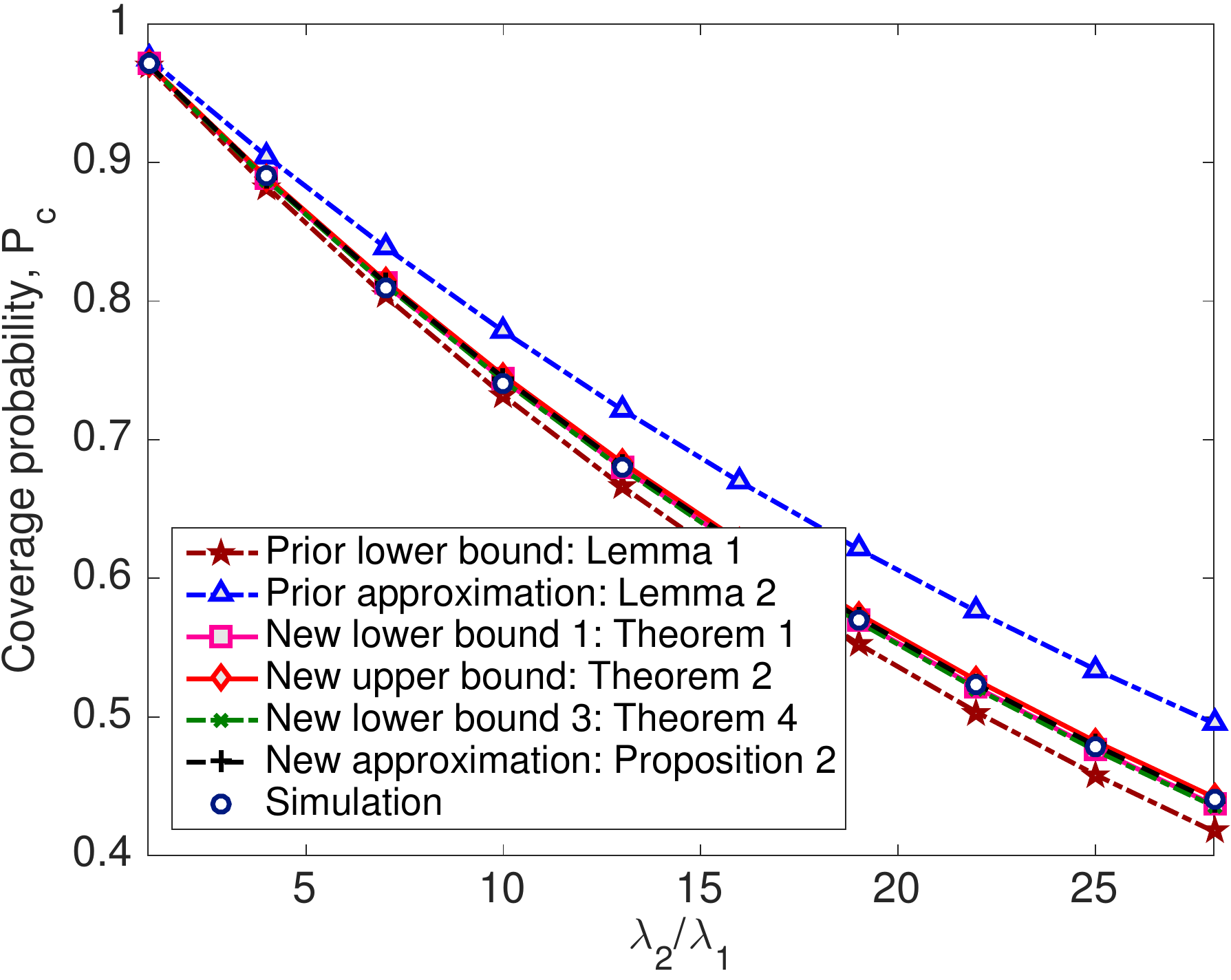}
              \caption{Analytical and simulation results for the coverage probabilities of the
PHP users as a function of $\lambda_2/\lambda_1$ under configuration \HDSH~($\lambda_1=0.2$ and $D=0.6$).}
\label{PoutSC_cfg2}}
          \end{figure}

\begin{figure}[t!]
  \centering{
              \includegraphics[width=.9\linewidth]{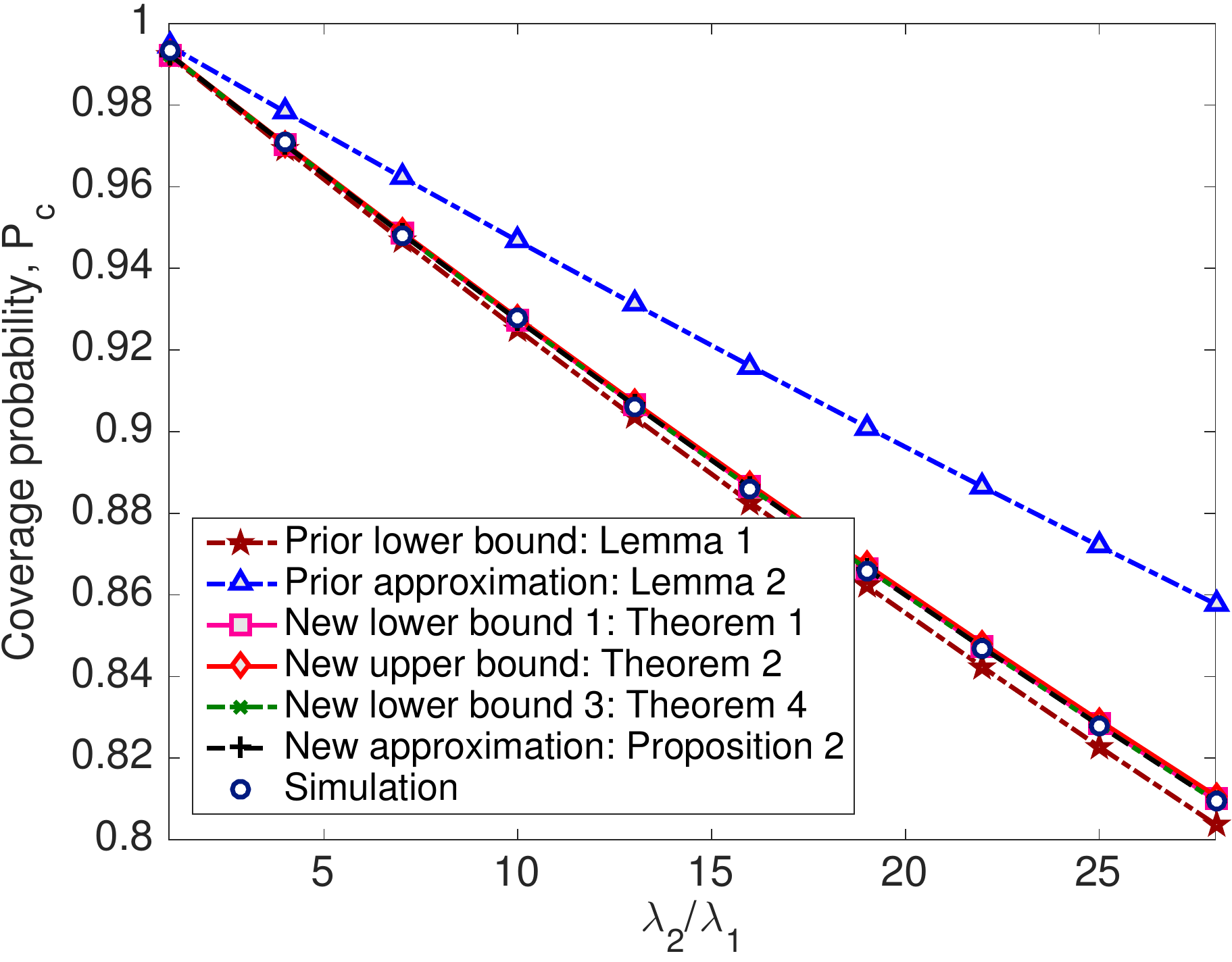}
              \caption{Analytical and simulation results for the coverage probabilities of the
PHP users as a function of $\lambda_2/\lambda_1$ under configuration \LDLH~($\lambda_1=0.05$ and $D=1.5$).}
\label{PoutSC_cfg3_3}}
          \end{figure} 
\begin{figure}[t!]
  \centering{
              \includegraphics[width=.9\linewidth]{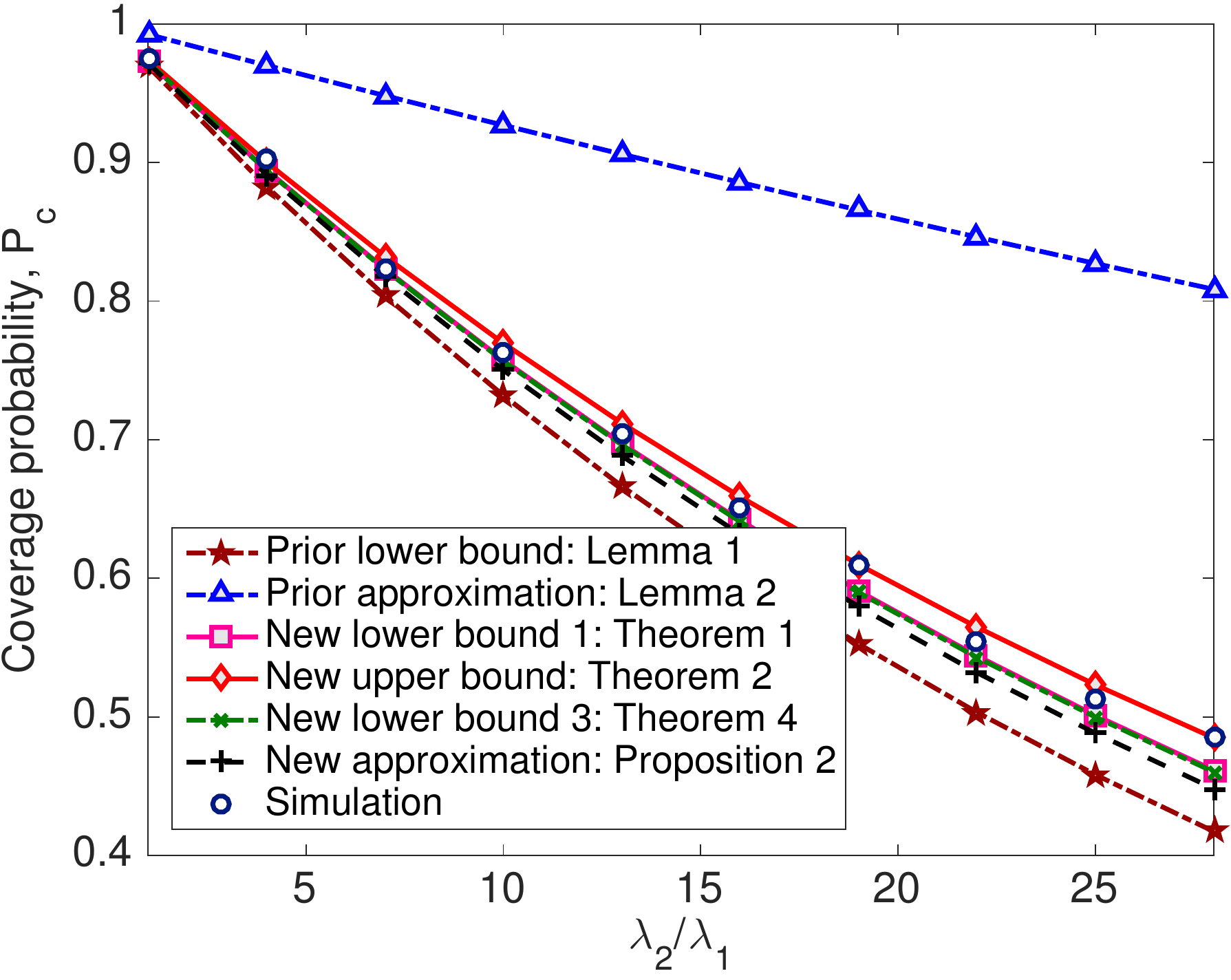}
              \caption{Analytical and simulation results for the coverage probabilities of the
PHP users as a function of $\lambda_2/\lambda_1$ under configuration \HDLH~($\lambda_1=0.2$ and $D=1.5$).}
\label{PoutSC_cfg4_4}}
          \end{figure}

We now plot the analytical and simulation results for the coverage probability of the typical receiver as a function of the $\sir$ threshold in the four possible configurations in Figs.~\ref{LISC_C1_001}--\ref{LISC_C4_004}. Fig. \ref{LISC_C1_001} shows the results for \LDSH~case with small holes of radius $D=0.6$ and low density of $\lambda_1=0.05$, while results for the \HDSH~case are shown in Fig. \ref{LISC_C2_002} with parameters $D=0.6$ and $\lambda_1=0.2$.
Further, Fig. \ref{LISC_C3_003} shows results for the \LDLH~case with large holes of radius $D=1.5$ and low density of $\lambda_1=0.05$, while the results for the \HDLH~case with parameters $D=1.5$ and $\lambda_1=0.2$ are shown in Fig. \ref{LISC_C4_004}. 
The plots again confirm the accuracy of our results and show that the first-order statistic approximation given by Lemma~\ref{thm :appr} is rather loose while our proposed bounds lead to tight upper and lower bounds in all cases. Interestingly, all the proposed results work so well that they are nearly indistinguishable in all configurations except the most extreme one of HD-LH (Fig. \ref{LISC_C4_004}). In this configuration, we first notice that both the lower bounds given by Theorems~\ref{thm :nearhole} and \ref{Lemma2_kClosestBnd} work equally well, which means considering even a single hole in the interference field accurately is good enough for the accurate characterization of interference. As expected, we also notice that the approximation derived by handling the overlaps in the average sense in Proposition~\ref{IccThm4} does not work better than the proposed bounds in the extreme configuration of HD-LH (Fig. \ref{LISC_C4_004}). This is because any {\em average-based} arguments do not necessarily capture the local neighborhood of the typical point as well as the bounds do.

For completeness, we also plot the analytical and simulation results for the coverage probability of the typical receiver in a PHP as a function of $\lambda_2/\lambda_1$ in the four configurations in Figs. \ref{PoutSC_cfg1_1}--Fig. \ref{PoutSC_cfg4_4}. In Figs. \ref{PoutSC_cfg1_1} and  \ref{PoutSC_cfg2}, design parameters are set in order to simulate \LDSH~and \HDSH~cases, respectively. In particular, we assume $D=0.6$ and $\lambda_1=\{0.05, 0.2\}$. Figs. \ref{PoutSC_cfg3_3} and  \ref{PoutSC_cfg4_4}  depict results for the \LDLH~ and \HDLH~cases. Here we consider $D=1.5$ and $\lambda_1=\{0.05, 0.2\}$. As was the case in the above results, our proposed lower and upper bounds provide a remarkably accurate characterization of coverage probability. This is because the local neighborhood of the typical node is carefully preserved while deriving these bounds. On the other hand, the prior results, in particular the first-order statistic approximation, leads to a fairly loose result. 

\section{Conclusions}
\label{Concl}
In this paper, we have focused on the accurate performance characterization of a typical user in a wireless network that is modeled as a PHP. This model is of particular interest in scenarios where interference management techniques introduce {\em spatial separation} among active transmitters in the form of holes or exclusion zones. In terms of the technical results, we have provided new easy-to-use provable lower and upper bounds on the Laplace transform of interference experienced by a typical user in a PHP. 
In addition to accurately characterizing
the interference power, these bounds immediately characterize the coverage probability of a typical user in the case where all the wireless links experience independent Rayleigh fading. Since the prior work has mostly focused on
reducing the PHP to a PPP either by ignoring the holes or by matching the PPP density to that of a PHP, to the best of our knowledge, the proposed bounds are the tightest known
bounds for the Laplace transform of interference in a PHP.
For the analysis, we proposed a new approach in which the holes are dissolved in such a way that a PHP is reduced to an equivalent (and more tractable) non-homogeneous PPP.
The key in deriving tight bounds was to preserve the local neighborhood around the typical point while simplifying the
far field to attain tractability. 
The tightness of the bounds is
demonstrated analytically as well as numerically by comparing with simulations and known approaches. These results have numerous applications in a variety of wireless networks where
interference management is performed by spatially separating the active links, such as in cognitive radio and D2D networks.

Since our main emphasis was on characterization of the interference power, we assumed that the serving transmitter for the typical receiver is located at a fixed distance. This corresponds to an {\em ad hoc} network scenario. Relaxation and generalization of this assumption is a fruitful area of future investigation. For instance, if the receiver of interest is a randomly chosen point in $\nbbR^2$ and its serving transmitter is its closest point in a PHP, this setup can be used to study the performance of a cellular network modeled as a PHP. Another direction of future work is the extension of the current model to study exclusion zones with different shapes and sized, e.g., circles with different radii. Finally, the holes in a PHP are driven by a PPP. Extending this to other point processes, such as a Mat\'{e}rn process, is another promising direction of future work.%

\appendix
\subsection{Proof of Lemma~\ref{thm :lowbound}}
\label{app: A}
In this case, the PHP $\Psi$ is approximated by the baseline PPP $\Phi_2$, which reduces the problem to the well-studied problem of deriving Laplace transform of interference originating from the PPP field of interferers~\cite{haenggi2012stochastic}. For completeness, the sketch of the derivation is provided next.  
\begin{align}
\label{LtI asym_1}
&\ncalL_{{I}}(s)\geq\nbbE \left[\exp\left(-s{\!}\sum_{\nbx\in{\Phi_\mathrm{2}}}{\!}Ph_\nbx\|\nbx\|^{-\alpha} \right) \right]\\\nonumber &\stackrel{\text{(a)}}{=}
\exp\left({\!\!}-\lambda_2 \int_{\nbbR^2}\!1-\mathbb{E}_{h_\nbx}\bigg[\exp(-s P h_\nbx \|\nbx\|^{-\alpha})\bigg]\mathrm{d}\nbx \right)
\\\nonumber &\stackrel{\text{(b)}}{=}
\exp\left({\!\!}-\lambda_2 \int_{\nbbR^2}\frac{1}{1+\frac{\|\nbx\|^{\alpha}}{sP}}\mathrm{d}\nbx \right)
\stackrel{\text{(c)}}{=}\exp\left[-\pi\lambda_2 \frac{(sP)^{2/\alpha}}{\mathrm{sinc}(2/\alpha)}\right]
\end{align}
where (a) follows from the PGFL of a PPP~\cite{chiu2013stochastic}, (b) from  $h_\nbx \sim \exp(1)$, and (c) using standard machinery, where the integral is first converted form Cartesian to polar coordinates and the closed form expression follows from the properties of the Gamma function~\cite[Appendix B]{dhillon2012modeling}.

\subsection{Proof of Lemma~\ref{Lemma1_Proof}}
\label{app: B}
The Laplace transform of interference conditioned on the distance of the hole center to the origin, $\|\nby\|$, is
\begin{align}\nonumber
&\ncalL_{{I | \|\nby\|}}(s)=
\E\left[\exp \left(-s \sum_{\nbx \in \Phi_\mathrm{2} \cap \nbb^c(\nby,D))}P h_\nbx \|\nbx\|^{-\alpha}\right)\right]\\ \nonumber&=
\E_{\Phi_\mathrm{2}} \left[\prod_{\nbx\in \Phi_\mathrm{2}\cap \nbb^c(\nby,D))}\!\! \E_{h_\nbx} \left[\exp \left(-s P h_\nbx \|\nbx\|^{-\alpha}\right)\right]\right]
\\\nonumber &=\E_{\Phi_\mathrm{2}}\! \left[\prod_{\nbx \in \Phi_\mathrm{2}\cap \nbb^c(\nby,D))}\frac{1}{1+s P \|\nbx\|^{-\alpha}}\right]\\ \nonumber
&\stackrel{\text{(a)}}=\exp\left(\!-\lambda_2\!\int_{\R^2 \setminus \ncalC}\! \frac{1}{1+\frac{\|\nbx\|^{\alpha}}{sP}}\nrmd \nbx\right)
\!\!\\\nonumber &=\exp \left(\!\!-\lambda_2\!\left( \int_{\R^2}\frac{1}{1+\frac{\|\nbx\|^{\alpha}}{sP}}\nrmd \nbx-\!\! \int_{\mathbf{b}(\nby,D)}\!\frac{1}{1+\frac{\|\nbx\|^{\alpha}}{sP}}\nrmd \nbx\right)\right)\\ \nonumber
&\stackrel{\text{(b)}}{=}\exp\left(-\pi\lambda_2 \frac{{( sP)}^{2/\alpha}}{\sinc(2/\alpha)}\right)\\\nonumber &\times\exp\left(2\lambda_2\int_{\|\nby\|-D}^{\|\nby\|+D}\frac{ \arccos\left(\frac{r^2+
\|\nby\|^2-D^2}{2\|\nby\|r}\right)}{1+\frac{r^{\alpha}}{sP}} r \nrmd r\right)
\end{align}
where (a) follows from the expression for the PGFL of a PPP and (b) is derived by the standard machinery, where the integral is first converted form Cartesian to polar coordinates and the closed form expression is then derived by using the properties of the Gamma function (in the same way as step (c) in the proof of Lemma~\ref{thm :lowbound} in Appendix \ref{app: A})~\cite[Appendix B]{dhillon2012modeling}.
The second term follows from the cosine-law: $r^2+\|\nby\|^2- 2 r\|\nby\| \cos\theta(r)=D^2$ (\figref{proof04}). 
By substituting $\lambda(r)= \frac{\lambda_2}{\pi}{\arccos\left(\frac{r^2+\|\nby\|^2-D^2}{2\|\nby\|r}\right)}$, the final expression in equation~(\ref{LtI asym_03}), is derived. 

\subsection{Proof of Theorem~\ref{thm :nearhole}}
\label{app: C}
The lower bound on the Laplace transform of interference is
\begin{align}
\label{LtI asym_0001}
&\ncalL_{{I}}(s)\geq\E\left[\exp \left(-s \sum_{\nbx \in \Phi_\mathrm{2}\cap \nbb^c(\nby_1,D)}P h_\nbx \|\nbx\|^{-\alpha}\right)\right]\\\nonumber &=\int_D^{\infty}\ncalL_{{I|\|\nby_1\|}}(s; \lambda, D)f_{V_1}(v_1)\nrmd v_1\\ \nonumber
&\stackrel{\text{(a)}}{=}\exp\left(-\pi\lambda_2 \frac{{( sP)}^{2/\alpha}}{\sinc(2/\alpha)}\right)\times  \\\nonumber
&\int_D^{\infty}\exp\left(\int_{v_1-D}^{v_1+D}2\lambda_2\frac{ \arccos\left(\frac{r^2+
v_1^2-D^2}{2v_1r}\right)}{1+\frac{r^{\alpha}}{sP}} r \nrmd r\right)\\\nonumber &\times 2\pi \lambda_1 v_1\exp(-\pi\lambda_1 (v_1^2-D^2))\nrmd v_1
\\ \nonumber
&=\exp\left(-\pi\lambda_2 \frac{{( sP)}^{2/\alpha}}{\sinc(2/\alpha)}\right)\\\nonumber &\times  \int_D^{\infty}\exp\left(g(v_1)\right)2\pi \lambda_1 v_1\exp(-\pi\lambda_1 (v_1^2-D^2))\nrmd v_1
\end{align}
where ${\mathbf{b}(\nby_1,D)}$ denotes the exclusion zone centered at $\nby_1$ with radius $D$, 
and (a) follows by substituting the conditional Laplace transform expression from Lemma \ref{Lemma1_Proof}, and the PDF of $V_1$ from \eqref{Eq: FRcrofton}. 
Further, $g(v_1)=\int_{{v_1-D}}^{{v_1+D}}{\arccos\left(\frac{r^2+v_1^2-D^2}{2v_1r}\right)}\frac{2\lambda_2}{1+\frac{r^\alpha}{Ps}} r\mathrm{d}r$. 

\subsection{Proof of Theorem~\ref{IccThm3}}
\label{app: D}
By definition, the Laplace transform of the PHP is
\begin{align*}
&\ncalL_I(s)\stackrel{\text{(a)}}{=}\E \left[\exp\left(-s \sum_{\nbx\in \Phi_2 \cap \Xi_D^c} P h_\nbx \|\nbx\|^{-\alpha}\right)\right]\\
&\stackrel{\text{(b)}}{=} \E_{\Phi_1} \left[\exp\left(-\lambda_2 \left(\int_{\R^2 }\frac{\mathrm{d}\nbx}{1+{\frac{\|\nbx\|^{\alpha}}{sP}}}- \int_{\Xi_D}\frac{\mathrm{d}\nbx}{1+{\frac{\|\nbx\|^{\alpha}}{sP}}}\right)\right)\right]
\end{align*}
where $\Xi_D$ in (a) is $\triangleq  \bigcup_{\nby \in \Phi_1} \mathbf{b}(\nby,D)$ as defined in \eqref{eq:XiD}, (b) follows by taking expectations over channel gains $h_\nbx \sim \exp(1)$ and the PPP $\Phi_2$ given $\Xi_D$, where we use the PGFL of a PPP to take expectation over $\Phi_2$. Note the integral over $\Xi_D$ is not easy to compute due to the possible overlaps in the holes. Therefore, to derive the bound, we use
\begin{align*}
\int_{\Xi_D}\frac{\mathrm{d}\nbx}{1+{\frac{\|\nbx\|^{\alpha}}{sP}}} \leq \sum_{\nby \in \Phi_1} \int_{\nbb(\nby, D)} \frac{\mathrm{d}\nbx}{1+{\frac{\|\nbx\|^{\alpha}}{sP}}},
\end{align*}
which follows by ignoring the effect of overlaps. Substituting this back in the expression of $\ncalL_I(s)$; solving the first integral as done in Lemma~\ref{Lemma1_Proof}; and using the result of Lemma~\ref{Lemma1_Proof} to handle the integral over $\nbb(\nby, D)$, we get 
\begin{align*}
&\ncalL_I(s) \leq \exp\left(-\pi\lambda_2 \frac{{(sP)}^{2/\alpha}}{\mathrm{sinc}(2/\alpha)}\right)\\\nonumber &\times  \E_{\Phi_1}\!\! \left[\prod_{\nby\in\Phi_1} \!\!\exp\! \left(\!\!2\lambda_2\int_{\|\nby\|-D}^{\|\nby\|+D}\frac{ \arccos\left(\frac{r^2+
\|\nby\|^2-D^2}{2\|\nby\|r}\right)}{1+\frac{r^{\alpha}}{sP}} r \nrmd r\right)
\right]\\
&\stackrel{\text{(a)}}{=} \exp\left(-\pi\lambda_2 \frac{{(sP)}^{2/\alpha}}{\mathrm{sinc}(2/\alpha)}\right)\\\nonumber &\times
\exp\left[-2\pi\lambda_1 \int_{D}^{\infty}\left(1-\exp\left(f(v) \right)\right)v\mathrm{d}v\right], \nonumber
\end{align*}
where the second term in (a) follows from the PGFL of a PPP, and then by substituting $\|\nby\|=v$ and $f(v)=\int_{{v-D}}^{{v+D}}{\arccos\left(\frac{r^2+v^2-D^2}{2vr}\right)}\frac{2\lambda_2}{1+\frac{r^\alpha}{Ps}} r\mathrm{d}r$. Since by definition of the typical point in this case, there are no points of $\Phi_1$ in $\nbb(0,D)$, the lower bound of integral in the above expression is $D$.

\subsection{Proof of Proposition~\ref{RatioBnd}}
\label{app: E}
Denote the interference powers used for deriving the lower and upper bounds on the Laplace transform of interference in Theorems \ref{thm :nearhole} and \ref{IccThm3} by  $I_\nrml = \sum_{\nbx \in \Phi_2 \cap \nbb^c(\nby_1,D)} P h_{\nbx} \|\nbx\|^{-\alpha}$, and $I_\nrmu = \sum_{\nbx \in \Phi_2} P h_{\nbx} \|\nbx\|^{-\alpha}-\sum_{\nby \in \Phi_1}\sum_{\nbx \in \Phi_2\cap \nbb(\nby,D)} P h_{\nbx} \|\nbx\|^{-\alpha}$, respectively. 
Here, $\nby_1$ denotes the location of the closest point of $\Phi_1$ to the origin. Using this notation, the ratio of the upper and lower bounds is
\begin{align*}
\frac{{\ncalL_\nrmu (s)}}{{\ncalL_\nrml (s)}}&=\frac{\nbbE\left[e^{-sI_\nrmu}\right]}{\nbbE\left[e^{-sI_\nrml}\right]}\\
&\stackrel{\text{(a)}}{\leq} \nbbE\left[e^{-sI_\nrmu}\right] \nbbE \!\left[ \frac{1}{e^{-sI_\nrml} } \right]
\stackrel{\text{(b)}}{\approx} \nbbE \left[e^{-s(I_\nrmu-I_\nrml)} \right],
\end{align*}
where (a) follows from the Jensen's inequality, and (b) is an approximation because $I_\nrmu$ and $I_\nrml$ are not truly independent. We will numerically show that the resulting expression provides a tight approximation. 
As it is clear from the proof of Theorem~\ref{IccThm3}, $I_\nrmu$ is the effective interference from $\Phi_2$ when holes corresponding to $\Phi_1$ are carved out individually without worrying about the overlaps. In other words, some points of $\Phi_2$ may be {\em virtually} removed multiple times, thus leading to an upper bound on the Laplace transform. This means that $I_\nrmu-I_\nrml$ term in the above expression can be interpreted as the effective {\em interference power} removed by all the holes except the closest hole from the homogeneous PPP $\Phi_2$, where again the overlap among the holes is ignored.  On the same lines as the proof of Theorem~\ref{IccThm3}, the term $\nbbE \left[e^{-s(I_\nrmu-I_\nrml)} \right]$ can be evaluated as
\begin{align*}
&\nbbE_{\Phi_1|V_1}\exp\!\left(\!{2\lambda_2}\!\left(\sum_{\nby\in \Phi_1/\nby_1}\!\!\!\int_{{\|\nby\|-D}}^{{\|\nby\|+D}}\!\frac{\arccos(\frac{r^2+\|\nby\|^2-D^2}{2\|\nby\|r})}{1+r^\alpha/s} r\mathrm{d}r\!\right)\!\!\right)\\
&\stackrel{\text{(a)}}{=}
\exp\left[-2\pi\lambda_1 \int_{v_1}^{\infty}\left(1-\exp\left(f(v) \right)\right)v\mathrm{d}v\right],
\end{align*}
where (a) is obtained from PGFL of a PPP, and $f(v)=\int_{{v-D}}^{{v+D}}{\arccos\left(\frac{r^2+v^2-D^2}{2vr}\right)}\frac{2\lambda_2}{1+\frac{r^\alpha}{Ps}} r\mathrm{d}r$. Note that $V_1=\|\nby_1\|$ is the distance of the closest point of $\Phi_1$ from the origin. Deconditioning over the distance $V_1$ using the distribution given by \eqref{Eq: FRcrofton} completes the proof.

\subsection{Proof of Theorem~\ref{Lemma2_nProof}}
\label{app: F}
In order to derive the lower bound on the Laplace transform of interference, the interference is overestimated by considering only two holes that are closest to the typical node. The setup is illustrated in Fig.~\ref{pdf_w}. The idea is to first derive the Laplace transform conditioned on $V_1=\|\nby_1\|$, $V_2=\|\nby_2\|$, and $\phi$, which is the angle between two holes. Deconditioning on these random variables will yield the final result. The details are as follows:  
\begin{align}
\label{condLapl9} 
&\ncalL_I(s)\stackrel{\text{(a)}}{\geq}\E \left[\exp\left(-s\!\!\!\! \sum_{\substack{\nbx\in {\Phi_\mathrm{2}\cap \Xi_C^c}}}\!\! P h_\nbx \|\nbx\|^{-\alpha}\right)\right] \\ \nonumber
&=\E \left[\E_{\Phi_2} \left[\prod_{\substack{\nbx\in {\Phi_\mathrm{2}\cap \Xi_C^c}}} \E_{h_\nbx}\left[\exp\left(-s  P h_\nbx \|\nbx\|^{-\alpha}\right)\right]\right]\right]\\ \nonumber
&\stackrel{\text{(b)}}{=}\E \left[\E_{\Phi_2} \left[\prod_{\nbx\in {\Phi_\mathrm{2}\cap \Xi_C^c}} \frac{{\mathrm{d}\nbx}}{1+s  P \|\nbx\|^{-\alpha}}\right]\right]\\ \nonumber
&\stackrel{\text{(c)}}{=} \E \left[\exp\left(-\lambda_2 \int_{\R^2\setminus \Xi_C }\frac{\mathrm{d}\nbx}{1+{\frac{\|\nbx\|^{\alpha}}{sP}}}\right)\right]\\ \nonumber
&= \E \left[\exp\left(-\lambda_2 \left(\int_{\R^2 }\!\!\frac{\mathrm{d}\nbx}{1+{\frac{\|\nbx\|^{\alpha}}{sP}}}- \int_{\Xi_C}\frac{\mathrm{d}\nbx}{1+{\frac{\|\nbx\|^{\alpha}}{sP}}}\right)\right)\right]\\ \nonumber
&= \E \Bigg[\exp\left(-\pi\lambda_2 \frac{{(sP)}^{2/\alpha}}{\mathrm{sinc}(2/\alpha)}\right) 
\underbrace{\exp\left(\lambda_2 \int_{\ncalC_1}\frac{\mathrm{d}\nbx}{1+{\frac{\|\nbx\|^{\alpha}}{sP}}}\right)}_{\rm closest\ hole} 
\\ \nonumber
&\underbrace{\exp\left(\lambda_2 \int_{\ncalC_2}\frac{\mathrm{d}\nbx}{1+{\frac{\|\nbx\|^{\alpha}}{sP}}}\right)}_{\rm second\ closest\ hole} 
\underbrace{\exp\left(-\lambda_2 \int_{\ncalC_1\bigcap\ncalC_2}\frac{\mathrm{d}\nbx}{1+{\frac{\|\nbx\|^{\alpha}}{sP}}}\right)}_{\rm intersection\ of\ the\ two\ holes}
\Bigg]\\\nonumber 
&\stackrel{\text{(d)}}{=}\exp\left(-\pi\lambda_2 \frac{{(sP)}^{2/\alpha}}{\mathrm{sinc}(2/\alpha)}\right) \times\\ \nonumber
&\bigg(\frac{1}{2\pi } \int_D^{\infty}\!\!\int_{v_1}^{\infty}\!\!\!\!\int_{-\pi}^{\pi}\!\!\exp\left(\int_{v_1-D}^{v_1+D}\!\!\frac{2\pi \lambda_{c1}(r)}{1+\frac{r^{\alpha}}{sP}} r \nrmd r\right)
 \\\nonumber & \exp\left(\int_{v_2-D}^{v_2+D}\!\!\frac{2\pi \lambda_{c2}(r)}{1+\frac{r^{\alpha}}{sP}} r \nrmd r\right)
\exp\left(-\lambda_2\underbrace{\int_{\ncalC_1\bigcap\ncalC_2}\frac{\mathrm{d}\nbx}{1+{\frac{\|\nbx\|^{\alpha}}{sP}}}}_{\text{$\ncalB(v_1,v_2, \phi)$}}\right)
\\ \nonumber
&\times f_{V_1V_2}({v_1,v_2})\nrmd \phi \nrmd v_2 \nrmd v_1
\bigg),
\end{align}
where $\Xi_C$ in step (a) is $\Xi_C=\ncalC_1 \cup \ncalC_2$ with $\ncalC_1=\nbb(\nby_1,D)$ and $\ncalC_2=\nbb(\nby_2,D)$, and $\lambda_{c1}(r)$ and $\lambda_{c2}(r)$ in the last step are $\lambda_{c1}(r)= \frac{\lambda_2}{\pi}{\arccos\left(\frac{r^2+{v_1}^2-D^2}{2{v_1}r}\right)}$ and $\lambda_{c2}(r)= \frac{\lambda_2}{\pi}{\arccos\left(\frac{r^2+{v_2}^2-D^2}{2{v_2}r}\right)}$. Step (b) follows from  $h_\nbx \sim \exp(1)$, (c) from the PGFL of PPP, and (d) by deconditioning on the distributions of $V_1$, $V_2$, and $\phi$. Note that while the joint distribution of $V_1$ and $V_2$ is given by \eqref{pdf_joint2}, $\phi$ is independent of all other random variables and is uniformly distributed between $-\pi$ and $\pi$. In step (d), the terms corresponding to the closest and second closest holes are derived on the same lines as Lemma~\ref{Lemma1_Proof} (conditioned on $V_1$ and $V_2$). The rest of the proof will focus on evaluating the integral $\ncalB(v_1,v_2, \phi)$ that appears in the term corresponding to the intersection of the two holes. Our first goal is to find the coordinates of the points at which the two circles $\ncalC_1$ and $\ncalC_2$ intersect. 
Without loss of generality, we assume that the centers of the two circles are locates at $\nby_1=(\yuu,0)$ and $\nby_2=(\yuus,\yuut)$ while they are separated by distance $w=\sqrt{v_1^2+v_2^2-2v_1v_2\cos{\phi}}$. Using the equations of the two circles, $(\u-\yuu)^2+{\t}^2=D^2$, and $(\u-\yuus)^2+({\t-\yuut})^2={D^2}$, we obtain 
\begin{align*}
&{-2\u \yuu+\yuu^2=-2\u \yuus + \yuus^2 -2\t \yuut+\yuut^2},
\end{align*}
using which we find $\t$ in terms of other variables. 
Then, substituting $\t$ in one of the equations of the circle, we get a quadratic equation for $\u$. Solving these equations, the coordinates of the intersection points can be found to be~\cite{roe1993elementary}
\begin{align}\nonumber
(\hat{u}_{1},\hat{t}_{1})=&\Bigg(\frac{1}{2}\left(\yuu+\yuus\right)+ \frac{1}{2}\sqrt{\frac{4D^2}{w^2}-1}\:\yuut\\ \nonumber
&,\frac{\yuut}{2}+\frac{1}{2}\sqrt{\frac{4D^2}{w^2}-1}\:\left(\yuu-\yuus\right)\Bigg)
\end{align}
\begin{align}\nonumber
(\hat{u}_{2},\hat{t}_{2})=&\Bigg(\frac{1}{2}\left(\yuu+\yuus\right)- \frac{1}{2}\sqrt{\frac{4D^2}{w^2}-1}\:\yuut\\ \nonumber
&,\frac{\yuut}{2}-\frac{1}{2}\sqrt{\frac{4D^2}{w^2}-1}\:\left(\yuu-\yuus\right)\Bigg).
\end{align}
Substituting $(\yuus,\yuut)=(v_2\cos\phi, v_2\sin\phi)$, $(\yuu,\yut)=(v_1,0)$ in these expressions, we get
\begin{align}\nonumber
(\hat{u}_{1},\hat{t}_{1})=&\frac{1}{2}\Bigg(v_1+v_2\cos\phi + \sqrt{\frac{4D^2}{w^2}-1}\: v_2\sin\phi\\ \nonumber
&,v_2\sin\phi + \sqrt{\frac{4D^2}{w^2}-1}\:\left(v_1-v_2\cos\phi\right)  \Bigg)\\ \nonumber
(\hat{u}_{2},\hat{t}_{2})=&\frac{1}{2}\Bigg(v_1+v_2\cos\phi - \sqrt{\frac{4D^2}{w^2}-1}\: v_2\sin\phi\\ \nonumber
&,v_2\sin\phi - \sqrt{\frac{4D^2}{w^2}-1}\:\left(v_1-v_2\cos\phi\right)  \Bigg).
\end{align}
Note that the overlap between the circles will happen only when the distance between their centers is smaller than $2D$, i.e., $w \leq 2D$. For a given $v_1$, and $v_2 \geq v_1$, it can be easily deduced that the overlap occurs only when $\phi \leq \hat{\phi}$, where $\hat{\phi} = \arccos \left( \frac{v_1^2 + v_2^2 - 4D^2}{2v_1 v_2} \right)$. The integral in the term corresponding to the intersection of the two circles, $\ncalB(v_1,v_2, \phi)$,  can now we derived as 
\begin{align}\nonumber
&\ncalB(v_1,v_2, \phi)=\int_{\ncalC_1\bigcap\ncalC_2}\frac{\mathrm{d}\nbx}{1+{\frac{\|\nbx\|^{\alpha}}{sP}}}\\ \label{ovl_equav1v2ph}
&=\left\{\begin{matrix}
\int_{\hat{u}_{2}}^{\hat{u}_{1}} \int_{\yuut - \sqrt{D^2-(\u-\yuus)^2}}^{\sqrt{D^2-(\u-\yuu)^2}}\frac{\mathrm{d}\t\mathrm{d}\u}{1+{\frac{{(\u^2+\t^2)}^{\frac{\alpha}{2}}}{sP}}} & 0\leq \phi< \hat{\phi} \\ 
\int_{\hat{u}_{1}}^{\hat{u}_{2}} \int_{-\sqrt{D^2-(\u-\yuu)^2}}^{\yuut + \sqrt{D^2-(\u-\yuus)^2}}\frac{\mathrm{d}\t\mathrm{d}\u}{1+{\frac{{(\u^2+\t^2)}^{\frac{\alpha}{2}}}{sP}}} & -\hat{\phi}< \phi< 0\\ 0 & |\phi| \geq \hat{\phi}
\end{matrix}\right.,
\end{align}
where $\|\nbx\|=\sqrt{{\u}^2+{\t}^2}$, $\hat{\phi} = \arccos \left( \frac{v_1^2 + v_2^2 - 4D^2}{2v_1 v_2} \right)$. This completes the proof.

\subsection{Proof of Theorem~\ref{Lemma2_kClosestBnd}}
\label{app: H}
We consider $k$ closest holes to the typical point of a PHP. The setup is illustrated in Fig.~\ref{ovlp_3}. Denoting the locations of the holes by $\nby_1, ...,  \nby_i, ..., \nby_k$,  
the interference field in this case is modeled by $\Omega=\Phi_2 \cap \left\{{{\ncalC_1}\cup {\left\{ \cup_{i=2}^k\nbd(\nby_i,D)\right\}}}\right\}^c$, where  $\Omega \supset \Psi$ and $\nbd(\nby_i,D)$ is defined in \eqref{eq:sect2:dy-D}.
%
This approach overestimates the interference power and hence leads to a lower bound on the Laplace transform of interference. Let $I=\sum_{\substack{\nbx\in {\Omega}}}Ph_\nbx\|\nbx\|^{-\alpha}$, the Laplace transform of interference conditioned on $V_1=\|\nby_1\|, V_2=\|\nby_2\|, ..., V_k=\|\nby_k\|$ is
\begin{align}  \notag
&\ncalL_{{I|\|\nby_1\|,...,\|\nby_k\|}}(s)=
\E\left[\exp \left(-s \sum_{\nbx \in \Omega}P h_\nbx \|\nbx\|^{-\alpha}\right)\right]
\\ \nonumber
&\stackrel{\text{(a)}}{=}\E_{\Omega}\! \left[\prod_{\nbx \in \Omega}\frac{1}{1+s P \|\nbx\|^{-\alpha}}\right]\\ \notag
&\!\!  \stackrel{\text{(b)}}{=}\exp \Bigg(\!\!-\lambda_2\!\Bigg( \int_{\R^2}\frac{1}{1+\frac{\|\nbx\|^{\alpha}}{sP}}\nrmd \nbx-\!\! \int_{\mathbf{b}(\nby_1,D)}\!\frac{1}{1+\frac{\|\nbx\|^{\alpha}}{sP}}\nrmd \nbx \\ \nonumber
&-\sum_{i=2}^k  \int_{\nbd(\nby_i,D)} \frac{1}{1+\frac{\|\nbx\|^{\alpha}}{sP}}\nrmd \nbx  \Bigg)\Bigg)\\ \notag
&\stackrel{\text{(c)}}{=}\exp\left(-\pi\lambda_2 \frac{{( sP)}^{2/\alpha}}{\sinc(2/\alpha)}\right)\\ \nonumber
&\times\exp\left(2\lambda_2\int_{\|\nby_1\|-D}^{\|\nby_1\|+D}\frac{ \arccos\left(\frac{r^2+
\|\nby_1\|^2-D^2}{2\|\nby_1\|r}\right)}{1+\frac{r^{\alpha}}{sP}} r \nrmd r\right)\times\\ \nonumber
& \exp\!\left(\!2 \lambda_2\!\!  \int_{\max({\|\nby_i\|-D,\|\nby_{i-1}\|+D})}^{\|\nby_i\|+D} \frac{ \arccos\left(\frac{r^2+
\|\nby_i\|^2-D^2}{2\|\nby_i\|r}\right)}{1+\frac{r^{\alpha}}{sP}} r \nrmd r \!\right)
\end{align}
where (a) follows from $h_{\nbx}\sim \exp(1)$, (b) from  PGFL of PPP along with the fact that the points of the baseline PPP $\Phi_2$ that are located in the closest hole $\nbb(\nby_1,D)$ and the sets $\cup_{i=2}^k \nbd(\nby_i, D)$ 
should be removed, and (c) from the cosine-law by using the same argument applied in the proof of Lemma \ref{Lemma1_Proof}. Finally, deconditioning the resulting expression with respect to the distances of the centers of the $k$ holes from the typical point, $V_1, V_2, ..., V_k$, with joint distribution given by 
\begin{align*}
&f_{V_1V_2..V_k}({v_1,v_2,..,v_k})=(2\pi\lambda_1)^k v_1v_2...v_k\times\\\nonumber
&  \exp(-\pi\lambda_1 (v_k^2-D^2)), \quad {D<v_1<v_2<...<v_k},
\end{align*}
completes the proof. Here, the joint PDF $f_{V_1V_2..V_k}(.)$ is derived by using the same argument as in equation~(\ref{pdf_joint2})~\cite{haenggi2005distances}.

\subsection{Proof of Proposition~\ref{IccThm4}}
\label{app: G} 
 \begin{figure}[t!]
  \centering{
              \includegraphics[width=.65\linewidth]{./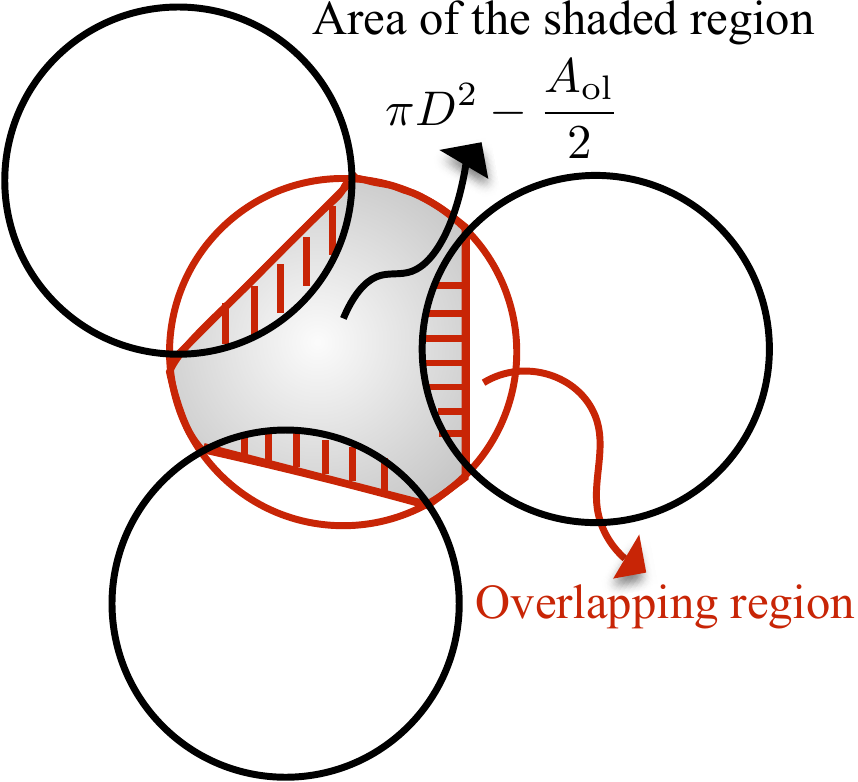}
              \caption{Illustration of the overlap effect for a typical hole in a PHP. This is used in Appendix~\ref{app: G}.}
\label{ovlp_appr}}
          \end{figure}

Consider the illustration of overlapping circles shown in Fig.~\ref{ovlp_appr}. While defining a PHP, points of the baseline PPP $\Phi_2$ lying inside these circles are removed. In other words, the ``union'' of such overlapping circles is removed. However, in our upper bound given by Theorem~\ref{IccThm3}, we removed each such circle separately of the others leading to the removal of the overlapping parts multiple times. Therefore, while the average number of points removed per unit area of these circles was $\lambda_2$ in a PHP, the average number of points removed per unit area of such circles in Theorem~\ref{IccThm3},  denoted by $\lambda_{\rm eff}$, is $\lambda_{\rm eff} > \lambda_2$. To maintain tractability later in the proof, we confine to the pairwise overlaps, meaning the total overlap area for a circle, denoted by $A_\mathrm{ol}$, is the sum of pairwise overlap areas of this circle with the other circles. Now to estimate $\lambda_{\rm eff}$, consider the shaded region in Fig.~\ref{ovlp_appr} with area $\pi D^2- \frac{A_\mathrm{ol}}{2}$, which represents the {\em effective} contribution of each hole in a PHP. Due to overlaps, the total average number of points removed under Theorem~\ref{IccThm3} from this region is $\lambda_2 \pi D^2$. Using this fact, $\lambda_{\rm eff}$ can be estimated as 
\begin{align}\nonumber
\lambda_\mathrm{eff}\times \left({{\pi}D^2}-\frac{A_\mathrm{ol}}{2}\right)=\lambda_2{{\pi}D^2}\Rightarrow \lambda_\mathrm{eff}=\frac{\lambda_2}{\left(1-\frac{A_\mathrm{ol}}{2{{\pi}D^2}}\right)}.
\end{align}
The above expression shows that $\frac{1}{(1-{\frac{A_\mathrm{ol}}{2{{\pi}D^2}}})}$ times more points are removed in the upper bound given by Theorem~\ref{IccThm3}. To compensate for this effect, we first proceed as in Theorem~\ref{IccThm3} and then rescale the second term (that captures the effect of removing holes) as follows:
\begin{align}\nonumber
&\ncalL_I(s)\stackrel{\text{(a)}}{\simeq} \E_{\Phi_1} \Bigg[\exp\Bigg(-\lambda_2 \bigg(\int_{\R^2 }\frac{\mathrm{d}\nbx}{1+{\frac{\|\nbx\|^{\alpha}}{sP}}} \\ \nonumber
&-\left({1-\frac{A_\mathrm{ol}}{2{{\pi}D^2}}}\right)\sum_{\nby \in \Phi_1} \int_{\nbb(\nby, D)} \frac{\mathrm{d}\nbx}{1+{\frac{\|\nbx\|^{\alpha}}{sP}}}\bigg)\Bigg)\Bigg]\\ \nonumber
&=
\exp\left(\!-\pi\lambda_2 \frac{{(sP)}^{2/\alpha}}{\mathrm{sinc}(2/\alpha)}\right)\!\!\times \E_{\Phi_1}\! \Bigg[\!\prod_{\nby\in\Phi_1} \!\!\exp\! \Bigg(\!\!{2\lambda_2\left({1-\frac{A_\mathrm{ol}}{2{{\pi}D^2}}}\right)}\\ \nonumber
&\times\int_{\|\nby\|-D}^{\|\nby\|+D}\frac{ \arccos\left(\frac{r^2+
\|\nby\|^2-D^2}{2\|\nby\|r}\right)}{1+\frac{r^{\alpha}}{sP}} r \nrmd r\Bigg)
\Bigg]\\ \nonumber
&\stackrel{\text{(b)}}{=} \exp\left(-\pi\lambda_2 \frac{{(sP)}^{2/\alpha}}{\mathrm{sinc}(2/\alpha)}\right)\times \\ \nonumber
&
\exp\left[-2\pi\lambda_1 \int_{D}^{\infty}\left(1-\exp\left({f(v)}\left({1-\frac{A_\mathrm{ol}}{2{{\pi}D^2}}}\right) \right)\right)v\mathrm{d}v\right]
\end{align} 
where (a) follows from $h_\nbx \sim \exp(1)$ and the PGFL of a PPP. The second term is rescaled as discussed above. Further, the second term in (b) follows from the PGFL of a PPP, and then by substituting $\|\nby\|=v$ and $f(v)=\int_{{v-D}}^{{v+D}}{\arccos\left(\frac{r^2+v^2-D^2}{2vr}\right)}\frac{2\lambda_2}{1+\frac{r^\alpha}{Ps}} r\mathrm{d}r$. 

Now to determine the total average {\em pairwise} overlapping area $A_{\rm ol}$, consider a circle of interest $\mathbf{b}(\nby,D)$. Note that only the circles with centers located inside the region  $\mathbf{b}(\nby,2D)$ will overlap with this circle. Denote the number of circles in this region (besides the circle of interest) by $K$, where $K$ has Poisson distribution with mean $\lambda_1 4{\pi}D^2$. For one of these circles, say $\mathbf{b}(\nby',D)$, the area of overlapping region with the circle of interest is
\begin{align}\nonumber
A(\hat{z})=2D^2\arccos(\frac{\hat{z}}{2D})-{\hat{z}D}\sqrt{1-(\frac{\hat{z}}{2D})^2},
\end{align}
where $\hat{z}=\|\nby-\nby'\|$. Now conditioned on $K=k$, the $k$ circles are independent and uniformly distributed over $\mathbf{b}(\nby,2D)$. Hence, the PDF of distance $\hat{z}$ is $f_{\hat{Z}}(\hat{z})=\frac{2\hat{z}}{4D^2}$.
Then the average area of overlapping region between the circle of interest and another circle is 
\begin{align}\nonumber
&\bar{A}=\nbbE_{\hat{Z}}\left[A(\hat{z})\right]\\ \nonumber
&= \left(\int_{0}^{2D}
\!\left(2D^2\arccos(\frac{\hat{z}}{2D})-{\hat{z}D}\sqrt{1-(\frac{\hat{z}}{2D})^2}\right)f_{\hat{Z}}(\hat{z})\mathrm{d}\hat{z}\right)\\ \nonumber
&= \frac{\pi D^2}{4}.
\end{align}
Now the pairwise overlap area for the circle of interest with $k$ other circles is $k\bar{A}$. Since there are on an average $\lambda_1 4 \pi D^2$ circles that overlap with the circle of interest, the average pairwise overlap area is $\lambda_1 4 \pi D^2 \bar{A} = \lambda_1{\pi}^2D^4$. Since the maximum overlap is bounded above by $\pi D^2$, we get $A_{\mathrm{ol}}=\min({\lambda_1{\pi}^2D^4},{\pi}D^2)$. This completes the proof.
\bibliographystyle{IEEEtran}
\bibliography{Draft-00JnPHP_24-arXiv_double.bbl} 
               
\end{document}